\documentclass[%
 reprint,
 amsmath,amssymb,
 aps, physrev,
]{revtex4-2}

\usepackage{graphicx}
\usepackage{dcolumn}
\usepackage{bm}

\usepackage[most]{tcolorbox}
\usepackage{enumitem}
\usepackage{mathtools}

\usepackage{amsthm}
 
\newtheorem{theorem}{Theorem}
\newtheorem{definition}[theorem]{Definition}

\usepackage{graphicx}
\usepackage{subcaption}

\newcommand{\protocol}[1]{%
  \noindent\rule{\linewidth}{0.5pt}
  
  \noindent\textbf{Protocol:}~#1
  
  \noindent\rule{\linewidth}{0.5pt}
}

\newcounter{experiment}

\newcommand{\experiment}[1]{
  \refstepcounter{experiment}
  \noindent\rule{\linewidth}{0.5pt}

  \noindent\textbf{Experiment~\theexperiment}~#1

  \noindent\rule{\linewidth}{0.5pt}
}

\begin{document}

\preprint{APS/123-QED}

\title{\textbf{Traceable Quantum Secret Sharing} 
}%

\author{Mengyan Qiao} 
\affiliation{School of Information Science and Engineering, East China University of Science and Technology, Shanghai 200237, China}

\author{Jian Long}
\email{longjian@ecust.edu.cn}
\affiliation{School of Information Science and Engineering, East China University of Science and Technology, Shanghai 200237, China}

\author{Yu Li}
\email{liyu@ecust.edu.cn}
\affiliation{School of Information Science and Engineering, East China University of Science and Technology, Shanghai 200237, China}

\author{Zhu Cao}
\email{caozhu@tongji.edu.cn}

\affiliation{College of Electronics and Information Engineering, Tongji University, Shanghai 201804, China}
\affiliation{Shanghai Research Institute for Intelligent Autonomous Systems, Tongji University, Shanghai 201210, China}
 
\date{\today}

\begin{abstract}
Quantum secret sharing protects distributed quantum information against unauthorized reconstruction, yet existing schemes provide no intrinsic accountability when shares are leaked or forged. This missing capability is particularly relevant in realistic distributed settings, where compromised participants may either expose valid shares or maliciously interfere with reconstruction. Here we establish a framework for traceable quantum secret sharing under these two threat models. We introduce leakage-verifiable and traceable quantum secret sharing (LVT-QSS), which attributes leaked shares through an explicit leakage verification mechanism, and malicious-traceable and accountable quantum secret sharing (MTA-QSS), which identifies participants who submit forged shares during reconstruction. We formulate security definitions for both settings and prove that the proposed protocols satisfy the corresponding guarantees against the relevant adversaries. Numerical simulations under per-share depolarizing noise further quantify the robustness of the proposed schemes and identify verification as the dominant noise sensitive stage. These results incorporate accountability into quantum secret sharing and provide a foundation for traceable distributed quantum cryptography.

\end{abstract}

\maketitle


\section{Introduction}
\label{sec:intro}

Quantum secret sharing (QSS) is a central primitive for protecting distributed quantum information and has broad relevance to multiparty quantum cryptography~\cite{25_bozzio2025quantum}, quantum networks~\cite{26_zheng2026large}, and distributed quantum computation~\cite{27_crepeau2002secure}. In a threshold secret sharing scheme, a dealer distributes a secret among $n$ participants such that any authorized set of at least $k$ parties can reconstruct it, whereas any unauthorized set of fewer than $k$ parties learns nothing about the secret. Since the first QSS protocol based on Greenberger--Horne--Zeilinger states was introduced by Hillery \emph{et al.}~\cite{3_hillery1999quantum}, and threshold constructions based on quantum error-correcting codes (QECCs) were developed by Cleve \emph{et al.}~\cite{4_cleve1999share}, QSS has become a basic tool for secure distributed quantum information processing.

Existing QSS schemes are primarily designed to guarantee correctness and confidentiality: authorized sets can recover the secret, while unauthorized sets obtain no information about it. In realistic distributed settings, however, this is not always sufficient. Valid shares may be leaked, resold, or otherwise exposed by compromised participants, and malicious parties may interfere with reconstruction by submitting forged shares. In either case, conventional QSS provides no intrinsic mechanism to identify the source of compromise or to assign accountability. As a result, QSS remains a confidentiality-oriented primitive even in settings where attribution of leakage or misconduct is operationally essential.

This missing accountability layer is particularly significant in quantum settings. QSS is intended for scenarios in which quantum information is protected across multiple locations or parties, often in environments where trust is limited and failures may be adversarial rather than merely accidental. In such settings, one would like not only to preserve secrecy below threshold, but also to determine which participants are responsible when valid shares are leaked or when the reconstruction process is deliberately corrupted. This motivates the study of \emph{traceable} QSS, in which attribution and accountability become explicit security objectives.

In the classical setting, this question has recently been formalized through \emph{traceable secret sharing} (TSS). Goyal \emph{et al.}~\cite{5_goyal2021traceable} introduced the basic notion of TSS, aiming to enable provable attribution of leaked shares. Boneh \emph{et al.}~\cite{6_boneh2024traceable} subsequently refined the formal framework and identified stronger security requirements, and later work extended these definitions and constructions to more general access structures~\cite{7_goyal2025traceable}. At a high level, TSS requires that if an unauthorized coalition aggregates its shares into a \emph{reconstruction box} that reconstructs the secret when supplied with sufficiently many additional shares, then an efficient tracing procedure, given only black-box access to this box, can identify the parties responsible for constructing it. Beyond traceability, TSS also requires protections such as non-imputability, which prevents a malicious tracer from falsely implicating honest parties, and tracer secrecy, which requires that tracing itself not compromise secrecy; in particular, access to the tracing key should reveal no additional information about the secret.

Despite this progress, existing work on TSS has focused almost exclusively on classical secret sharing. Extending traceability to QSS is nontrivial. The shared objects may themselves be quantum states, reconstruction is governed by quantum operations, and accountability must be formulated in a way that is compatible with distributed quantum information processing. To the best of our knowledge, no existing QSS scheme in the literature provides a systematic notion of traceability together with publicly verifiable accountability. Consequently, tracing leaked quantum shares and identifying malicious participants during reconstruction remain open problems.

In this work, we establish a framework for traceable quantum secret sharing under two realistic threat models: one in which participants follow the protocol but leak or resell valid shares, and another in which malicious participants deviate during reconstruction by submitting forged shares. We make three main contributions: (i) we introduce leakage-verifiable and traceable quantum secret sharing, which attributes leaked shares through an explicit leakage-verification mechanism; (ii) we propose malicious-traceable and accountable quantum secret sharing, which identifies participants who submit forged shares during reconstruction and supports publicly verifiable accountability; and (iii) we formulate security definitions for both settings, prove that the proposed protocols achieve the corresponding security guarantees against the relevant adversaries, and perform numerical simulations under per-share depolarizing noise to evaluate their robustness. Our results extend QSS beyond confidentiality alone by incorporating accountability as a security objective, and provide a foundation for traceable quantum secret sharing and accountable multiparty quantum cryptographic primitives.

\subsection{Threat-model overview}
\label{sub:threatmodel}

We consider two operational threat models for traceable quantum secret sharing.

\paragraph*{Honest-but-leaky model.}
In the first setting, participants follow the protocol honestly, but an unauthorized coalition may leak or resell its valid quantum shares to a third party. These leaked shares can then be embedded into a \emph{reconstruction box} that outputs the secret once it receives sufficiently many additional shares from outside the coalition. The tracing goal is to identify, using only black-box access to this box, the participants whose shares were used to construct it. We focus on coalitions smaller than the reconstruction threshold, since larger coalitions can reconstruct and leak the secret on their own without leaving a meaningful tracing target.

\paragraph*{Malicious-but-accountable model.}
In the second setting, malicious participants actively deviate during reconstruction by submitting forged shares. Here the threat is not passive leakage but corruption of the reconstruction procedure itself. Because the submitted shares may be inconsistent with the prescribed protocol, we do not use the reconstruction-box abstraction in this case. Instead, tracing is performed directly from the auxiliary information generated during reconstruction. The objective is to identify the participants responsible for the forged shares while providing evidence that can be independently verified by other parties. As in the leakage setting, we restrict attention to adversarial sets smaller than the reconstruction threshold, since larger sets can reconstruct and leak the secret without requiring any additional honest shares.

\subsection{Related work}
\label{sub:relatedwork}

Quantum secret sharing has developed along two main directions: constructions based on quantum error-correcting codes (QECCs) and protocols based on multipartite entanglement. In the QECC-based approach, a secret quantum state is encoded into distributed shares and recovered by authorized sets through decoding. Cleve \emph{et al.}~\cite{4_cleve1999share} gave the first general threshold construction, and Gottesman~\cite{8_gottesman2000theory} later developed a broader theoretical framework for QSS. Subsequent work explored variants emphasizing coding efficiency, verifiability, and communication efficiency~\cite{9_ogawa2005quantum,10_lipinska2020verifiable,11_marin2012equivalence,12_senthoor2021universal,13_senthoor2024communication}. In parallel, entanglement-based schemes---beginning with the GHZ-state protocol of Hillery \emph{et al.}~\cite{3_hillery1999quantum}---use multipartite correlations rather than explicit QECC decoding, and have been extended to continuous-variable, measurement-device-independent, and networked settings~\cite{15_kogias2017unconditional,16_wang2019measurement,17_cai2024measurement,18_liu2025measurement,19_zhang2025efficient}.

These works have substantially advanced the confidentiality, robustness, and efficiency of QSS. However, they do not address a different question that becomes important in realistic distributed settings: how to attribute responsibility when valid shares are leaked or when malicious participants submit forged shares during reconstruction. In most existing schemes, leakage is treated either as a generic failure event or as part of an adversarial model dominated by external eavesdropping, rather than as a source of ex post accountability. As a result, current QSS frameworks do not provide systematic notions of traceability, non-imputability, or publicly verifiable attribution in the presence of leaked or forged shares.

Related ideas have recently been developed in the classical setting under the framework of traceable secret sharing~\cite{5_goyal2021traceable,6_boneh2024traceable,7_goyal2025traceable}. To the best of our knowledge, however, no existing QSS scheme provides a corresponding framework for traceability together with publicly verifiable accountability under the threat models considered here. This gap motivates the present work.

\section{Preliminaries}
\label{sec:preliminaries}

\subsection{Notation}
\label{sub:notation}

We consider a threshold setting with parameter $k$ and $m=2k-1$ parties, indexed by
$[m]=\{0,\ldots,m-1\}$. The secret is a quantum state $|\psi\rangle\in\mathcal{H}_\lambda$, where $\mathcal{H}_\lambda$ is a $d$-dimensional Hilbert space and $\lambda$ is the security parameter. We work over the finite field $\mathbb{F}=\mathbb{Z}_q$, where $q$ is a prime satisfying $q\ge \max\{m,d\}$.

We further assume a post-quantum symmetric-key encryption scheme
$\Pi_{\mathsf{SKE}}=(\mathrm{Enc},\mathrm{Dec})$ with message space $\mathbb{F}$ and ciphertext space $\mathcal{C}$, together with a quantum-secure collision-resistant and one-way hash function
$H:\mathbb{F}\to\mathbb{F}$, as defined in Appendix~\ref{appendix:quantumhash}.

\subsection{Quantum information}
\label{sub:quantuminformation}

In this section, we briefly recall several standard notions from quantum information; for a comprehensive treatment, see Ref.~\cite{20_wang_quantum_2012}.

A quantum channel $\mathcal{F}$ is a completely positive and trace-preserving (CPTP) map from register $x$ to register $y$. That is, for any input density operator $\rho_x$, the channel $\mathcal{F}$ outputs a state $\tau_y$ on register $y$, namely $\mathcal{F}(\rho_x)=\tau_y$. When the input is part of a larger joint state $\rho_{xz}$, the channel $\mathcal{F}$ acts nontrivially only on register $x$, producing an output on register $y$ while leaving the other registers, such as $z$, unchanged.

A projector $\Pi$ is a Hermitian operator satisfying $\Pi^2=\Pi$. A projective measurement is specified by a collection of projectors $\{ \Pi_j\}_j$ such that $\sum_j \Pi_j=\mathrm{Id}$. Given a quantum state $\rho$, the probability of obtaining outcome $j$ is $p_j = \mathrm{Tr}(\Pi_j \rho)$, and the corresponding post-measurement state is $\rho_j = \frac{\Pi_j \rho \Pi_j}{p_j}$. For registers $x$ and $y$, the partial trace $\mathrm{Tr}_y$ is the unique linear map satisfying $\mathrm{Tr}_y(A_x \otimes B_y) = \mathrm{Tr}(B_y)\,A_x$ for all operators $A_x$ and $B_y$. For any joint state $\rho_{xy}$, we write $\rho_x := \mathrm{Tr}_y(\rho_{xy})$.

The trace distance between two quantum states $\rho$ and $\tau$, denoted by $\mathrm{Td}(\rho,\tau)$, is defined as
\begin{equation}
    \mathrm{Td}(\rho, \tau) = \frac{1}{2} \lVert \rho - \tau \rVert_1 = \frac{1}{2} \mathrm{Tr} \left( \sqrt{(\rho - \tau)^\dagger (\rho - \tau)} \right),
\end{equation}
where $\lVert \rho-\tau\rVert_1$ is the trace norm, i.e., the sum of the singular values of $\rho-\tau$. The trace distance quantifies the operational distinguishability of two quantum states.

In particular, the Helstrom theorem states that if $\rho$ and $\tau$ are given with equal prior probabilities, then the optimal success probability for distinguishing them is $\frac{1}{2}\bigl(1+\mathrm{Td}(\rho,\tau)\bigr).$ When $\tau$ is pure, i.e., $\tau=\lvert\Phi\rangle\langle\Phi\rvert$, the Fuchs--van de Graaf inequalities imply
\begin{equation}
    1 - \langle \Phi \vert \rho \vert \Phi \rangle \leq \mathrm{Td}(\rho, \tau) \leq \sqrt{1 - \left\lvert \langle \Phi \vert \rho \vert \Phi \rangle \right\rvert^2}. 
\end{equation}

A nonuniform quantum polynomial-time (QPT) machine $\{ \mathcal{A}_{\lambda} \}_{\lambda \in \mathbb{N}}$ is a family of polynomial-size quantum circuits, where each circuit $\mathcal{A}_\lambda$ is provided with a polynomial-size quantum advice state $\lvert\alpha_\lambda\rangle$. Each $\mathcal{A}_\lambda$ induces a CPTP map. More generally, an interactive quantum machine can be modeled as a sequence of quantum channels acting on designated input, output, and internal registers.

Two probability distributions $X$ and $Y$ are computationally indistinguishable if there exists a negligible function $\mathrm{negl}(\lambda)$ such that, for every QPT algorithm $\mathcal{A}_\lambda$, it holds that
\begin{multline}
  \left| \Pr[1 \leftarrow \mathcal{A}_\lambda(x) : x \leftarrow X] - \Pr[1 \leftarrow \mathcal{A}_\lambda(y) : y \leftarrow Y] \right| \\
  \leq \mathrm{negl}(\lambda).
\end{multline}

If the same condition holds for all, possibly unbounded, quantum algorithms, then $X$ and $Y$ are said to be statistically indistinguishable.

\subsection{Technical assumptions}
\label{tech}

\begin{definition}[Sparse-hit property]
\label{def:sparsehit}
Let $\Pi_{\mathsf{SKE}}=(\mathrm{Enc},\mathrm{Dec})$ be a symmetric-key encryption scheme with ciphertext space $\mathcal{C}$. We say that $\Pi_{\mathsf{SKE}}$ satisfies the \emph{sparse-hit property} with parameter $\alpha_{\max}(\lambda)$ if, for every security parameter $\lambda$, every key $k\in\{0,1\}^{\lambda}$, and every polynomial-size subset $S\subseteq\mathcal{C}$,
\begin{equation}
    \Pr_{x}\!\bigl[\mathrm{Dec}(x,k)\in S\bigr]
    \le |S|\,\alpha_{\max}(\lambda),
\end{equation}
where the probability is taken over a random variable $x$ taking values in $\mathcal{C}$.
\end{definition}

\section{RESULTS}
\label{result}
\subsection{Honest-but-leaky model}
\label{sub:honest}

In our protocol, the secret quantum state $|\psi\rangle$ is encoded into $m$ quantum shares $\{|sh_i\rangle\}_{i\in[0,m-1]}$, each associated with a unique tag $\{tag_i\}_{i\in[0,m-1]}$. These tags, stored in the auxiliary registers, serve as identifiers for the shares and are critical for subsequent leakage tracing and evidence verification. A set of at least $k$ shares is sufficient to reconstruct the original secret $|\psi\rangle$.

The leakage tracing procedure proceeds as follows: Assume a reconstruction box $\mathcal{R}$ that contains $f$ leaked shares and, given an additional $k-f$ shares, reconstructs the secret $|\psi\rangle$ along with the associated auxiliary registers. Let $\mathcal{I}_{\mathrm{idx}}\subseteq [m]$ denote the index set of the leaked shares, and let $\mathcal{I}:=\{tag_i \mid i\in \mathcal{I}_{\mathrm{idx}}\}$ be the corresponding tag set. The verifier invokes $\mathcal{R}$, then uses the tracing key to decrypt the auxiliary registers and recover the corresponding tags, obtaining an evidence set $\mathcal{I}'$.

Finally, the verifier checks whether $\mathcal{I}'$ satisfies the protocol’s verification conditions. If successful, $\mathcal{I}'$ is accepted as valid evidence of the leaked shares, identifying the participants responsible for the leakage based on the uniqueness of the tags.

\subsubsection{Security definitions}
\label{subsub:secdefleakage}

We define an LVT-QSS scheme that allows for the secure sharing of a quantum state $|\psi\rangle$ among $m$ participants. The scheme is required to support secret reconstruction from any authorized subset of at least $k$ shares, as well as leakage tracing and evidence verification through auxiliary tag information.

Formally, we model an LVT-QSS scheme by five QPT algorithms,
$\mathrm{Setup}$, $\mathrm{GenSecretShares}$, $\mathrm{SecretRecover}$,
$\mathrm{SecretTrace}$, and $\mathrm{SecretVerify}$.
Whenever convenient, we use assignment notation for their execution, e.g., $(tk,Tags)\leftarrow \mathrm{Setup}(1^\lambda,m)$.

\begin{itemize}
    \item \textbf{Setup}: Given the security parameter $1^\lambda$ and the number of parties $m$, the trusted authority generates a tracing key $tk$ and a tag set $Tags = \{tag_i\}_{i=0}^{m-1}$.

    \item \textbf{GenSecretShares}: Given a quantum state $|\psi\rangle$, the number of parties $m$, a threshold $k$, a tag set $Tags$, and the tracing key $tk$, the dealer generates $m$ quantum shares $\{|sh_i\rangle\}_{i=0}^{m-1}$ and the corresponding verification keys $\{vk_i\}_{i=0}^{m-1}$.

    \item \textbf{SecretRecover}: Given a set of shares $\{|sh_i\rangle\}_{i\in P}$ with $P\subseteq [m]$, the algorithm attempts to reconstruct the secret state $|\psi\rangle$. If reconstruction fails, it outputs $\bot$.

    \item \textbf{SecretTrace}: Given the tracing key $tk$ and a set of quantum shares $\{|sh_i\rangle\}_{i\in Q\subseteq [m]\setminus \mathcal{I}_{\mathrm{idx}}}$ with $|Q|=k-f$ for some $f<k$, and oracle access to a reconstruction device $\mathcal{R}$ built from $f$ leaked shares, the tracer outputs a candidate evidence set $\mathcal{I}'$.

    \item \textbf{SecretVerify}: Given the verification keys $\{vk_i\}_{i=0}^{m-1}$ and a candidate evidence set $\mathcal{I}'$, the verifier outputs a bit $b\in\{0,1\}$ indicating whether $\mathcal{I}'$ satisfies the verification conditions.
\end{itemize}

Correctness requires that any authorized subset of at least $k$ shares recover the original secret state up to negligible error in trace distance.

\begin{definition}[Correctness (informal)]
\label{def:correct1}
An LVT-QSS scheme is correct if any authorized subset $P\subseteq [m]$ with $|P|\ge k$ can recover the secret state with probability at least $1-\mathrm{negl}(\lambda)$. Upon successful recovery, the reconstructed state is within trace distance at most $\mathrm{negl}(\lambda)$ of the original secret state. A fully formal definition is given in Appendix~\ref{def:correct1-formal}.
\end{definition}

An LVT-QSS scheme is required to satisfy three central security properties: secrecy, traceability, and non-imputability. Secrecy is the standard threshold requirement that any subset of fewer than $k$ parties obtains essentially no information about the secret. We formalize this requirement as follows.

\begin{definition}[Secrecy (informal)]
\label{def:secrecy}
An LVT-QSS scheme satisfies secrecy if any unauthorized subset of fewer than $k$ parties learns essentially no information about the secret state.
A fully formal definition is given in Appendix~\ref{def:secrecy-formal}.
\end{definition}

In a traceable secret-sharing scheme, secrecy must continue to hold even in the presence of the tracer. In particular, access to the tracing key $tk$ should not weaken the secrecy guarantee. We formalize this stronger requirement below.

\begin{definition}[Tracer secrecy (informal)]
\label{def:tsecrecy}
An LVT-QSS scheme satisfies tracer secrecy if any unauthorized subset of fewer than $k$ parties obtains only negligible information about the shared secret state, even when the tracing key $tk$ is available to it.
A fully formal definition is given in Appendix~\ref{def:tsecrecy1-formal}.
\end{definition}

A leak-verifiable and traceable quantum secret-sharing scheme must satisfy traceability. Let $\mathcal{I}_{\mathrm{idx}}\subseteq [m]$ be a coalition of leaked parties with $|\mathcal{I}_{\mathrm{idx}}|=f<k$, and let $\mathcal{I}$ denote the corresponding tag set. Using the shares in its possession, the coalition may construct a black-box reconstruction procedure $\mathcal{R}$ that takes as input an additional set of $k-f$ shares and attempts to recover the secret state. Traceability requires that whenever such a reconstruction device succeeds with sufficiently high probability, a tracing algorithm given black-box access to $\mathcal{R}$ can identify the coalition responsible for constructing $\mathcal{R}$.

To make this precise, let $Q\subseteq [m]\setminus \mathcal{I}_{\mathrm{idx}}$ be a subset of size $k-f$, chosen uniformly at random from the parties outside the coalition. If, on input the shares $\{|sh_i\rangle\}_{i\in Q}$, the box $\mathcal{R}$ outputs a state negligibly close to the original secret, then we regard $\mathcal{R}$ as a good reconstruction box. We formalize this notion below.

\begin{definition}[Good reconstruction boxes (informal)]
\label{def:good-box}
A reconstruction box constructed from the leaked shares of a coalition of $f<k$ parties is called good if, when supplemented with a uniformly random set of $k-f$ shares from parties outside the coalition, it reconstructs the secret state up to negligible trace distance with probability at least $\epsilon$.
A fully formal definition is given in Appendix~\ref{def:good-box-formal}.
\end{definition}

\begin{definition}[Traceability (informal)]
\label{trace1}
An LVT-QSS scheme is traceable if any useful reconstruction box constructed from leaked shares can be attributed to the coalition responsible for those shares; in particular, if the box reconstructs the secret using additional shares from parties outside the coalition, then the tracing procedure identifies the coalition's tag set except with negligible probability. The full tracing experiment and formal definition are given in Appendix~\ref{def:trace1-formal}.
\end{definition}

\begin{definition}[Non-imputability (informal)]
\label{def:ni}
An LVT-QSS scheme is non-imputable if accepted tracing evidence cannot falsely identify an honest participant as responsible for share leakage, except with negligible probability. The full non-imputability experiment and formal definition are given in Appendix~\ref{def:ni-formal}.
\end{definition}

\subsubsection{Leakage-verifiable and traceable quantum secret sharing protocol}
\label{subsub:leakagepro}
The construction relies on standard cryptographic primitives. In particular, we use a quantum-secure collision-resistant and preimage-resistant hash function $H$ and a post-quantum IND-CPA-secure symmetric-key encryption scheme $\Pi_{\mathsf{SKE}}=(\mathrm{Enc},\mathrm{Dec}).$
We further assume that $\Pi_{\mathsf{SKE}}$ satisfies the sparse-hit property with parameter $\alpha_{\max}(\lambda)$. For a key $k\in\{0,1\}^{\lambda}$ and a message $m\in\{0,1\}^*$, the encryption algorithm outputs a ciphertext
$c\leftarrow \mathrm{Enc}(m,k)\in\{0,1\}^{\ell(|m|)},$ whereas the decryption algorithm $\mathrm{Dec}$, on input $(c,k)$, outputs either $m$ or $\bot$.

For any QPT adversary $\mathcal{A}$, let $\mathrm{Adv}^{\mathrm{IND\text{-}CPA}}_{\mathcal{A}}(\lambda)$ denote the advantage of $\mathcal{A}$ in the post-quantum IND-CPA experiment against $\Pi_{\mathsf{SKE}}$. We assume that this advantage is negligible in the security parameter $\lambda$. We now describe the constituent algorithms in detail.

The dealer holds an arbitrary secret quantum state $|\psi\rangle = \sum_s \alpha_s |s\rangle \in \mathcal{H}_\lambda.$
The goal of the LVT-QSS scheme is to securely share this quantum state among \(m\) parties, such that any subset of at least \(k\) parties can later reconstruct the state. The process of sharing the quantum state involves two main steps: setting up the system parameters and generating the secret shares. These algorithms ensure that the shares are distributed securely and that the secret state can be reconstructed only by authorized parties.

\protocol{Initialization and share generation of LVT-QSS}
\textbf{1. Setup}$(1^\lambda,m)$.
On input the security parameter \(1^\lambda\) and the number of parties \(m\):
\begin{itemize}[leftmargin=1.2em,nosep]
    \item Sample distinct tags $tag_0,\ldots,tag_{m-1}\xleftarrow{\$}\mathbb{F},$
    and set $Tags := \{tag_0,\ldots,tag_{m-1}\}.$
    
    \item Sample a tracing key $tk \xleftarrow{\$} \{0,1\}^{\lambda}.$
\end{itemize}
\textbf{2. GenSecretShares}$(|\psi\rangle,m,k,Tags,tk)$.
On input a secret state \(|\psi\rangle\), parameters \(m\) and \(k\), the tag set \(Tags\), and the tracing key \(tk\):
\begin{itemize}[leftmargin=1.2em,nosep]
    \item Let \(\mathbb{F}=\mathbb{Z}_q\), where \(q\) is prime and \(\lceil \log_2 q\rceil=\lambda\). Sample \(m\) distinct elements $x_0,\ldots,x_{m-1}\in\mathbb{F},$
    and a random key $uk\xleftarrow{\$}\mathbb{F}.$

    \item For each \(i\in\{0,\ldots,m-1\}\), compute
    \[
    ct_i \leftarrow \mathrm{Enc}(tag_i,tk),\
    vk_i \leftarrow H(tag_i),\ \text{and}\
    r_i \leftarrow \mathrm{Enc}(ct_i,uk).
    \]

    \item For any vector \(c=(c_0,\ldots,c_{k-1})\in\mathbb{F}^k\), define $p_c(t)=c_0+c_1 t+\cdots+c_{k-1} t^{k-1}.$

    \item Encode the secret state as
    \[
    \begin{aligned}
    |\psi\rangle \mapsto\;
    &\sum_s \alpha_s
    \sum_{\substack{c\in\mathbb{F}^k\\ c_0=uk,\,c_{k-1}=s}}
    \\
    &\quad |p_c(x_0),r_0\rangle \otimes \cdots \otimes |p_c(x_{m-1}),r_{m-1}\rangle .
    \end{aligned}
    \]

    \item Initialize registers \((R_0,A_0),\ldots,(R_{m-1},A_{m-1})\) according to the above state. Denote the quantum share in register pair \((R_i,A_i)\) by \(|sh_i\rangle\) for each \(i\in\{0,\ldots,m-1\}\).
\end{itemize}
\noindent\rule{\linewidth}{0.5pt}

We now describe the recovery algorithm. Given any authorized set containing at least \(k\) valid shares, the algorithm reconstructs the secret quantum state. The recovery process involves selecting \(k\) valid shares from the set, applying a Vandermonde transform to reconstruct the secret, and removing any residual dependencies to fully recover the state.

\protocol{Secret recovery in LVT-QSS}
\textbf{3. SecretRecover}$(\{|sh_i\rangle\}_{i\in P})$.
On input a set of shares indexed by $P\subseteq [m]$:
\begin{itemize}[leftmargin=1.2em, nosep]
    \item If $|P|<k$, output $\perp$. Otherwise, choose a subset $P'\subseteq P$ such that $|P'|=k$. Without loss of generality, relabel the corresponding register pairs as
    $(R_0,A_0),\ldots,(R_{k-1},A_{k-1}).$

    \item Apply the inverse Vandermonde transform $V_k(x_0,\ldots,x_{k-1})^{-1}$ to registers $R_0,\ldots,R_{k-1}$.

    \item Cyclically right-shift the $k$ register pairs $(R_0,A_0),\ldots,(R_{k-1},A_{k-1})$ by one position.

    \item Copy $R_1$ into an ancilla register $B$ via $|uk\rangle_{R_1}|0\rangle_B \mapsto |uk\rangle_{R_1}|uk\rangle_B.$

    \item Apply the linear transform $V_{k-1}(x_k,\ldots,x_{m-1})$ to registers $R_1,\ldots,R_{k-1}$.

    \item For each $i=1,\ldots,k-1$, remove the residual dependence by adding $R_0\cdot (x_{k+i-1})^{k-1}$ to register $R_i$.

    \item Output the content of register $R_0$ as the recovered secret state.
\end{itemize}
\noindent\rule{\linewidth}{0.5pt}

We next describe the tracing and verification algorithms. First, the tracing algorithm takes as input the tracing key and a set of valid leaked shares, and extracts the identities of the leaking parties by tracing the shares back to their corresponding parties. Then, the verification algorithm takes the verification keys and a candidate identity set \(\mathcal{I}'\) returned by the tracing algorithm, and verifies whether the traced identities are valid by checking them against the verification keys.

\protocol{Secret tracing and verification in LVT-QSS}
\textbf{4. SecretTrace}$(tk,\{|sh_i\rangle\}_{i \in Q})$.
On input a tracing key $tk$ and a collection of valid shares indexed by a set $Q \subseteq [m]\setminus \mathcal{I}_{\mathrm{idx}}$ with $|Q|=k-f:$
\begin{itemize}[leftmargin=1.2em, nosep]
    \item For each $i \in Q$, copy the content of $A_i$ into an ancilla register $C_i$: 
    $|r_i\rangle_{A_i}|0\rangle_{C_i} \mapsto |r_i\rangle_{A_i}|r_i\rangle_{C_i}$.
        
    \item Feed the register pairs $\{(R_i,A_i)\}_{i\in Q}$ into the reconstruction box $\mathcal{R}$, which outputs an ancilla register $B$ and auxiliary registers $\tilde A_0,\ldots,\tilde A_{k-1}.$
 
    \item Measure $B$ to obtain $uk$, and measure $\tilde A_0,\ldots,\tilde A_{k-1}$ to obtain $\{r_i\}_{i=0}^{k-1}$. Set $I \leftarrow \{ r_i \}_{i=0}^{k-1}$.
        
    \item Measure each $C_i$ to obtain $r'_i$ and set $I'' \leftarrow \{ r'_i \}_{i \in Q}$.
        
    \item Update $I \leftarrow I \setminus I''$ and output
    \[
    \mathcal{I}' \leftarrow \{ \mathrm{Dec}(\mathrm{Dec}(r,uk),tk) : r \in I\}.
    \]
\end{itemize}  
\textbf{5. SecretVerify}$(\{vk_i\}_{i=0}^{m-1}, \mathcal{I}')$.
On input the verification keys $\{vk_i\}_{i=0}^{m-1}$ and a candidate evidence set $\mathcal{I}'$:
\begin{itemize}[leftmargin=1.2em, nosep]
    \item If $|\mathcal{I}'| \neq f$, output $\perp$.
        
    \item Enumerate $\mathcal{I}' = \{tag_1,\ldots,tag_f\}$. For each $tag \in \mathcal{I}'$, compute $v := H(tag)$. If $v \notin \{vk_i\}_{i=0}^{m-1}$, output $\perp$.
        
    \item Otherwise, output 1.
\end{itemize}    
\noindent\rule{\linewidth}{0.5pt}

\subsubsection{Proof}
\label{subsub:proofLVT}

\begin{theorem}
\label{th:correct1}
The $(k,m)$-threshold LVT-QSS scheme is correct.
\end{theorem}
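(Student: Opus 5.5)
The plan is to track the encoded state through each step of $\mathrm{SecretRecover}$ on an authorized set $P'$ of size $k$, following the Cleve--Gottesman--Lo style argument for polynomial codes, and show that the output register $R_0$ holds $|\psi\rangle$ exactly, tensored with a state independent of $s$. Relabel so that $P'=\{0,\ldots,k-1\}$ and write the global state as
\[
\sum_s \alpha_s \sum_{c:\,c_0=uk,\,c_{k-1}=s} \bigotimes_{i<k}|p_c(x_i),r_i\rangle \otimes \bigotimes_{i\ge k}|p_c(x_i),r_i\rangle .
\]
The auxiliary registers $A_i$ hold the classical strings $r_i=\mathrm{Enc}(ct_i,uk)$. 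These strings do not depend on $s$, so they factor out of the superposition. They are only permuted by the cyclic shift, so they never interfere with the recovery.

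The first step uses the Vandermonde structure. The points $x_i$ are distinct, so $V_k(x_0,\ldots,x_{k-1})$ is invertible over $\mathbb{Z}_q$. Applying its inverse maps $|p_c(x_0),\ldots,p_c(x_{k-1})\rangle$ to $|c_0,\ldots,c_{k-1}\rangle$. Once $c_0=uk$ and $c_{k-1}=s$ are substituted, the cyclic right shift puts $s$ into $R_0$, $uk$ into $R_1$, and $c_1,\ldots,c_{k-2}$ into $R_2,\ldots,R_{k-1}$.

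Next, copying $R_1$ into $B$ is a well-defined CNOT-type operation. This is because $uk$ is a fixed classical value shared across all branches, so no entanglement with $s$ is created.

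The third step disentangles the remaining registers. The state still has the form $|s\rangle_{R_0}\otimes|uk,c_1,\ldots,c_{k-2}\rangle\otimes\bigotimes_{i\ge k}|p_c(x_i)\rangle$, and we need to decouple $R_0$. We apply $V_{k-1}(x_k,\ldots,x_{m-1})$ to $(R_1,\ldots,R_{k-1})$, which contain $(c_0,\ldots,c_{k-2})$. This produces in $R_i$ the value
\[
\sum_{j<k-1} c_j x_{k+i-1}^j = p_c(x_{k+i-1}) - s\,x_{k+i-1}^{k-1}.
\]
The correction $R_i \mathrel{+}= R_0\, x_{k+i-1}^{k-1}$ turns this into $p_c(x_{k+i-1})$. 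Because $m=2k-1$, there are exactly $k-1$ outside shares, and after these steps the $R_1,\ldots,R_{k-1}$ registers hold exactly the values of the outside shares $p_c(x_k),\ldots,p_c(x_{m-1})$.

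The fourth step is the key one, and I expect it to be the main obstacle. We must argue that, for fixed $s$, the joint state of the remaining registers is $\sum_{c_1,\ldots,c_{k-2}}|\mathbf{v}_c\rangle|\mathbf{v}_c\rangle$, where $\mathbf{v}_c=(p_c(x_k),\ldots,p_c(x_{m-1}))$, and that this state is the same for every $s$. The approach is to note that, with $c_0=uk$ fixed, the map from $(c_1,\ldots,c_{k-2},s)$ to the $k-1$ outside evaluations is a bijection, by invertibility of the Vandermonde system. Substituting $y=\mathbf{v}_c$ then rewrites the sum as $\sum_{y\in\mathbb{F}^{k-1}}|y\rangle|y\rangle$. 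This is independent of $s$, so the global state factorizes as $\bigl(\sum_s\alpha_s|s\rangle_{R_0}\bigr)\otimes|\Phi\rangle$. I must take care with the normalization and with checking this bijection claim, since $c_0$ is pinned to $uk$ rather than summed.

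The last step concludes. Every step is unitary and deterministic, so recovery succeeds with probability $1$ and the trace distance to $|\psi\rangle\langle\psi|$ is $0$, which meets Definition~\ref{def:correct1} in its formal version. The case $|P|>k$ reduces to this one by choosing $P'$. The only source of negligible error would be a failure of distinct sampling, such as collisions among the $x_i$ or among the tags, which occurs with probability at most $m^2/q=\mathrm{negl}(\lambda)$ since $\lceil\log_2 q\rceil=\lambda$.
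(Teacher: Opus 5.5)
Your Step~4 fails, precisely at the point you flagged. For a fixed basis value $s$, the residual sum runs only over $(c_1,\ldots,c_{k-2})\in\mathbb{F}^{k-2}$, i.e.\ over $q^{k-2}$ terms. But $\mathbf{v}_c=(p_c(x_k),\ldots,p_c(x_{m-1}))$ lives in $\mathbb{F}^{k-1}$. The bijection you invoke is $(c_1,\ldots,c_{k-2},s)\mapsto\mathbf{v}_c$, and it has $s$ among its inputs. Freezing $s$ therefore sweeps out not $\mathbb{F}^{k-1}$ but only a fiber $A_s$ of size $q^{k-2}$. The same bijection (valid when the outside evaluation points are nonzero) shows that the fibers $A_s$ are pairwise disjoint. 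Hence $|\Phi_s\rangle:=\sum_{y\in A_s}|y\rangle|y\rangle$ satisfies $\langle\Phi_s|\Phi_{s'}\rangle=0$ for $s\neq s'$. The final state is $\sum_s\alpha_s|s\rangle_{R_0}\otimes|\Phi_s\rangle$ rather than $\bigl(\sum_s\alpha_s|s\rangle\bigr)\otimes|\Phi\rangle$, and the reduced state of $R_0$ is the dephased mixture $\sum_s|\alpha_s|^2|s\rangle\langle s|$.

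The case $k=2$, $m=3$ makes this explicit. There are no free coefficients, and the encoding is $|s\rangle\mapsto|uk+sx_0\rangle|uk+sx_1\rangle|uk+sx_2\rangle$. $\mathrm{SecretRecover}$ on shares $0,1$ ends (auxiliary registers omitted) in $\sum_s\alpha_s|s\rangle_{R_0}|uk+sx_2\rangle_{R_1}|uk\rangle_B|uk+sx_2\rangle_{R_2}$. So for $x_2\neq 0$ and $|\psi\rangle=d^{-1/2}\sum_s|s\rangle$, the output is at trace distance $1-1/d$ from $|\psi\rangle\langle\psi|$.

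This is not a bookkeeping slip that more care would fix. Because $c_0$ is pinned to the classical value $uk$, only $k-2$ coefficients are summed coherently. So for each fixed $uk$, and any $P'$ whose complementary evaluation points are nonzero (such $P'$ always exist for $k\ge 2$), the $k-1$ shares outside $P'$ have mutually orthogonal supports $A_s$ for different $s$. Any operation on the shares in $P'$ alone then outputs $\sum_s|\alpha_s|^2\rho_s$, which is blind to the phases of the $\alpha_s$. Hence no recovery map whatsoever can meet the formal correctness definition for such $P'$.

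Your decoupling argument is the right one for a Cleve--Gottesman--Lo-type encoding in which all $k-1$ non-secret coefficients are summed coherently, so that $\mathbf{v}_c$ ranges over all of $\mathbb{F}^{k-1}$ for each $s$. In that setting the copy of $c_0$ into $B$ would also have to go, since $c_0$ would no longer be classical. For the encoding as written, the step cannot be completed.

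The paper's own proof does not supply this step either. It shows branch by branch that $R_0$ holds $s$ after $V_k^{-1}$ and the cyclic shift. It then concludes that $R_0$ contains $|\psi\rangle$ because later operations do not act on $R_0$, without ever checking that the remaining registers are independent of $s$. You correctly identified decoupling as the crux, but the factorization you claim is false, so the proposal does not establish the statement.
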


\begin{proof}[Proof sketch]
Correctness follows from the linear structure of the coherent threshold encoding. The shares are evaluations of a degree-\((k-1)\) polynomial over distinct field points, with the secret embedded in one coefficient. Any authorized set of at least \(k\) shares contains \(k\) evaluations that define an invertible Vandermonde system, allowing the reconstruction procedure to recover the secret coefficient by a reversible linear transformation. Hence the recovered state is negligibly close to the original secret state except with negligible probability. The complete proof is given in Appendix~\ref{app:lvt-correctness-proof}.
\end{proof}

\begin{theorem}
\label{th:secrecy}
The $(k,m)$-threshold LVT-QSS scheme satisfies secrecy.
\end{theorem}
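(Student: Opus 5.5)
The plan is a hybrid argument. Secrecy here can only be computational, because the classical registers $A_i$ carry $r_i=\mathrm{Enc}(ct_i,uk)$, which depend on $uk$. I will first remove that dependence using IND-CPA security of $\Pi_{\mathsf{SKE}}$, and then show information-theoretically that the quantum registers $R_i$ of any unauthorized set are independent of $|\psi\rangle$.

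It suffices to treat a coalition $T\subseteq[m]$ with $|T|=k-1$. Secrecy for smaller sets then follows by tracing out further registers, since partial trace cannot increase trace distance.

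The steps, in order:

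\emph{Hybrid 0.} This is the real experiment. The coalition holds $\{(R_i,A_i)\}_{i\in T}$, and may also see the public $vk_i=H(tag_i)$ and the evaluation points $x_i$.

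\emph{Hybrid 1.} Replace every $r_i$ by $\mathrm{Enc}(0,uk')$ for an independent key $uk'$, and keep the polynomial constant term $c_0=uk$. A distinguisher between Hybrids 0 and 1 yields an IND-CPA adversary. The reduction samples $uk$ itself to prepare the coherent $R$-part, and obtains the $r_i$ from its challenger, which encrypts under a hidden key. Here I must take care to phrase the game so that the key embedded in $c_0$ and the encryption key are treated as the challenger's key. I expect this to require a slight reformulation, for instance assuming key-dependent-message style security, or arguing that $c_0=uk$ can be replaced by an independent uniform field element. This coupling between $uk$ in the polynomial and $uk$ as an encryption key is the step I expect to be the main obstacle. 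I would first try a two-step hybrid: decouple $c_0$ from the encryption key, then apply IND-CPA.

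\emph{Hybrid 1, information-theoretic step.} The $A$-registers are now independent of both $|\psi\rangle$ and $c_0$. Fix $s$ and consider the evaluation map
\[
(c_0,c_1,\ldots,c_{k-2})\mapsto \bigl(p_c(x_i)\bigr)_{i\in T}, \qquad c_{k-1}=s .
\]
It is an affine bijection $\mathbb{F}^{k-1}\to\mathbb{F}^{k-1}$, since its linear part is the $(k-1)\times(k-1)$ Vandermonde matrix $V_{k-1}(x_i)_{i\in T}$ on distinct points, shifted by $s\,x_i^{k-1}$.

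Next, the complementary $k$ shares determine $c$, and hence $s$, via the invertible $V_k$ already used in Theorem~\ref{th:correct1}. Tracing out the complement therefore gives
\[
\rho_T=\sum_s|\alpha_s|^2\,\mathbb{E}_{c_0}\,\frac{1}{q^{k-2}}\sum_{c_1,\ldots,c_{k-2}}\bigl|(p_c(x_i))_{i\in T}\bigr\rangle\!\bigl\langle(p_c(x_i))_{i\in T}\bigr| .
\]
Because $c_0$ is uniform, the inner average is uniform over all $q^{k-1}$ evaluation vectors by the bijection above. Hence $\rho_T=\mathrm{Id}/q^{k-1}$ on the $R$-registers, tensored with a $\psi$-independent $A$-part. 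This state is identical for any two secrets.

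\emph{Conclusion.} Combining the hybrids, the coalition's distinguishing advantage between two secrets is at most $2\,\mathrm{Adv}^{\mathrm{IND\text{-}CPA}}_{\mathcal{A}}(\lambda)$, which is negligible. The public values $vk_i$ and $x_i$ are independent of $|\psi\rangle$ and do not affect the argument. The sparse-hit property is not needed for this theorem.
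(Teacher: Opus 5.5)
Your Hybrid~1 counting is correct, but the proposal has a genuine gap at exactly the step you flagged, and the fixes you suggest cannot close it. Measure the coalition's registers $R_T$, $|T|=k-1$, in the computational basis, and let $g$ be the degree-$(k-2)$ interpolant of the $k-1$ outcomes. Since $g(t)=p_c(t)-s\prod_{i\in T}(t-x_i)$, for a basis secret $s$ we get
\[
g(0)=uk+(-1)^{k}\,s\prod_{i\in T}x_i ,
\]
and this value does not depend on $c_1,\dots,c_{k-2}$. So the share registers of a maximal unauthorized set carry a one-time pad of $s$ under the very key $uk$ that encrypts the $A$-registers.

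Take secrets $s_0\neq s_1$, with nonzero $x_i$ (true except with negligible probability). The coalition then obtains two distinct candidate keys $g(0)-(-1)^k s_b\prod_{i\in T}x_i$, exactly one of which is $uk$. IND-CPA does not stop it from testing a candidate against $r_i=\mathrm{Enc}(ct_i,uk)$. For example, the scheme $\mathrm{Enc}(m,K)=(\rho,F_K(\rho)\oplus m,F_K(0))$, with random $\rho\neq 0$ and a quantum-secure PRF $F$, is IND-CPA secure. Yet it lets the coalition tell which secret was shared with advantage close to $1$.

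Three consequences follow. ``Decoupling $c_0$ from the encryption key'' is not a preliminary hybrid but the entire difficulty. Your reduction samples $uk$ itself for the $R$-part while the challenger encrypts under an independent key, so it simulates the decoupled game rather than Hybrid~0. KDM security is the wrong notion, because the problem is not a key-dependent plaintext but the key appearing in the share registers masked only by the secret. Separately, even a completed hybrid would only bound QPT distinguishers, whereas the formal secrecy definition asks for negligible trace distance.

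For comparison, the paper's proof is essentially your Hybrid~1 counting applied directly to the real scheme. It writes the encoded state with $c_0,\dots,c_{k-2}$ summed freely and factors out the auxiliary part as a fixed product $\rho_{A_M}=\bigotimes_{i\in M}|r_i\rangle\langle r_i|$. It then counts the $q^{k-1-|M|}$ solutions of $T_M(c)=y$ and concludes that $\rho_{\mathrm{View}_M}$ is the maximally mixed state on $R_M$ tensored with $\rho_{A_M}$, so the trace distance is $0$.

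That factorization silently treats the $r_i$ as independent of $c_0$, which is precisely the decoupling you could not justify. Because $c_0=uk$ and $r_i=\mathrm{Enc}(ct_i,uk)$, the joint state on $R_M\otimes A_M$ is in general not of that product form when $|M|=k-1$. The computation above shows the claimed equality fails for IND-CPA schemes such as the one given. (For $|M|\le k-2$ the share values are independent of $s$ even given $uk$, so the problem is confined to maximal unauthorized sets.)

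Neither route therefore proves the statement as posed. A sound version would require changing the construction so that the constant coefficient is independent of the key encrypting the $A$-registers. Alternatively, it would need an explicit key-hiding assumption on $\Pi_{\mathsf{SKE}}$ that IND-CPA does not imply.
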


\begin{proof}[Proof sketch]
Secrecy follows from the random degrees of freedom in the coherent polynomial encoding. 
For a fixed secret basis value \(s\), the coefficient \(c_{k-1}=s\) is fixed, while the remaining coefficients, including \(c_0=uk\), are sampled independently of \(s\). 
Thus, any unauthorized set \(M\) with \(|M|<k\) observes fewer than \(k\) evaluations of a degree-\((k-1)\) polynomial whose remaining \(k-1\) coefficients are uniformly random. 
The induced distribution of the evaluations seen by \(M\) is therefore uniform and independent of \(s\). 
After tracing out the purification environment and the registers outside \(M\), the reduced state on the adversary's view is independent of the secret state. 
Hence any measurement on fewer than \(k\) shares has only negligible distinguishing advantage. 
The complete proof is given in Appendix~\ref{app:lvt-secrecy-proof}.
\end{proof}

\begin{theorem}
\label{th:tsecrecy}
The $(k,m)$-threshold LVT-QSS scheme satisfies tracer secrecy.
\end{theorem}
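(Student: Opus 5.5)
The plan is to reduce tracer secrecy to the secrecy theorem (Theorem~\ref{th:secrecy}). The reduction rests on one observation: in the scheme, the tracing key $tk$ enters only the auxiliary registers $A_i$, through $ct_i=\mathrm{Enc}(tag_i,tk)$ and $r_i=\mathrm{Enc}(ct_i,uk)$. These registers are classical and are the same for every branch $s$ of the superposition. We will therefore write the joint state of an unauthorized set $M$ with $|M|<k$, together with $tk$ and the public $vk_i=H(tag_i)$, and show that it is (computationally) independent of $|\psi\rangle$.

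\textbf{Step 1.} Fix the classical randomness $(Tags,tk,\{x_i\})$. The encoded state then factors as
\[
\sum_s\alpha_s\sum_{c:\,c_0=uk,\,c_{k-1}=s}\bigotimes_i|p_c(x_i)\rangle_{R_i}\otimes|r_0,\ldots,r_{m-1}\rangle_{A}.
\]
The complication is that the $r_i$ depend on $uk=c_0$, which is also a polynomial coefficient. So $A$ is entangled with the $R$-registers through $uk$, but not through $s$.

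\textbf{Step 2.} Condition on $uk$. For fixed $uk$ and $s$, the coefficients $c_1,\ldots,c_{k-2}$ are uniform, which gives $k-2$ free coefficients. This is where I must be careful: the sketch of Theorem~\ref{th:secrecy} counts $k-1$ free coefficients including $c_0$. Once $A$, and hence $uk$, is fixed, only $k-2$ remain, so $|M|\le k-2$ evaluations are uniform, but $k-1$ evaluations might determine $s$.

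\textbf{Step 3.} This is the main obstacle, and I will address it computationally. The coalition holds $r_i=\mathrm{Enc}(ct_i,uk)$ but not $uk$. By IND-CPA security of $\Pi_{\mathsf{SKE}}$ under key $uk$, I will run a hybrid that replaces every $r_i$ with an encryption of $0$ under an independent key $uk'$. This decouples $A$ from $c_0$. The reduction works because, given $tk$ and the tags, the adversary can compute every $ct_i$, so the IND-CPA challenger only needs to supply encryptions under $uk$. In the hybrid, $A$ is a fixed classical string independent of the polynomial. The reduced state on $\{R_i\}_{i\in M}$ is then exactly the one analyzed in Theorem~\ref{th:secrecy}, which is independent of $s$. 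This relies on $c_0$ being uniform and hidden, so that $k-1$ random coefficients remain.

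\textbf{Step 4.} The tracer's extra data are $tk$ and the $ct_i$, and these are functions of $(Tags,tk)$ alone, independent of $\psi$. Including them therefore changes nothing in the hybrid. The distinguishing advantage between any two secrets is bounded by $2\,\mathrm{Adv}^{\mathrm{IND\text{-}CPA}}(\lambda)$, plus $0$ from the statistical hybrid, and this is negligible. By Helstrom, this gives tracer secrecy.

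The delicate point is Step 3. If the formal definition lets the coalition's shares include registers $R_i$ correlated with $uk$, one must check that the IND-CPA reduction can prepare those quantum shares without knowing $uk$. I would handle this by having the reduction prepare the coherent encoding with $c_0$ as a quantum register that it never measures. It is not obvious this is compatible with a classical encryption oracle. If it is not, I would fall back on noting that $m-|M|\ge k$ outside shares are traced out, and argue directly using the $r$-values alone.
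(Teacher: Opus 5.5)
Your Steps~1--2 are correct, and they point at something the paper's own proof passes over. The paper invokes Theorem~\ref{th:secrecy}, whose proof writes the coalition's view as a uniform state on $R_M$ tensored with a fixed $\rho_{A_M}$, treating $c_0$ as a free coefficient. It then only adds that $tk$ is sampled independently of the secret, so a cq decomposition over $tk$ gives trace distance $0$. Since $r_i=\mathrm{Enc}(ct_i,uk)$ and $uk=c_0$, that product form is unjustified: each of $R_M$ and $A_M$ being independent of the secret does not make the joint view independent of it.

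Your repair in Step~3, however, does not work. Take $|M|=k-1$ with $0\notin\{x_i\}_{i\in M}$. Measure $R_M$ in the computational basis and apply a linear functional that vanishes on the columns $(x_i^{j})_{i\in M}$, $1\le j\le k-2$; this yields $a\,uk+b\,s$ with $a,b\neq 0$. An IND-CPA reduction never holds the challenge key, so it cannot prepare $R_M$ consistent with $c_0=uk$. Preparing $c_0$ in an unmeasured register makes $c_0$ independent of the challenger's key, so it simulates your hybrid, not the real game. (Also, IND-CPA only lets you change plaintexts, not switch to an independent key $uk'$.)

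No cleverer reduction can rescue this. IND-CPA permits ciphertexts to reveal part of the key, e.g.\ half the key bits in the clear beside an encryption under the other half. Such partial information on $uk$, combined with $a\,uk+b\,s$, can reveal information about $s$. The hybrid step therefore needs $uk$ to stay hidden given $(A_M,tk)$, a key-hiding property that neither IND-CPA nor the sparse-hit property supplies. Your fallback does not touch the problem either: tracing out the $\ge k$ outside shares only decoheres the superposition over $c$, whereas the $uk$-correlation sits inside the coalition's own registers.

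Second, even granting the hybrid, Step~4 yields indistinguishability only against QPT distinguishers. The formal tracer-secrecy definition instead bounds $\tfrac12\|\rho^{(0)}_{\mathrm{View}_M,T}-\rho^{(1)}_{\mathrm{View}_M,T}\|_1$. Helstrom converts a trace-distance bound into a bound on all distinguishers, not conversely, so no computational argument establishes the statement as defined.

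What your approach actually needs is a statistical hypothesis: the joint state of $(uk,A_M,T)$ is negligibly close to one in which $uk$ is uniform and independent of $(A_M,T)$. Under that hypothesis, your count with $c_0$ uniform ($k-1$ free coefficients) plus data processing gives negligible trace distance between the two views. Whether it holds depends on the concrete instantiation of $\Pi_{\mathsf{SKE}}$, not on the stated assumptions. Without it, the step from ``each of $R_M$ and $A_M$ is independent of the secret'' to ``the joint view is'' remains unjustified, in your argument and in the paper's.
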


\begin{proof}[Proof sketch]
The share registers \(R_M\) of any unauthorized subset \(M\), with \(|M|<k\), are already independent of the secret by secrecy of the underlying threshold sharing. The auxiliary registers \(A_M\) contain only tracing-related data generated from the tracing key, local randomness, and the user key, all of which are independent of the secret state. Therefore, even if the tracing key is included as a classical register \(T\), the joint state on \(\mathrm{View}_M\otimes T\) is independent of which secret was shared. Hence the distinguishing advantage of any unauthorized subset is negligible. The complete proof is given in Appendix~\ref{app:lvt-tracer-secrecy-proof}.
\end{proof}

\begin{theorem}[Informal]
\label{th:lvt-trace}
Assume that \(H:\mathbb{F}\to\mathbb{F}\) is quantum-secure collision-resistant and preimage-resistant, and that the symmetric-key encryption scheme
\(\Pi_{\mathrm{SKE}}=(\mathrm{Enc},\mathrm{Dec})\) satisfies the sparse-hit property with negligible parameter \(\alpha_{\max}(\lambda)\).
If the number of sparse-hit trials is polynomially bounded in \(\lambda\), then the LVT-QSS scheme is traceable against any QPT adversary: every useful reconstruction box constructed from leaked shares can be traced to the coalition that created it, except with negligible probability.
\end{theorem}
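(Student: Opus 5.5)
We would prove the statement by a hybrid argument that splits the failure event of the tracing experiment into three disjoint bad events and bounds each by a negligible quantity. Fix a QPT adversary that, on receiving the leaked shares $\{|sh_i\rangle\}_{i\in\mathcal{I}_{\mathrm{idx}}}$ with $|\mathcal{I}_{\mathrm{idx}}|=f<k$, outputs a good reconstruction box $\mathcal{R}$ in the sense of Definition~\ref{def:good-box}. Tracing fails only in three ways. Either (E1) $\mathcal{R}$ does not return well-formed auxiliary registers $\tilde A_0,\ldots,\tilde A_{k-1}$ and $B$ consistent with honest reconstruction, or (E2) the output set $\mathcal{I}'$ contains an element not in $\mathcal{I}$, or (E3) $\mathcal{I}'$ misses an element of $\mathcal{I}$. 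The first step is to use goodness of $\mathcal{R}$ together with Theorem~\ref{th:correct1}. Conditioned on successful reconstruction, which occurs with probability at least $\epsilon$ over the random choice of $Q$, the output is negligibly close in trace distance to the honest output of SecretRecover. That honest output contains $uk$ in $B$ and the $k$ ciphertexts $r_i$ for the indices actually used. By the Fuchs--van de Graaf inequality, measuring these registers then yields the honest classical values except with negligible probability.

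Given the honest values, the second step is deterministic bookkeeping. The copies $C_i$ for $i\in Q$ were made before the box was invoked. Measuring them returns exactly $\{r_i\}_{i\in Q}$, because these registers are classical in the computational basis, so copying does not disturb the superposition over $s$. The set difference therefore removes the $k-f$ honest contributions, which leaves $\{r_i\}_{i\in\mathcal{I}_{\mathrm{idx}}}$. The double decryption $\mathrm{Dec}(\mathrm{Dec}(r_i,uk),tk)$ then returns $tag_i$ by correctness of $\Pi_{\mathrm{SKE}}$. Distinctness of tags and of the $r_i$ guarantees $|\mathcal{I}'|=f$, so (E3) cannot occur on this branch, and SecretVerify accepts because $H(tag_i)=vk_i$.

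The third step is the main obstacle: bounding (E2), where a box deviates and injects ciphertexts $r\notin\{r_i\}$ so as to frame outsiders or evade tracing. For such $r$ we would invoke the sparse-hit property (Definition~\ref{def:sparsehit}). The adversary never sees $tk$, and it sees $uk$ only through polynomially many attempts, so each injected string behaves as a trial $x$ whose double decryption must land in the polynomial-size target $S=Tags\setminus\mathcal{I}$. Each trial therefore succeeds with probability at most $m\,\alpha_{\max}(\lambda)$. A union bound over the polynomially many sparse-hit trials permitted by the hypothesis keeps the total negligible. A tag outside $Tags$ produced this way is rejected by SecretVerify unless it collides under $H$ with some $vk_i$, which collision resistance rules out. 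Preimage resistance ensures that publishing $vk_i$ does not let the adversary recover honest tags directly. The delicate part is justifying the reduction to the sparse-hit model, namely that the adversary's output distribution can be treated as the random variable $x$ in Definition~\ref{def:sparsehit} despite correlations with the leaked $r_i$. We would handle this by a hybrid in which the leaked $ct_i$ are replaced by encryptions of zero using IND-CPA security, after which the box's output is independent of honest tags.

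Summing the three bounds, the negligible trace-distance error, $\mathrm{poly}(\lambda)\cdot m\,\alpha_{\max}(\lambda)$, and the collision advantage, the tracer outputs exactly $\mathcal{I}$ with probability at least $\epsilon-\mathrm{negl}(\lambda)$, and any wrong set fails verification except with negligible probability. This yields traceability in the sense of Definition~\ref{trace1}.
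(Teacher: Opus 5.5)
Your first step fails. Goodness (Definition~\ref{def:good-box}) constrains only the reduced state $\rho'$ of the reconstructed secret. Theorem~\ref{th:correct1} concerns the honest $\mathrm{SecretRecover}$, not an adversarial $\mathcal{R}$. So Fuchs--van de Graaf applied to $\rho'$ says nothing about $B$ or $\tilde A_0,\ldots,\tilde A_{k-1}$. Those registers hold the classical values $uk$ and $r_j$, which are in product with the secret. A box can therefore run honest decoding on its $k$ register pairs, output the correct secret, and then overwrite or drop the auxiliary registers. Such a box is good with $\epsilon\approx 1$, yet $|\mathcal{I}'|\neq f$ and $\mathrm{SecretVerify}$ rejects, so your event (E1) is not negligible.

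Your accounting also falls short of Definition~\ref{def:formal-traceability}. Conditioning on a reconstruction event of probability at least $\epsilon$ gives tracing success at least $\epsilon-\mathrm{negl}(\lambda)$, which bounds the advantage only by about $\epsilon(1-\epsilon)$. The definition requires the failure probability itself to be negligible.

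Your (E2) argument has a similar hole. During tracing the box holds $k$ share registers, so it reads off $c_0=uk$ outright, not ``through polynomially many attempts.'' It can then decrypt a fed-in $r_j$ with $j\in Q$ to get $ct_j$, and re-encrypt $ct_j$ under $uk$ with fresh randomness. This gives $r^\dagger\notin I''$ whose double decryption is the honest $tag_j$, and $\mathrm{SecretVerify}$ accepts. That is a targeted hit, not a sample of the random variable $x$ in Definition~\ref{def:sparsehit}. Your IND-CPA hybrid cannot bound it, for two reasons. Copying an inner ciphertext $ct_j$ is not something IND-CPA under $tk$ rules out. And the event ``decrypts into $Tags\setminus\mathcal{I}$'' cannot be tested by a reduction lacking $tk$, so the hybrid need not preserve its probability.

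The paper's decomposition is essentially the one you end up with:
\begin{itemize}
\item rejection;
\item an accepted tag outside $Tags$, handled by collision resistance of $H$;
\item an accepted honest tag outside $\mathcal{I}$, handled by sparse-hit plus a union bound over $\ell(m-f)$ pairs;
\end{itemize}
followed by multiplication by $\epsilon$. However, the paper disposes of rejection by citing completeness of $\mathrm{SecretVerify}$, and it applies sparse-hit to $\mathrm{Dec}(r,uk)$ as if $r$ were random. In effect it assumes the box returns honestly formed $B$ and $\tilde A_j$.

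You should state that assumption explicitly rather than try to derive it from goodness. Under it, your bookkeeping step alone already yields $\mathcal{I}'=\mathcal{I}$ and acceptance, so the sparse-hit and collision bounds are not even needed. Without it, the boxes described above are counterexamples, and the statement cannot hold against arbitrary QPT adversaries.
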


\begin{proof}[Proof sketch]
A successful tracing attack must produce a good reconstruction box while causing either a valid tracing output to be rejected or an incorrect tag set to be accepted. The first case is ruled out by the completeness of the verification procedure. In the second case, the accepted tag set either contains a forged tag, which yields a collision under \(H\), or contains an honest tag accepted through a sparse-hit event in the encryption layer. Collision resistance and the negligible sparse-hit probability bound these cases, and a union bound over polynomially many trials gives negligible tracing advantage. The complete proof is given in Appendix~\ref{app:lvt-trace-proof}.
\end{proof}

\begin{theorem}[Informal]
\label{th:lvt-ni}
Assume that \(H:\mathbb{F}\to\mathbb{F}\) is quantum-secure collision-resistant and preimage-resistant. Then the LVT-QSS scheme is non-imputable against any QPT adversary: no honest participant can be falsely accused of being a leaker, except with negligible probability.
\end{theorem}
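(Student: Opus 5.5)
We argue by reduction to the security of the hash function $H$, mirroring the structure of the traceability argument in Theorem~\ref{th:lvt-trace}. In the non-imputability experiment, a QPT adversary $\mathcal{A}$, possibly acting as a malicious tracer holding $tk$ and controlling the coalition $\mathcal{I}_{\mathrm{idx}}$, outputs a candidate evidence set $\mathcal{I}'$. It wins if $\mathrm{SecretVerify}(\{vk_i\},\mathcal{I}')=1$ and $\mathcal{I}'$ implicates some honest index $j\notin\mathcal{I}_{\mathrm{idx}}$. Implication means that some $t\in\mathcal{I}'$ satisfies $H(t)=vk_j$. Acceptance requires every element of $\mathcal{I}'$ to hash into $\{vk_i\}$. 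So a winning output contains some $t$ with $H(t)=vk_j=H(tag_j)$, where $tag_j$ was never handed to $\mathcal{A}$ directly.

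We then split into two cases according to whether $t=tag_j$. If $t\neq tag_j$, the pair $(t,tag_j)$ is a collision for $H$. A reduction $\mathcal{B}$ runs $\mathrm{Setup}$ and $\mathrm{GenSecretShares}$ itself, so it knows all tags, and simulates the experiment for $\mathcal{A}$. This includes preparing the quantum shares and answering any reconstruction-box interaction, which $\mathcal{B}$ can do honestly since it plays the dealer. It then scans $\mathcal{I}'$ for such a pair and outputs it. Hence this case occurs with probability at most $\mathrm{Adv}^{\mathrm{CR}}_{\mathcal{B}}(\lambda)=\mathrm{negl}(\lambda)$.

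The remaining case, $t=tag_j$, means the adversary has recovered an honest party's tag. Here we use preimage resistance. $\mathcal{B}$ receives a challenge $y=H(u)$ for a uniform $u\in\mathbb{F}$. It guesses an honest index $j^\ast$ (losing a factor $m$, which is polynomial) and sets $vk_{j^\ast}:=y$. It must then simulate the rest of $\mathcal{A}$'s view without knowing $tag_{j^\ast}$. The view touches $tag_{j^\ast}$ only through $ct_{j^\ast}=\mathrm{Enc}(tag_{j^\ast},tk)$ and through $r_{j^\ast}=\mathrm{Enc}(ct_{j^\ast},uk)$, which is embedded coherently in register $A_{j^\ast}$. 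That register is held by an honest party and is therefore not part of $\mathcal{A}$'s view, except through what a reconstruction box can extract. Since $tags$ are uniform distinct field elements, $\mathcal{B}$ replaces $tag_{j^\ast}$ by $u$ implicitly. If $\mathcal{A}$ outputs $t$ with $H(t)=y$, then $\mathcal{B}$ has a preimage. This yields the bound $m\cdot\mathrm{Adv}^{\mathrm{OW}}_{\mathcal{B}}(\lambda)$.

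The main obstacle is the second case. $\mathcal{A}$ holds $tk$, so any ciphertext $ct_{j^\ast}$ it observes decrypts directly to $tag_{j^\ast}$. The simulation therefore must ensure that $r_{j^\ast}$, or equivalently $ct_{j^\ast}$, never reaches $\mathcal{A}$. I would handle this by formalizing, in the experiment of Appendix~\ref{def:ni-formal}, that honest parties' auxiliary registers are never released to the adversary. In $\mathrm{SecretTrace}$, the honest shares' $r_i$ are exactly the set $I''$ removed before decryption, so honest ciphertexts are never decrypted by the protocol itself. Under this modeling, $\mathcal{B}$ can substitute a dummy ciphertext for $r_{j^\ast}$, encrypted under the secret $uk$. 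Its indistinguishability follows from IND-CPA security of $\Pi_{\mathsf{SKE}}$ under $uk$, provided $uk$ is not revealed. Since $uk$ is revealed once $k$ shares are combined, I would carry out this hybrid before any reconstruction is performed. The total winning probability is then the sum of the collision, preimage and IND-CPA advantages, which is negligible.
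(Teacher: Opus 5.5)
Your skeleton matches the paper's. The paper also plants a one-wayness challenge $y^*=H(w^*)$ as the verification key of the framed party $i^*$ (its Game~1), and shows that an accepted accusation yields either a preimage of $y^*$ or a collision with another party's key. Your split on $t=tag_j$ versus $t\neq tag_j$ is a harmless variant; once $y$ is planted at $vk_{j^*}$, any $t$ with $H(t)=y$ already wins the one-wayness game. The factor-$m$ guess is unnecessary, because $\mathrm{ExpNI}$ fixes $i^*$ in its first step.

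The step that fails is the one you offer as the fix for the main obstacle: the IND-CPA hybrid on $r_{j^*}$ under $uk$. In $\mathrm{ExpNI}$ the adversary holds the full share pairs of $m-1=2k-2\ge k$ parties together with $tk$. It can therefore run $\mathrm{SecretRecover}$ itself and read $uk=c_0$ off the decoded registers, so $uk$ is never hidden from it. ``Doing the hybrid before any reconstruction'' is not a legitimate move: the two hybrids must be indistinguishable on the adversary's whole view, including what it computes after learning $uk$. Decrypting a genuine $r_{j^*}$ with $uk$ and then $tk$, and comparing the hash with $vk_{j^*}$, tells them apart at once.

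The step is also not needed; your own observation already suffices. $\mathrm{ExpNI}$ withholds the entire pair $(R_{i^*},A_{i^*})$, and the auxiliary registers carry the classical values $|r_i\rangle$ in tensor product with the rest of the encoding. After tracing out $(R_{i^*},A_{i^*})$, the adversary's state does not depend on $r_{i^*}$ at all, so $\mathcal{B}$ may place anything there. Then $tag_{i^*}$ enters the view only through $vk_{i^*}$. Substituting the challenge preimage costs only the $O(m/q)$ statistical distance from the distinctness conditioning on tags. This is exactly the paper's Game~0 to Game~1 hop, and the final bound has no IND-CPA term.

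Be aware that the broader model you start from is not merely harder to prove; it is false for this scheme. In that model a malicious tracer holding $tk$ may accuse any index outside the coalition. In $\mathrm{SecretTrace}$ the tracer is handed the honest shares $\{|sh_i\rangle\}_{i\in Q}$ and measures $B$ to obtain $uk$. It thus learns $r_i$, $uk$ and $tk$, and computes $\mathrm{Dec}(\mathrm{Dec}(r_i,uk),tk)=tag_i$, which passes $\mathrm{SecretVerify}$. Removing $I''$ only constrains an honest tracer. Non-imputability holds only in the sense of $\mathrm{ExpNI}$, where the framed party's share never leaves its owner, so your argument should be stated for that experiment.
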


\begin{proof}[Proof sketch]
The proof reduces any successful non-imputability attack to either finding a preimage or a collision for \(H\). 
We replace the verification key of the challenged honest party by an independently sampled hash value \(y^*=H(w^*)\), which is identically distributed to the real key. 
If an adversary nevertheless causes this honest party to be falsely implicated, then the accepted tag must either hash to \(y^*\), giving a preimage of the challenge, or hash to the verification key of another party, giving a collision in \(H\). 
Since \(H\) is quantum-secure preimage resistant and collision resistant, both events occur only with negligible probability. 
The complete proof is given in Appendix~\ref{app:lvt-ni-proof}.
\end{proof}

\subsection{Malicious-but-accountable model}
\label{sub:malicious}

We now consider a malicious-but-accountable reconstruction model. Let $V:=\{tag_i \mid i\in V_{\mathrm{idx}}\}$ and $U:=\{tag_i \mid i\in U_{\mathrm{idx}}\}$ denote the malicious and honest tag sets, respectively. Here $V_{\mathrm{idx}},U_{\mathrm{idx}}\subseteq [m]$ are the corresponding index sets, with $|V_{\mathrm{idx}}|=t<k$. Unlike in the honest-reconstruction setting, malicious parties may submit forged shares, and hence the reconstruction-box abstraction is insufficient. Instead, reconstruction must be modeled explicitly, and tracing is performed with respect to the auxiliary information generated during the reconstruction procedure.

The condition $t<k$ is essential: if $t\ge k$, then the malicious parties can reconstruct and leak the secret using only their own shares, so no nontrivial tracing guarantee can be required in general. We therefore restrict attention to the regime $t<k$.

In this model, the role of tracing is to provide accountability. More precisely, the tracer produces publicly verifiable compliance proofs for parties that behaved honestly during reconstruction, thereby excluding them as suspects and isolating the malicious parties responsible for submitting forged shares. The corresponding notions of traceability and non-imputability are formalized in Sec.~\ref{subsub:secdefmalicious}.

\subsubsection{Security definitions}
\label{subsub:secdefmalicious}

We define an MTA-QSS scheme for securely sharing a quantum state $|\psi\rangle=\sum_s \alpha_s |s\rangle$ among $m$ parties in the malicious-but-accountable setting. The scheme is required to support secret reconstruction from any subset of at least $k$ shares, together with tracing and public verification based on the auxiliary information generated during reconstruction.

Formally, an MTA-QSS scheme is specified by the following five QPT algorithms.
\begin{itemize}
    \item \textbf{Setup}: Given the security parameter $1^\lambda$ and the number of parties $m$, the trusted authority generates a tracing key set $\{tk_i\}_{i=0}^{m-1}$ and a tag set $Tags=\{tag_i\}_{i=0}^{m-1}$.

    \item \textbf{GenSecretShares}: Given a secret quantum state $|\psi\rangle$, the number of parties $m$, the threshold parameter $k$, the tag set $Tags$, and the tracing keys $\{tk_i\}_{i=0}^{m-1}$, the dealer generates the quantum shares $\{|sh_i\rangle\}_{i=0}^{m-1}$ together with the verification keys $\{vk_i\}_{i=0}^{m-1}$, the auxiliary key $uk$, and the check key $ck$.

    \item \textbf{SecretRecover}: Given the shares indexed by a set $P\subseteq [m]$, the algorithm outputs either a reconstructed quantum state together with a tracing transcript $I$, or the failure symbol $\bot$.

    \item \textbf{SecretTrace}: Given the tracing keys $\{tk_i\}_{i=0}^{m-1}$, the auxiliary key $uk$, and the tracing transcript $I$, the tracing algorithm outputs two sets $U'$ and $V'$.

    \item \textbf{SecretVerify}: Given the verification keys $\{vk_i\}_{i=0}^{m-1}$, the auxiliary key $uk$, the check key $ck$, the sets $U'$ and $V'$, and the tracing transcript $I$, the verification algorithm outputs a bit $b\in\{0,1\}$, where $b=1$ if and only if $(U',V')$ satisfies the scheme's verification condition.
\end{itemize}

Correctness for an MTA-QSS scheme is defined in the usual threshold sense: any subset of at least $k$ valid shares reconstructs the secret.

\begin{definition}[Correctness (informal)]
\label{def:correct2}
An MTA-QSS scheme is correct if any subset $P\subseteq [m]$ with $|P|\ge k$, using honestly generated shares, can reconstruct a state that is within trace distance at most $\mathrm{negl}(\lambda)$ of the original secret state with probability at least $1-\mathrm{negl}(\lambda)$. No correctness guarantee is required when some of the input shares are forged or adversarially corrupted. A fully formal definition is given in Appendix~\ref{def:correct2-formal}.
\end{definition}

An MTA-QSS scheme is required to satisfy three security properties: secrecy, traceability, and non-imputability. Among them, secrecy is the basic confidentiality requirement, namely that any set of fewer than $k$ parties obtains no information about the secret. In the malicious-but-accountable setting, we adopt the general secrecy notion of Definition~\ref{def:secrecy}.

\begin{definition}[Tracer secrecy (informal)]
\label{def:tsecrecy2}
An MTA-QSS scheme satisfies tracer secrecy if any subset $M\subseteq [m]$ with $|M|<k$ learns essentially no information about the secret state, even when given access to the tracing keys and the auxiliary key $uk$.  
The fully formal definition is given in Appendix~\ref{def:tsecrecy2-formal}.
\end{definition}

\begin{definition}[Traceability (informal)]
\label{def:trace2}
An MTA-QSS scheme is traceable if the honest parties contributing valid shares can be identified during reconstruction, from which the parties who submitted forged shares can be inferred.  
The full tracing experiment and formal definition are given in Appendix~\ref{def:trace2-formal}.
\end{definition}

\begin{definition}[Non-imputability (informal)]
\label{def:ni2}
An MTA-QSS scheme is non-imputable if the evidence generated during reconstruction cannot falsely attribute a forged share to any honest party contributing a valid share.
The full non-imputability experiment and formal definition are given in Appendix~\ref{def:ni2-formal}.
\end{definition}

\subsubsection{Malicious-traceable and accountable quantum secret sharing protocol}
\label{subsub:maliciouspro}

We next present the malicious-traceable and accountable quantum secret sharing protocol. As in the LVT-QSS construction, it is built from the same underlying cryptographic primitives and relies on the same security assumptions. The protocol consists of five phases: setup, share generation, recovery, tracing, and verification.

The dealer holds an arbitrary secret quantum state $|\psi\rangle = \sum_s \alpha_s |s\rangle \in \mathcal{H}_\lambda.$
We begin by describing the initialization and share-generation phases of the MTA-QSS scheme. Given a secret quantum state, these procedures generate the system parameters and encode the secret into \(m\) distributed quantum shares. The recovery, tracing, and verification phases will be presented subsequently.

\protocol{Initialization and share generation of MTA-QSS}
\textbf{1. Setup}$(1^\lambda,m)$.
On input the security parameter $1^\lambda$ and the number of parties $m$, the setup proceeds as follows:
\begin{itemize}[leftmargin=1.2em,nosep]
    \item Sample $m$ distinct tags $tag_0,\ldots,tag_{m-1}\xleftarrow{\$}\mathbb{F}$ and set
    \[
    Tags := \{tag_0,\ldots,tag_{m-1}\}.
    \]
    \item Choose $m$ pairwise distinct field elements $g_0,\dots,g_{m-1}\in\mathbb{F}$, and compute
    \[
    tk_i := H(g_i), \qquad i=0,\ldots,m-1,
    \]
    where each $tk_i$ is encoded as a $\lambda$-bit string with $\lambda=\lceil \log_2 q\rceil$.
\end{itemize}
\textbf{2. GenSecretShares}$(|\psi\rangle,m,k,Tags,\{tk_i\}_{i=0}^{m-1})$.
On input a secret state $|\psi\rangle$, the parameters $m$ and $k$, the tag set $Tags$, and the tracing key set $\{tk_i\}_{i=0}^{m-1}$, proceed as follows:
\begin{itemize}[leftmargin=1.2em,nosep]
    \item Select $m$ distinct elements $x_0,\dots,x_{m-1}\in\mathbb{F}$.
    \item Sample an auxiliary key $uk\xleftarrow{\$}\mathbb{F}$.
    \item Compute the check key $ck := H\!\left(\bigoplus_{i=0}^{m-1} tk_i\right).$
    \item For each $i\in\{0,\dots,m-1\}$, compute
    \[
    ct_i \leftarrow \mathrm{Enc}(tag_i,ck),\
    vk_i \leftarrow H(tag_i),\text{ and }
    r_i \leftarrow \mathrm{Enc}(ct_i,uk).
    \]
    \item Let $\mathbb{F}=\mathbb{Z}_q$. For any vector $c=(c_0,\dots,c_{k-1})\in\mathbb{F}^k$, define the polynomial $p_c(t)=c_0+c_1 t+\cdots+c_{k-1} t^{k-1}.$
    \item Encode the secret state as
    \[
    |\psi\rangle \mapsto
    \sum_s \alpha_s \sum_{\substack{c\in\mathbb{F}^k\\ c_0=uk,\,c_{k-1}=s}}
    \bigotimes_{i=0}^{m-1} |p_c(x_i), r_i\|tk_i\rangle.
    \]
    \item Initialize register pairs $(R_i,A_i)$ so that each pair contains the corresponding share and auxiliary information. Denote the quantum share in the $i$th register pair by $|sh_i\rangle$.
\end{itemize}
\noindent\rule{\linewidth}{0.5pt}

We next present the secret-recovery phase of the MTA-QSS scheme. For any subset of at least \(k\) shares, the procedure reconstructs the secret quantum state and simultaneously extracts the auxiliary classical information \(I\). This auxiliary output is subsequently used for tracing and verification.

\protocol{Secret recovery in MTA-QSS}
\textbf{3. SecretRecover}$(\{|sh_i\rangle\}_{i\in P})$.
On input a set of shares indexed by $P\subseteq [m]$, proceed as follows:
\begin{itemize}[leftmargin=1.2em, nosep]
    \item If $|P|<k$, output $\perp$. Otherwise, choose any subset $P'\subseteq P$ of size $k$ and relabel the corresponding register pairs as $(R_0,A_0),\dots,(R_{k-1},A_{k-1}).$

    \item Apply the inverse Vandermonde transform $V_k(x_0,\dots,x_{k-1})^{-1}$ to the registers $R_0,\dots,R_{k-1}$.
    \item Cyclically right-shift the $k$ register pairs to obtain
    \[
    ((R_{k-1},A_{k-1}),(R_0,A_0),\dots,(R_{k-2},A_{k-2})).
    \]
    \item Apply the linear transform $V_{k-1}(x_k,\dots,x_{m-1})$ to the registers $R_1,\dots,R_{k-1}$.
    \item Remove the residual dependence by adding $R_0\cdot (x_{k+i-1})^{k-1}$ to $R_i$ for each $i=1,\dots,k-1$.
    \item Measure the ancillary registers $A_0,\dots,A_{k-1}$ to obtain the classical outcomes $r_i\|tk_i$, and define $I:=\{\,r_j\|tk_j\,\}_{j=0}^{k-1}.$
    \item Output the content of $R_0$ as the recovered secret state.
\end{itemize}
\noindent\rule{\linewidth}{0.5pt}

We next describe the tracing and verification phases of the MTA-QSS scheme. Given the auxiliary classical information \(I\) obtained during recovery, the tracing procedure extracts the associated tags and partitions them according to the tracing key. The verification procedure then checks that the recovered tags are consistent with the verification keys, thereby validating the traced evidence.

\protocol{Secret tracing and verification in MTA-QSS}
\textbf{4. SecretTrace}$(\{tk_i\}_{i=0}^{m-1}, uk, I)$.
On input a tracing key set $\{tk_i\}_{i=0}^{m-1}$, the auxiliary key $uk$, and a classical information set $I$, proceed as follows:
\begin{itemize}[leftmargin=1.2em, nosep]
    \item Compute the check key $ck := H(\bigoplus_{i=0}^{m-1} tk_i).$

    \item For each $I_j\in\mathcal{I}$, let $I_j^{(\lambda)}$ denote the last $\lambda$ bits of $I_j$. Partition $I$ into
    \[
    \mathcal{U} := \{\, I_j \mid I_j^{(\lambda)} \in \{tk_i\}_{i=0}^{m-1} \,\}
    \]
    and
    \[
    \mathcal{V} := \{\, I_j \mid I_j^{(\lambda)} \notin \{tk_i\}_{i=0}^{m-1} \,\}.
    \]

    \item For each $I_j\in \mathcal{U}\cup \mathcal{V}$, let $r_j$ denote the substring of $I_j$ obtained by removing the last $\lambda$ bits, and compute
    \[
    ct_j := \mathrm{Dec}(r_j,uk),
    \qquad
    tag_j := \mathrm{Dec}(ct_j,ck).
    \]

    \item Define
    \[
    U' := \{\, tag_j \mid I_j\in \mathcal{U} \,\},
    \qquad
    V' := \{\, tag_j \mid I_j\in \mathcal{V} \,\}.
    \]

    \item Return the pair $(U',V')$.
\end{itemize}
\textbf{5. SecretVerify}$(\{vk_i\}_{i=0}^{m-1}, uk, ck, U', V', I)$.
On input the verification key set $\{vk_i\}_{i=0}^{m-1}$, the auxiliary key $uk$, the check key $ck$, the sets $U'$ and $V'$, and the classical information set $I$, proceed as follows:
\begin{itemize}[leftmargin=1.2em, nosep]
    \item Parse $I=\{I_j\}_{j=0}^{k-1}$ and extract $\{r_j\}_{j=0}^{k-1}$ by removing the last $\lambda$ bits of each $I_j$.

    \item Compute
    \[
    U'' := \{\, \mathrm{Enc}(\mathrm{Enc}(tag_z,ck),uk) \mid tag_z\in U' \,\},
    \]
    \[
    V'' := \{\, \mathrm{Enc}(\mathrm{Enc}(tag_z,ck),uk) \mid tag_z\in V' \,\}.
    \]

    \item If $U'' \cup V'' \neq \{r_j\}_{j=0}^{k-1},$
    output $\perp$.

    \item If there exists $tag_z\in V'$ such that $H(tag_z)\in \{vk_i\}_{i=0}^{m-1},$
    output $\perp$.

    \item If for all $tag_z\in U'$, $H(tag_z)\in \{vk_i\}_{i=0}^{m-1},$
    output $1$; otherwise output $\perp$.
\end{itemize}
\noindent\rule{\linewidth}{0.5pt}

\subsubsection{Proof}
\label{proof:MTA}

\begin{theorem}
\label{th:correct2}
The $(k,m)$-threshold MTA-QSS scheme is correct.
\end{theorem}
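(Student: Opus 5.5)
The plan is to reduce correctness of the MTA-QSS scheme to the correctness of LVT-QSS (Theorem~\ref{th:correct1}). The register $R$ part of the encoding is identical in both constructions: for each basis value $s$, a uniform superposition over polynomials $p_c$ of degree $k-1$ with $c_0=uk$ and $c_{k-1}=s$, evaluated at the distinct points $x_i$. Only the auxiliary registers differ. In LVT-QSS they hold $r_i$; in MTA-QSS they hold $r_i\|tk_i$. These auxiliary values are classical and independent of $s$ and of $c$, so the global state factorizes as
\[
\Bigl(\sum_s\alpha_s\sum_{c}\bigotimes_i|p_c(x_i)\rangle_{R_i}\Bigr)\otimes\bigotimes_i|r_i\|tk_i\rangle_{A_i}.
\]
The first step is to verify this factorization explicitly. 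It implies that the final step of SecretRecover, measuring $A_0,\dots,A_{k-1}$ to obtain $I$, acts on a product-state factor and does not disturb the coherent register $R$. Indeed, the measured registers are in a computational-basis product state, so the measurement is deterministic and the post-measurement state equals the pre-measurement state.

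Next, I follow the recovery steps on $R_0,\dots,R_{k-1}$ for an authorized $P'$ with $|P'|=k$.
\begin{enumerate}
\item Because the $x_i$ are distinct, $V_k(x_0,\dots,x_{k-1})$ is invertible over $\mathbb{Z}_q$. Applying its inverse maps $\bigotimes_{j<k}|p_c(x_j)\rangle$ to the coefficient register $|c_0,\dots,c_{k-1}\rangle=|uk,c_1,\dots,c_{k-2},s\rangle$.
\item The cyclic right shift places $s$ in $R_0$ and $uk,c_1,\dots,c_{k-2}$ in $R_1,\dots,R_{k-1}$.
\item Applying $V_{k-1}(x_k,\dots,x_{m-1})$ and then adding $R_0\cdot x_{k+i-1}^{k-1}$ gives $|s\rangle_{R_0}\otimes\bigotimes_{i}|p_c(x_{k+i-1})\rangle_{R_i}$. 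These are exactly the evaluations of $p_c$ at the points outside $P'$.
\end{enumerate}
The whole state becomes $\sum_s\alpha_s|s\rangle_{R_0}\otimes\sum_{c}\bigotimes_{j\notin P'}|p_c(x_j)\rangle$, up to the shares in $P\setminus P'$, which are left untouched.

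The main obstacle is disentanglement. The residual register depends on $s$ through $c_{k-1}$, so the output on $R_0$ is a pure $|\psi\rangle$ only if the residual state is independent of $s$. I would argue this using the same degrees-of-freedom argument as in the secrecy proof (Theorem~\ref{th:secrecy}). Since $m=2k-1$, the $k-1$ points outside $P'$ number fewer than $k$. Because the coefficients $c_1,\dots,c_{k-2}$ range uniformly, the vector of those $k-1$ evaluations is an affine image whose coset is independent of $s$ once $s$ is absorbed. I expect the cleanest route is to show that the map $(c_1,\dots,c_{k-2})\mapsto(p_c(x_j))_{j\notin P'}$, after subtracting the $s$-term exactly as the correction step does, yields a state independent of $s$. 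Here I would reuse the computation from Appendix~\ref{app:lvt-correctness-proof} verbatim rather than redo it.

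Granting this, $R_0$ holds exactly $|\psi\rangle$, so the trace distance is $0\le\mathrm{negl}(\lambda)$. Recovery succeeds with probability 1 for any $|P|\ge k$ with honest shares, which meets the correctness definition (Definition~\ref{def:correct2}). Finally, I will note that the extracted transcript $I$ equals $\{r_j\|tk_j\}_{j\in P'}$, which is needed later for tracing.
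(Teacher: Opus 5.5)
Your reduction to Theorem~\ref{th:correct1} is exactly the paper's route: the auxiliary registers form a classical product factor, so measuring them does not disturb the share registers. The paper's proof goes no further, and the LVT proof it cites argues only that the post-shift operations do not act on $R_0$. The disentanglement step you flag, however, is a genuine gap, and your sketch for it fails.

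The encoding fixes $c_0=uk$ as a classical value, so only $c_1,\dots,c_{k-2}$ are in superposition. That gives $k-2$ free coefficients against $k-1$ evaluation points outside $P'$. For each $s$, the untouched shares outside $P'$ are supported on the coset
\[
uk\,\mathbf{1}+s\,w+\mathrm{col}(A),\qquad w=(x_j^{k-1})_{j\notin P'},
\]
where $A$ has columns $(x_j^{\ell})_{j\notin P'}$ for $\ell=1,\dots,k-2$. Since $[A\,|\,w]=\mathrm{diag}(x_j)\,V_{k-1}(\{x_j\}_{j\notin P'})$ is invertible whenever these $x_j$ are nonzero, $w\notin\mathrm{col}(A)$. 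Hence the cosets for distinct $s$ are disjoint and the outside states are mutually orthogonal. Together with $uk$, the $k-1$ outside shares hold a which-$s$ record, so no operation on the registers of $P'$ can leave $R_0$ in the coherent state $|\psi\rangle$. The correction step only rebuilds copies of the outside values in $R_1,\dots,R_{k-1}$, and $R_0$ ends in $\sum_s|\alpha_s|^2|s\rangle\langle s|$. Averaging over the random $uk$ does not help, because every branch is already dephased. So you cannot count $uk$ as a free coefficient the way the secrecy appendix does.

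The smallest case makes this explicit. For $k=2$, $m=3$ the encoding is $\sum_s\alpha_s\bigotimes_{i=0}^{2}|uk+s\,x_i\rangle$ (times the classical auxiliary registers). After recovery from $\{0,1\}$, the third party still holds $|uk+s\,x_2\rangle$. With $x_2\neq0$ and $|\psi\rangle=(|0\rangle+|1\rangle)/\sqrt{2}$, the output is at trace distance $1/2$ from $|\psi\rangle$. So the claim you propose to grant is false for the protocol as written. Appendix~\ref{app:lvt-correctness-proof} cannot supply it either, since it contains no such computation.

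Your degrees-of-freedom argument works only if $c_0$ is also placed in coherent superposition, as in the standard $(k,2k-1)$ threshold scheme. Then, for each $s$, the map $(c_0,\dots,c_{k-2})\mapsto(p_c(x_j))_{j\notin P'}$ is a translate of the bijection $V_{k-1}$. The residual becomes $\sum_{v\in\mathbb{F}^{k-1}}|v\rangle|v\rangle$, independent of $s$, and $R_0$ carries $|\psi\rangle$ exactly.
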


\begin{proof}[Proof sketch]
The MTA-QSS scheme uses the same coherent $(k,m)$-threshold encoding and decoding maps on the quantum share registers as the LVT-QSS scheme. The only difference is that MTA-QSS appends additional auxiliary classical information, namely $(r_j\|tk_j)$ instead of $r_j$, to each share. These auxiliary registers are not used by the decoding map on the quantum share registers and therefore do not affect the recovered secret state. Correctness then follows from the correctness of LVT-QSS in Theorem~\ref{th:correct1}. A full proof is given in Appendix~\ref{app:mta-correctness-proof}.
\end{proof}

\begin{theorem}
\label{th:secrecy2}
The $(k,m)$-threshold MTA-QSS scheme satisfies secrecy.
\end{theorem}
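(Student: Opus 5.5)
The plan is to reduce secrecy of MTA-QSS to the secrecy of LVT-QSS (Theorem~\ref{th:secrecy}). The two schemes use the same coherent polynomial encoding on the share registers $R_i$. They differ only in the auxiliary registers: MTA-QSS stores $r_i\|tk_i$ in $A_i$, where $r_i=\mathrm{Enc}(\mathrm{Enc}(tag_i,ck),uk)$ and $tk_i=H(g_i)$, while LVT-QSS stores only $r_i$. I will show that, for any unauthorized set $M$ with $|M|<k$, the reduced state on $\mathrm{View}_M=(R_M,A_M)$ has the form $\sigma_{R_M}\otimes\tau_{A_M}$. Here $\sigma_{R_M}$ is independent of the secret and $\tau_{A_M}$ depends only on classical randomness that is independent of the secret. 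Any distinguisher then has zero advantage, and a fortiori negligible advantage.

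\emph{Step 1: condition on the classical data.} Fix all classical setup and dealer randomness $\omega=(Tags,\{g_i\},uk,\{x_i\},\text{encryption coins})$. Conditioned on $\omega$, each $A_i$ holds a fixed computational basis state $|r_i\|tk_i\rangle$. The encoded state therefore factors as
\[
\sum_s\alpha_s\sum_{c:\,c_0=uk,\,c_{k-1}=s}\bigotimes_i|p_c(x_i)\rangle_{R_i}\;\otimes\;\bigotimes_i|r_i\|tk_i\rangle_{A_i}.
\]
The $A$-registers are thus in a product pure state with the $R$-registers. Moreover, $\omega$ is sampled independently of $|\psi\rangle$, since $ck$ is a function of the $tk_i$ alone.

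\emph{Step 2: the share registers.} I will reuse the counting argument behind Theorem~\ref{th:secrecy}. Tracing out $R_{[m]\setminus M}$ leaves fewer than $k$ evaluations of a polynomial with $c_{k-1}=s$ fixed and the other coefficients summed over. One subtlety is that $c_0=uk$ is fixed once we condition on $\omega$. This is a delicate point, because then only $k-2$ free coefficients remain. I expect this to be the main obstacle. My first attempt will be to keep $uk$ uniform, i.e.\ not condition on it, and treat it as part of the purification environment; this matches how the LVT proof sketch treats $c_0$ as random. With $uk$ uniform, the evaluations seen by $M$ are uniform on $\mathbb{F}^{|M|}$ for every $s$, by invertibility of the relevant Vandermonde submatrix. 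Hence $\rho_{R_M}$ is independent of $s$, and the cross terms between different $s$ vanish after tracing out the complementary registers.

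The remaining issue is that $A_M$ contains $r_i=\mathrm{Enc}(ct_i,uk)$, which depends on $uk$. That dependence could correlate $R_M$ with $A_M$. I will handle it with a hybrid argument. In the first hybrid, replace $r_i$ by encryptions of $ct_i$ under an independent key; by IND-CPA security of $\Pi_{\mathsf{SKE}}$, the change in the joint view is at most $\mathrm{Adv}^{\mathrm{IND\text{-}CPA}}(\lambda)$ per ciphertext. After this change $A_M$ is independent of $uk$, and Step~2 applies.

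\emph{Step 3: conclude.} The additional strings $tk_i=H(g_i)$ are functions of setup data chosen independently of the secret. They are appended in product with everything else, so they cannot help, in the same way the $r_i$ in LVT-QSS do not. Combining the above, the reduced state in the final hybrid is identical for all secrets. A union bound over the $|M|<k$ hybrids gives a distinguishing advantage of at most $k\cdot\mathrm{negl}(\lambda)$, which is negligible. This establishes secrecy in the sense of Definition~\ref{def:secrecy}.
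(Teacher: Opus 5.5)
Your Steps 1 and 3 match the paper's proof. The paper reuses the LVT counting argument for $R_M$, notes that $A_M$ holds only classical, secret-independent data $r_i\|tk_i$, and concludes that the joint state on $R_M\otimes A_M$ is exactly secret-independent. You are right that this skips a point. Both $R_M$ and $A_M$ are functions of $uk$, and the paper tacitly treats them as uncorrelated: its LVT proof writes $\bigotimes_{i\in M}|r_i\rangle\langle r_i|$ as a fixed product factor while averaging over $c_0=uk$. However, your Step 2 hybrid does not close this gap, for two reasons.

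First, the formal secrecy definition in the appendix is a trace-distance bound. An IND-CPA reduction only controls QPT distinguishers, so at best you would get computational secrecy. Your stated plan, an exact factorization $\sigma_{R_M}\otimes\tau_{A_M}$ with zero advantage, is abandoned midway.

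Second, and more seriously, the reduction cannot be built. It must give the distinguisher $R_M$ together with the challenge ciphertexts, but $R_M$ depends on the unknown challenge key $uk$. For $|M|\le k-2$ with nonzero evaluation points the shares are uniform even for fixed $uk$, so the problem sits exactly at $|M|=k-1$. Take $|M|=k-1$. The polynomial $x^{k-1}-\prod_{i\in M}(x-x_i)$ has degree at most $k-2$ and constant term $a:=-\prod_{i\in M}(-x_i)$. Hence every basis value on $R_M$ has the form $(uk+as)\mathbf{1}+w$, with $w$ in the span $W$ of the vectors $(x_i^j)_{i\in M}$, $1\le j\le k-2$. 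The $(k-1)\times(k-1)$ Vandermonde matrix on these points is invertible, so $\mathbf{1}\notin W$, and measuring $R_M$ therefore reveals $z=uk+as$. A reduction that does not know $uk$ cannot simulate $z$ consistently with $r_i=\mathrm{Enc}(ct_i,uk)$.

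This is not just a limitation of the proof technique. IND-CPA does not exclude schemes in which decryption under a wrong key returns $\bot$, and the paper's own description of $\mathrm{Dec}$ allows this. Then $k-1$ colluders trying to distinguish basis secrets $|s_0\rangle$ and $|s_1\rangle$ can compute $z$ and test the candidate keys $z-as_0$ and $z-as_1$ on any $r_i$ they hold. This works whenever $a\neq 0$, i.e.\ no $x_i$ with $i\in M$ is zero. So the property your Step 2 needs, that $A_M$ hides $uk$ even given $uk+as$, does not follow from the stated assumptions.

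To actually close the argument you need either that the distribution of $A_M$ is (statistically) independent of $uk$, or a change to the encoding so that $c_0$ is not tied to $uk$. Neither your hybrid nor the paper's one-line assertion supplies this.
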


\begin{proof}[Proof sketch]
For any unauthorized set $M\subset[m]$ with $|M|<k$, the MTA-QSS scheme uses the same quantum encoding on the share registers as the LVT-QSS scheme. Hence, the reduced state on the quantum share registers held by $M$ is independent of
the secret, by Theorem~\ref{th:secrecy}. The additional auxiliary registers in MTA-QSS contain only classical information generated from protocol randomness, such as tags and tracing-related data, and are independent of the secret state. Therefore, the joint view of any unauthorized set is independent of the secret. A full proof is given in Appendix~\ref{app:mta-secrecy-proof}.
\end{proof}

\begin{theorem}
\label{th:tracer-secrecy2}
The $(k,m)$-threshold MTA-QSS scheme satisfies tracer secrecy.
\end{theorem}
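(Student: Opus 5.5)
The approach mirrors the proof of Theorem~\ref{th:tsecrecy}, with a reduction to Theorem~\ref{th:secrecy2}. Fix an unauthorized set $M\subseteq[m]$ with $|M|<k$. Its view is the reduced state on registers $(R_i,A_i)_{i\in M}$. The tracer additionally gets classical registers $T$ containing $\{tk_i\}_{i=0}^{m-1}$ and $uk$. The first step is to write the encoded state as a classical mixture over the protocol randomness $\omega=(Tags,\{g_i\},\{x_i\},uk,\text{encryption coins})$. Conditioned on $\omega$, each auxiliary register $A_i$ holds the fixed classical string $r_i\|tk_i$. This string is a deterministic function of $\omega$ and does not depend on the secret basis index $s$.

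Hence, conditioned on $\omega$, the global state factors as
\[
\Bigl(\sum_s \alpha_s \sum_{c:\,c_0=uk,\,c_{k-1}=s} \textstyle\bigotimes_i |p_c(x_i)\rangle_{R_i}\Bigr)\otimes \bigotimes_i |r_i\|tk_i\rangle_{A_i}\otimes |\{tk_i\},uk\rangle_T .
\]
The second step handles the fact that $uk=c_0$ is fixed here and given to the adversary. So the share registers $R_M$ must be shown to remain independent of $s$ even conditioned on $uk$. I would prove this directly. With $c_0$ and $c_{k-1}$ fixed, the $k-2$ middle coefficients are free. For $|M|\le k-2$, the evaluations $(p_c(x_i))_{i\in M}$ are uniform: subtracting the known $uk$ leaves a map from $(c_1,\dots,c_{k-2})$ given by a surjective Vandermonde-type minor, shifted by $s\,x_i^{k-1}$. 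The shift is absorbed by the uniform randomness. Tracing out $R_{[m]\setminus M}$ then yields $\rho_{R_M}=$ the maximally mixed state on the image, independent of $\{\alpha_s\}$. Off-diagonal coherences between different $s$ vanish because the complementary registers record orthogonal evaluation patterns.

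The main obstacle is the boundary case $|M|=k-1$. There $k-1$ evaluations plus knowledge of $c_0=uk$ give $k$ linear constraints, which determine $c$ and hence $s$. So revealing $uk$ would seemingly break tracer secrecy. I would first check how the paper's formal Definition~\ref{def:tsecrecy2-formal} handles this. Either it grants $uk$ only with the tracing keys after reconstruction, or the intended argument tacitly treats $uk$ as independent of $R_M$, as in the sketch of Theorem~\ref{th:tsecrecy}. My plan is to follow the latter line. I would argue that $T$ and $A_M$ are functions of $\omega$ only, and invoke Theorem~\ref{th:secrecy2} for $R_M$ given $\omega$. I would flag explicitly that this step needs $R_M$'s independence from $s$ to hold jointly with $uk$, and that this holds rigorously for $|M|\le k-2$.

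Finally, the conditional states would be averaged over $\omega$. Since each conditional joint state on $R_M A_M T$ is the same for any two secrets $|\psi_0\rangle,|\psi_1\rangle$, the mixtures coincide. By the Helstrom theorem, the distinguishing advantage is then zero, hence negligible.
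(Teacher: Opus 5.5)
Your treatment of $|M|\le k-2$ is sound. Conditioned on $\omega$, the registers $A_M$ and $T$ hold fixed classical strings, and the $k-2$ free middle coefficients make $R_M$ uniform on an $s$-independent set. (If some $x_i=0$ the minor is not surjective, but that coordinate is just the constant $uk$, so the conclusion stands.)

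The gap is in how you close the case $|M|=k-1$, and it cannot be closed. Invoking Theorem~\ref{th:secrecy2} ``for $R_M$ given $\omega$'' is not legitimate. The secrecy argument of Theorem~\ref{th:secrecy} gets independence of $R_M$ only by averaging over $c_0=uk$ as one of the $k-1$ free coefficients, so conditioned on $uk$ it says nothing. Your first reading is not available either: the formal definition in Appendix~\ref{def:tsecrecy2-formal} puts $Y:=uk$ into the view unconditionally, with no reconstruction step, for every $|M|<k$.

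Your own count then gives an attack. Take $|M|=k-1$ with $x_i\neq 0$ for all $i\in M$ (possible, since at most one evaluation point is zero), and take $|\psi_0\rangle=|0\rangle$, $|\psi_1\rangle=|1\rangle$. The nodes $0$ and $\{x_i\}_{i\in M}$ are $k$ distinct public points, so the map $c\mapsto\bigl(c_0,(p_c(x_i))_{i\in M}\bigr)$ is injective. Hence, given $Y=uk$, the reduced state on $R_M$ for $s=0$ and for $s=1$ lives on the spans of disjoint sets of basis vectors. This gives $\frac12\|\rho^{(0)}_{\mathrm{View}_M,T,Y}-\rho^{(1)}_{\mathrm{View}_M,T,Y}\|_1=1$, so the statement as formalized is false for $|M|=k-1$.

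The paper's proof makes exactly the inference you were suspicious of. It cites secrecy for the marginal on $R_M$, notes that $A_M$, $T$ and $Y$ are generated independently of the secret, and concludes that the conditional state on $\mathrm{View}_M\otimes Y$ given $tk$ is independent of $b$. Marginal independence of $R_M$ and of $Y$ from the secret does not imply joint independence, because $R_M$ and $Y$ are correlated through $c_0=uk$; your interpolation count exposes precisely this.

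So the honest end point of your analysis is not a proof but a split result: perfect tracer secrecy for $|M|\le k-2$, and an explicit distinguisher for $|M|=k-1$. A true theorem requires changing one of two things:
\begin{itemize}
\item the definition, by removing $uk$ from the tracer's view; or
\item the scheme, by taking $c_0$ uniform and independent of $uk$. This restores $k-1$ free coefficients, after which your conditioning-on-$\omega$ argument goes through for all $|M|<k$.
\end{itemize}
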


\begin{proof}[Proof sketch]
For any unauthorized set $M\subseteq[m]$ with $|M|<k$, the quantum share registers held by $M$ are distributed exactly as in the LVT-QSS scheme and are therefore independent of the secret. The additional information available in
the tracer-secrecy experiment consists of auxiliary classical registers, the tracing keys, and the auxiliary key $uk$. These values are generated from protocol randomness and tracing data, independently of the input secret state.
Thus, even conditioned on the tracing keys, the joint state of the adversary's view and the tracer-related registers is independent of the secret. The tracer-secrecy advantage is therefore negligible. A full proof is given in Appendix~\ref{app:mta-tracer-secrecy-proof}.
\end{proof}

\begin{theorem}[Informal]
\label{th:trace2}
The MTA-QSS scheme is traceable against any QPT adversary: during reconstruction, the honest parties contributing valid shares can be identified except with negligible probability, from which the parties who submitted forged shares can be inferred.
\end{theorem}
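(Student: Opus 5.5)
We will prove Theorem~\ref{th:trace2} by the same structure as the traceability proof of LVT-QSS (Theorem~\ref{th:lvt-trace}). A tracing attack succeeds in one of two ways. In the first, honest shares are submitted but the pair $(U',V')$ produced by $\mathrm{SecretTrace}$ is rejected by $\mathrm{SecretVerify}$. In the second, $\mathrm{SecretVerify}$ outputs $1$ on a pair in which $U'$ is not exactly the tag set $U$ of the honest contributors. We bound each case separately and finish with a union bound.

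\emph{Completeness.} Fix the honest execution. By Theorem~\ref{th:correct2}, the share registers decode correctly, and the auxiliary registers $A_j$ of an honest contributor $j$ contain exactly $r_j\|tk_j$. The last $\lambda$ bits of each such entry therefore lie in $\{tk_i\}$, so $I_j\in\mathcal{U}$. By correctness of $\Pi_{\mathsf{SKE}}$, decrypting with $uk$ and then $ck$ returns $tag_j$, and $H(tag_j)=vk_j$. For the first check of $\mathrm{SecretVerify}$ to pass, re-encryption must reproduce $r_j$. Since $\mathrm{Enc}$ may be randomized, we will either assume the encryption randomness is derandomized (fixed by the key and message) or compare through decryption instead. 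This shows that honest entries never cause rejection, and that each honest tag lands in $U'$.

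\emph{Soundness of the partition.} Consider an entry $I_j$ coming from a malicious party in $V_{\mathrm{idx}}$. The adversary holds only its own $tk_i$ for $i\in V_{\mathrm{idx}}$ and its own $r_i$, because $ck$ depends on all $m$ tracing keys and its $t<k$ shares reveal nothing about them. There are two subcases.
\begin{enumerate}[label=(\roman*)]
\item The forged entry carries an honest party's $tk_{i'}$ as suffix, so that it is classified in $\mathcal{U}$. Each $tk_{i'}=H(g_{i'})$ for an unknown $g_{i'}$. Guessing $tk_{i'}$ succeeds with probability at most $\mathrm{poly}/q$, or else one must invert $H$, which contradicts preimage resistance.
\item The forged entry reuses the adversary's own $tk_i$, or carries a $\mathcal{U}$-valid suffix with a fabricated $r$. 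For its decryption to yield a tag $tag$ with $H(tag)\in\{vk_i\}$, either $tag$ is a genuine tag of some party, which happens only through a sparse-hit event of probability at most $|S|\alpha_{\max}(\lambda)$ over polynomially many trials, or $tag$ is a different preimage of some $vk_i$, which gives a collision in $H$.
\end{enumerate}
Symmetrically, an honest tag cannot be moved into $V'$ without triggering the $\perp$ check $H(tag_z)\in\{vk_i\}$. Any malicious entry placed into $\mathcal{V}$ is therefore harmlessly excluded, and $U'$ is exactly $U$, from which $V=Tags\setminus U$ among the submitters can be inferred.

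\emph{Main obstacle.} The hard step is subcase (i) together with the observation that malicious parties legitimately hold their own $tk_i$. A malicious party can simply submit its correct $r_i\|tk_i$ paired with a forged quantum share, and it is then classified as honest. We will therefore need to read the traceability experiment in Appendix~\ref{def:trace2-formal} as concerning forgeries of the auxiliary tracing data. That is, "valid share" must be taken to include an authentic auxiliary register, and the claim becomes that forging or transplanting auxiliary data is detected. We will state this restriction explicitly.

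The reduction will then build, from any adversary that succeeds with non-negligible probability, either a preimage finder or a collision finder for $H$, or else a sparse-hit event. The total advantage is bounded by
\[
\mathrm{Adv}^{\mathrm{col}}_H+\mathrm{Adv}^{\mathrm{pre}}_H+\mathrm{poly}(\lambda)\,\alpha_{\max}(\lambda),
\]
which is negligible.
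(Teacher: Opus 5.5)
\paragraph*{Verdict.} There is a genuine gap in your proposal, and the paper's own proof has the same hole.

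\paragraph*{The statement as formalized is false.} Your decomposition is the paper's: the rejection event $W_1=\{ok=0\}$ versus acceptance with $U'\neq U$, with accepted out-of-$Tags$ tags reduced to a collision in $H$. The obstacle you flag is genuine and fatal to the statement as formalized in Appendix~\ref{def:trace2-formal}. The set $I$ is obtained by measuring only the auxiliary registers, and the partition in $\mathrm{SecretTrace}$ looks only at the $\lambda$-bit suffix. So a corrupted party that leaves $r_i\|tk_i$ intact, whatever it does to $R_i$, is placed in $\mathcal{U}$. Then $U'\supseteq U\cup\{tag_i\}$, and the experiment outputs $1$ whether or not $\mathrm{SecretVerify}$ accepts. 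Returning the corrupted shares unmodified therefore wins with probability $1$ whenever $t\ge 1$. The paper's proof contains no idea that avoids this. It bounds $\Pr[W_1]$ and $\Pr[W_2^{\mathrm{wrong}}]$ by appealing to ``honestly generated'' auxiliary data, which the corrupted entries of $I$ need not be. Even when they are honestly generated, the recovered set is $U\cup V$, not $U$. So neither argument establishes the theorem as stated, and your restriction replaces the theorem rather than proving it.

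\paragraph*{Your restricted claim does not follow from your steps either.} First, the experiment counts $ok=0$ as a win for $\mathcal{A}$, yet you treat triggering $\perp$ as ``harmless exclusion,'' and your completeness step covers only honest entries. Concretely, a corrupted party can submit $r_i\|z$ with a fresh $z\notin\{tk_j\}$. That entry is placed in $\mathcal{V}$ and decrypts to $tag_i$ with $H(tag_i)=vk_i$, so $\mathrm{SecretVerify}$ outputs $\perp$ with overwhelming probability. (With randomized $\mathrm{Enc}$ the re-encryption check already fails; your remark on this is correct, and the paper passes over it.) You would therefore also have to change the winning condition, e.g., to a soundness notion in which $\mathcal{A}$ wins only when $\mathrm{SecretVerify}$ accepts a wrong set.

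Second, subcase (ii) is unsound. Definition~\ref{def:sparsehit} bounds a probability over a random ciphertext and says nothing about replayed ones. For $t\ge 2$, corrupted parties $i,i'$ can submit $r_{i'}\|tk_i$ and $r_i\|tk_{i'}$. Both entries land in $\mathcal{U}$ and decrypt to genuine tags with certainty. Under the deterministic $\mathrm{Enc}$ your completeness step assumes, they also pass verification, so transplanting goes undetected.

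Third, in subcase (i), the unpredictability of $tk_{i'}=H(g_{i'})$ does not come from preimage resistance, since the adversary needs the image, not a preimage. It would have to come from $g_{i'}$ being secret and high-entropy. $\mathrm{Setup}$ merely chooses distinct $g_i$ and does not guarantee this.
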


\begin{proof}[Proof sketch]
The proof follows the formal tracing experiment in Appendix~\ref{def:trace2-formal}. Let $U$ be the tag set corresponding to the honestly generated shares used in reconstruction, and let $(U',V')\leftarrow \mathrm{SecretTrace}(tk,uk,I)$. The adversary can win only if one of the following events occurs: the tracing evidence fails verification; verification accepts a set $U'$ containing a forged tag; or verification accepts a tag set $U'\subseteq Tags$ with $U'\neq U$.

The first event occurs only with negligible probability by the consistency of $\mathrm{SecretTrace}$ and $\mathrm{SecretVerify}$ on honestly generated reconstruction evidence. The second event yields a collision in the hash function used to derive the verification keys, and hence occurs only with
negligible probability. The third event is ruled out because the reconstruction evidence $I$, together with $uk$ and $ck$, uniquely determines the tags recovered by $\mathrm{SecretTrace}$ and checked by $\mathrm{SecretVerify}$. Therefore, every possible winning event has negligible probability, and the adversary's traceability advantage is negligible. A full
proof is given in Appendix~\ref{app:mta-tracer2-proof}.
\end{proof}

\begin{theorem}[Informal]
\label{th:ni2}
The MTA-QSS scheme is non-imputable against any QPT adversary: the evidence generated during reconstruction cannot falsely attribute a forged share to any honest party contributing a valid share, except with negligible probability.
\end{theorem}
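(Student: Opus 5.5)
We reduce any successful non-imputability attack to a break of the hash function $H$, in the same way as the proof of Theorem~\ref{th:lvt-ni}. We use the formal non-imputability experiment of Appendix~\ref{def:ni2-formal}.

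\textbf{Setup of the experiment.} The challenger runs $\mathrm{Setup}$ and $\mathrm{GenSecretShares}$. The adversary controls the malicious index set $V_{\mathrm{idx}}$ with $|V_{\mathrm{idx}}|=t<k$ and may submit arbitrary forged share registers during $\mathrm{SecretRecover}$. It finally outputs (or induces) a transcript $I$ and sets $(U',V')$ such that $\mathrm{SecretVerify}$ outputs $1$. The adversary wins if some honest party $i^*\in U_{\mathrm{idx}}$ that contributed a valid share is attributed a forged share, i.e.\ $tag_{i^*}$ is placed in $V'$, or equivalently $tag_{i^*}$ is missing from an accepted $U'$.

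\textbf{Case 1: $tag_{i^*}\in V'$.} The check in $\mathrm{SecretVerify}$ rejects whenever some $tag_z\in V'$ satisfies $H(tag_z)\in\{vk_i\}$. Since $H(tag_{i^*})=vk_{i^*}$, acceptance is impossible. So it suffices to consider an accepted $V'$ containing some value $\tau$ that plays the role of $tag_{i^*}$. Two sub-cases arise.
\begin{itemize}
\item If $\tau\neq tag_{i^*}$ but $H(\tau)=vk_{i^*}$, then $(\tau,tag_{i^*})$ is a collision for $H$.
\item If $\tau= tag_{i^*}$, verification rejects.
\end{itemize}

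\textbf{Case 2: the honest share's entry $r_{i^*}\|tk_{i^*}$ appears in $I$ but is not accepted in $U'$.} $\mathrm{SecretTrace}$ classifies an entry by whether its last $\lambda$ bits lie in $\{tk_i\}$, and the honest entry carries $tk_{i^*}=H(g_{i^*})$. An honest entry is therefore always routed to $\mathcal U$. It is decrypted by $\mathrm{Dec}(\mathrm{Dec}(\cdot,uk),ck)$ to $tag_{i^*}$, by correctness of $\Pi_{\mathsf{SKE}}$. The consistency check $U''\cup V''=\{r_j\}$ then ties the accepted sets to the actual ciphertexts in $I$.

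\textbf{Reduction.} We embed a preimage challenge $y^*=H(w^*)$ as $vk_{i^*}$; this has the same distribution as the real key since $tag_{i^*}$ is uniform. We guess $i^*$ with probability loss $1/m$. Any accepted evidence that reassigns $i^*$ must contain a value hashing to $y^*$, which is a preimage of $y^*$, or a value hashing to another party's key, which is a collision. Quantum-secure preimage resistance and collision resistance then bound the advantage by $m(\mathrm{Adv}^{\mathrm{pre}}+\mathrm{Adv}^{\mathrm{cr}})$ plus the decryption-error term, which is negligible.

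\textbf{Main obstacle.} The delicate step is handling forged entries whose last $\lambda$ bits copy $tk_{i^*}$. The tracing keys are carried inside the shares, and the adversary may have learned them, for instance from its own share registers. Such a forged entry lands in $\mathcal U$, but its ciphertext must still decrypt to a tag passing the hash check. Otherwise it produces a second, distinct $r$-value decrypting to a registered tag. I would handle this via the sparse-hit property: the probability that an adversarially chosen $r$ decrypts into the polynomial-size set $Tags$ is at most $m\,\alpha_{\max}(\lambda)$ per trial. A union bound over polynomially many trials gives a negligible total, as in Theorem~\ref{th:lvt-trace}. This ensures that the honest party's own valid entry, and not a forgery, determines its classification.
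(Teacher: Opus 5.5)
The first observation in your Case 1 is the paper's entire proof, and it already settles the statement. In $\mathrm{ExpNI}_{\mathcal{A},\mathrm{MTA\text{-}QSS}}$ the adversary wins only if $tag^*\in V^*$ and $\mathrm{SecretVerify}$ accepts. But $\mathrm{SecretVerify}$ outputs $\perp$ whenever some $tag_z\in V^*$ has $H(tag_z)\in\{vk_i\}_{i=0}^{m-1}$, and this holds for $tag_z=tag^*$ because $vk_{i^*}=H(tag^*)$. So the framing probability is exactly $0$. No hash assumption, no reduction and no loss factor are needed. You should stop there: the rest of the proposal is unnecessary, and several parts of it are wrong.

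(i) You are not using the formal experiment. In $\mathrm{ExpNI}$ nobody submits forged shares, since $I$ comes from honest recovery on $P\ni i^*$. The adversary announces $i^*$ in step~1, so there is no $1/m$ guessing. The adversary outputs $(U^*,V^*)$ itself, so how $\mathrm{SecretTrace}$ would route entries is irrelevant. The win condition is $tag^*\in V^*$ only; ``$tag^*$ missing from an accepted $U'$'' is not the experiment's win condition. As a result, your ``main obstacle'' about forged entries, and the sparse-hit union bound you use for it, never arise.

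(ii) The collision sub-case is vacuous. A $\tau\in V'$ with $H(\tau)=vk_{i^*}$ is rejected by the very same check, so there is nothing to extract.

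(iii) The preimage embedding borrowed from Theorem~\ref{th:lvt-ni} does not transfer. Unlike in LVT-QSS, the adversary here is given $I$, which contains $r_{i^*}\|tk_{i^*}$ with $r_{i^*}=\mathrm{Enc}(\mathrm{Enc}(tag^*,ck),uk)$. It also receives all $tk_i$, hence $ck$. As the experiment is written, it further receives the shares of $[m]\setminus\{i^*\}$, which are at least $k$ shares for $k\ge2$, from which $uk=c_0$ can be read off. So the adversary can decrypt $tag^*$ itself. Your hybrid that replaces $vk_{i^*}$ by an independent $H(w^*)$ is therefore distinguishable, and making it consistent would require encrypting the unknown $w^*$ into $r_{i^*}$. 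More fundamentally, the adversary may know a preimage of $vk_{i^*}$, so non-imputability in MTA-QSS cannot rest on one-wayness of $H$. It rests entirely on the rule that the blamed set may contain no registered tag, which is why the paper's bound is $0$ rather than a sum of hash advantages.
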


\begin{proof}[Proof sketch]
The proof follows the non-imputability experiment in
Appendix~\ref{def:ni2-formal}. Let $i^*$ be the designated honest participant that the adversary attempts to frame, and let $tag^*$ be its tag. Since $i^*$ is honest, $tag^*$ is an honestly generated tag and its hash is contained in the verification-key set.

For the adversary to win, it must output evidence that passes
$\mathrm{SecretVerify}$ while placing $tag^*$ in the set of blamed tags. This is impossible under the verification rule: any tag accepted as belonging to the blamed set must fail the honest-tag verification condition, whereas $tag^*$
satisfies it. Hence the framing event cannot occur, and in particular occurs with negligible probability. A full proof is given in Appendix~\ref{app:mta-ni2-proof}.
\end{proof}

\subsection{Performance comparison}
\label{sub:per}

\begin{figure*}[!t]
    \centering

    \begin{subfigure}[t]{0.32\textwidth}
        \centering
        \includegraphics[width=\linewidth]{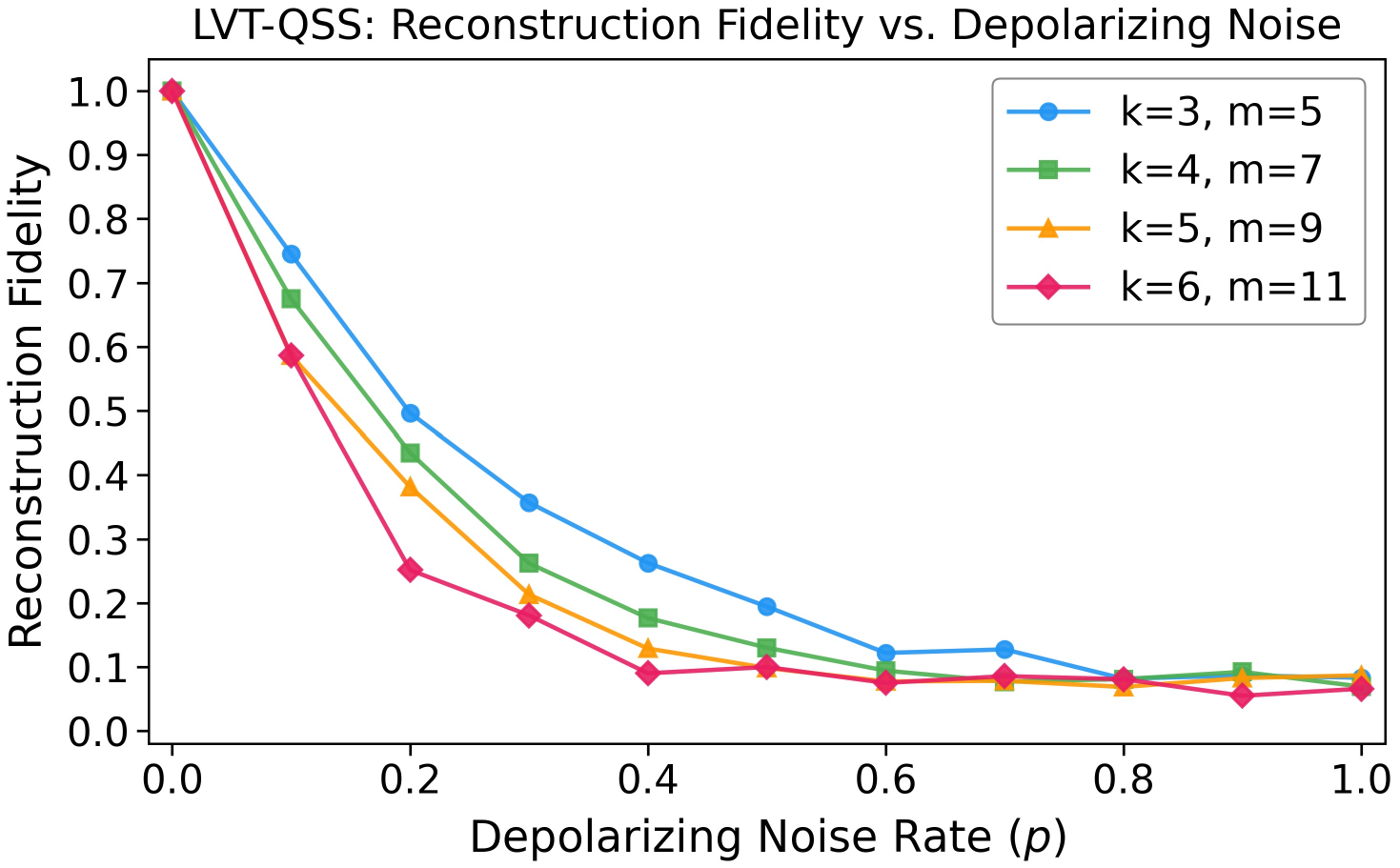}
        \caption{Reconstruction fidelity.}
        \label{fig:fidelity}
    \end{subfigure}
    \hfill
    \begin{subfigure}[t]{0.32\textwidth}
        \centering
        \includegraphics[width=\linewidth]{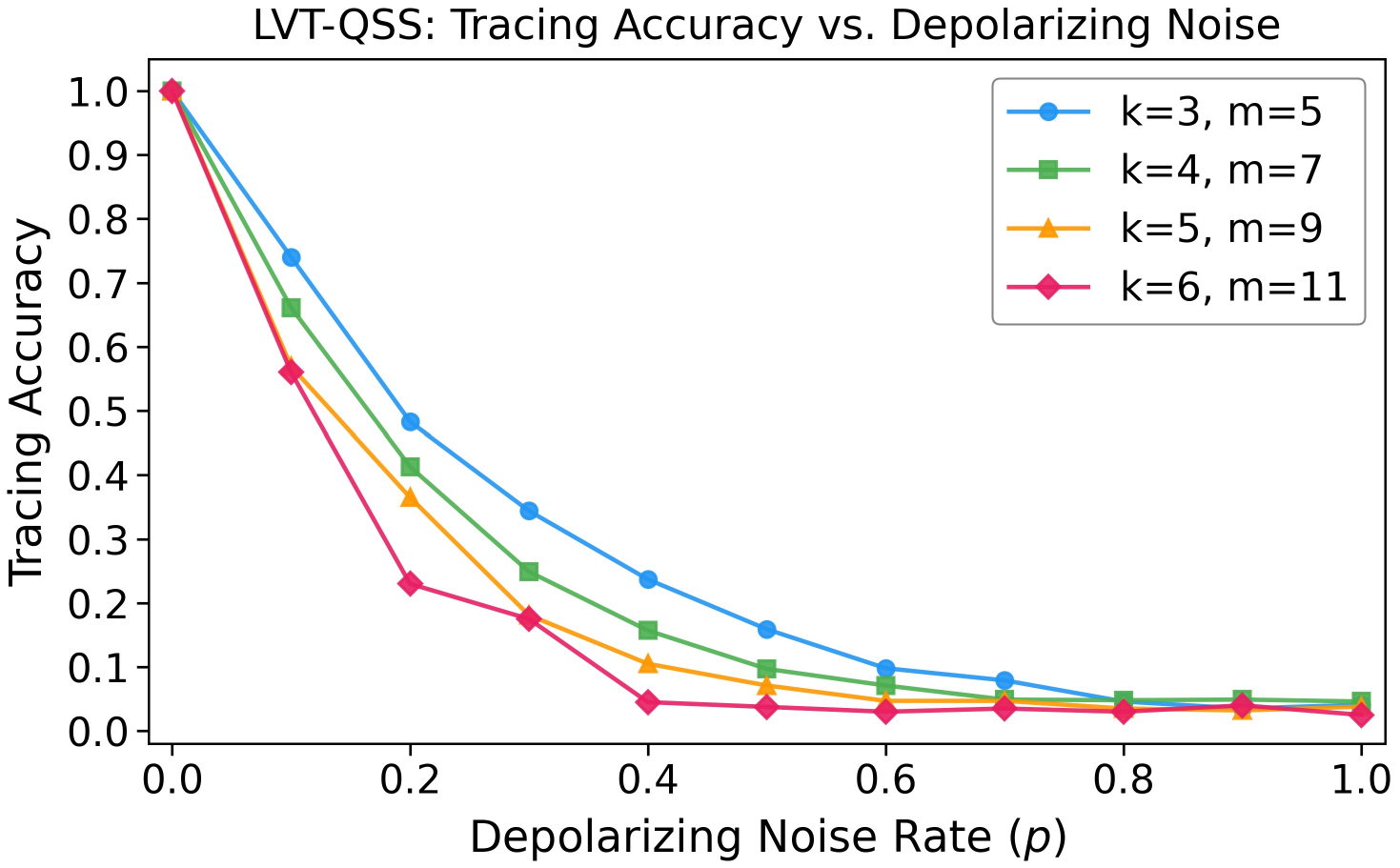}
        \caption{Tracing accuracy.}
        \label{fig:tracing}
    \end{subfigure}
    \hfill
    \begin{subfigure}[t]{0.32\textwidth}
        \centering
        \includegraphics[width=\linewidth]{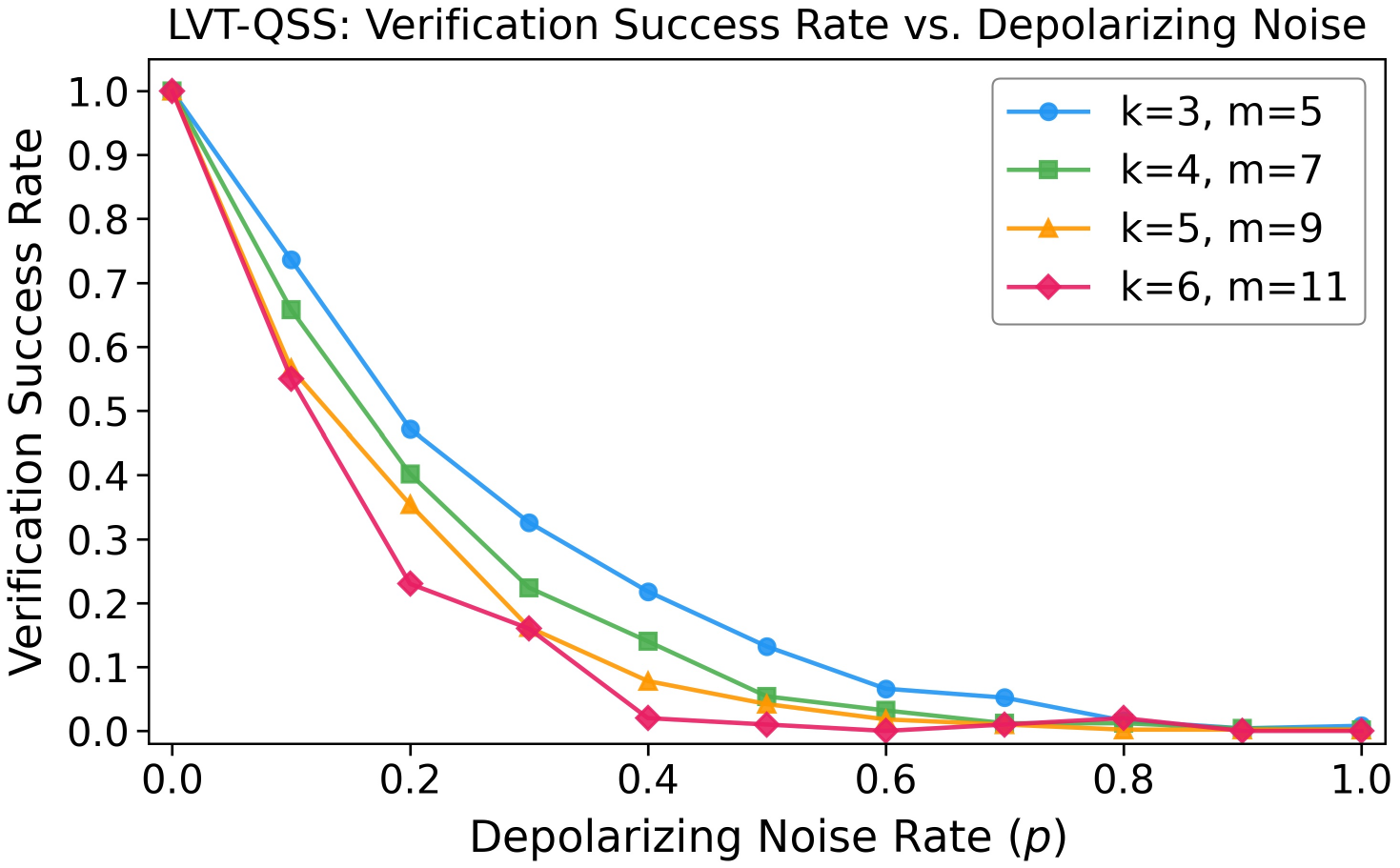}
        \caption{Verification success rate.}
        \label{fig:verification}
    \end{subfigure}

    \captionsetup{justification=raggedright,singlelinecheck=false}
    \caption{
    Noise resilience of LVT-QSS under per-share depolarizing noise for different parameter sets \((k,m)\).
    Panels show (a) reconstruction fidelity, (b) tracing accuracy, and (c) verification success rate, with \(f=2\) leakers for tracing and verification.
    }
    \label{fig:lvt_noise_comparison}
\end{figure*}

To assess the robustness of the LVT-QSS protocol under noise, we adopt a model in which each recovery share independently undergoes depolarizing noise with probability $p$. Since secret recovery in LVT-QSS relies on Lagrange interpolation, we employ a conservative approximation: if any single recovery share is corrupted, the recovery round is considered failed, and the output state is approximated as a random guess. The expected reconstruction fidelity is therefore
\[
F(p) = (1-p)^k + \frac{1-(1-p)^k}{d},
\]
where the second term represents the baseline fidelity after interpolation failure, and $d$ denotes the dimension of the secret Hilbert space $\mathcal{H}_\lambda$.

We evaluate three performance metrics for four representative parameter sets $(k,m)=(3,5),(4,7),(5,9),(6,11)$: the reconstruction fidelity $F$, the tracing accuracy (for $f=2$ leakers), and the verification success rate (for $f=2$ leakers). Each data point is averaged over 500 independent Monte Carlo trials.

As shown in Fig.~\ref{fig:lvt_noise_comparison}, all metrics decrease monotonically with increasing noise probability $p$, with smaller access structures consistently exhibiting stronger robustness. The reconstruction fidelity and tracing accuracy exhibit highly similar decay profiles, indicating that both are primarily governed by the same recovery-failure mechanism: when any share is corrupted by noise, Lagrange interpolation fails, and both reconstruction quality and tracing capability degrade simultaneously. In contrast, the verification success rate decays more rapidly because it corresponds to a stricter joint success condition: not only must all leakers' tags be traced correctly, but the corresponding results must also pass a consistency check. Overall, LVT-QSS maintains relatively high reconstruction fidelity and tracing accuracy in low to intermediate noise, whereas verification success becomes the limiting factor at an earlier stage.

\begin{figure*}[!t]
    \centering

    \begin{subfigure}[t]{0.48\textwidth}
        \centering
        \includegraphics[width=\linewidth]{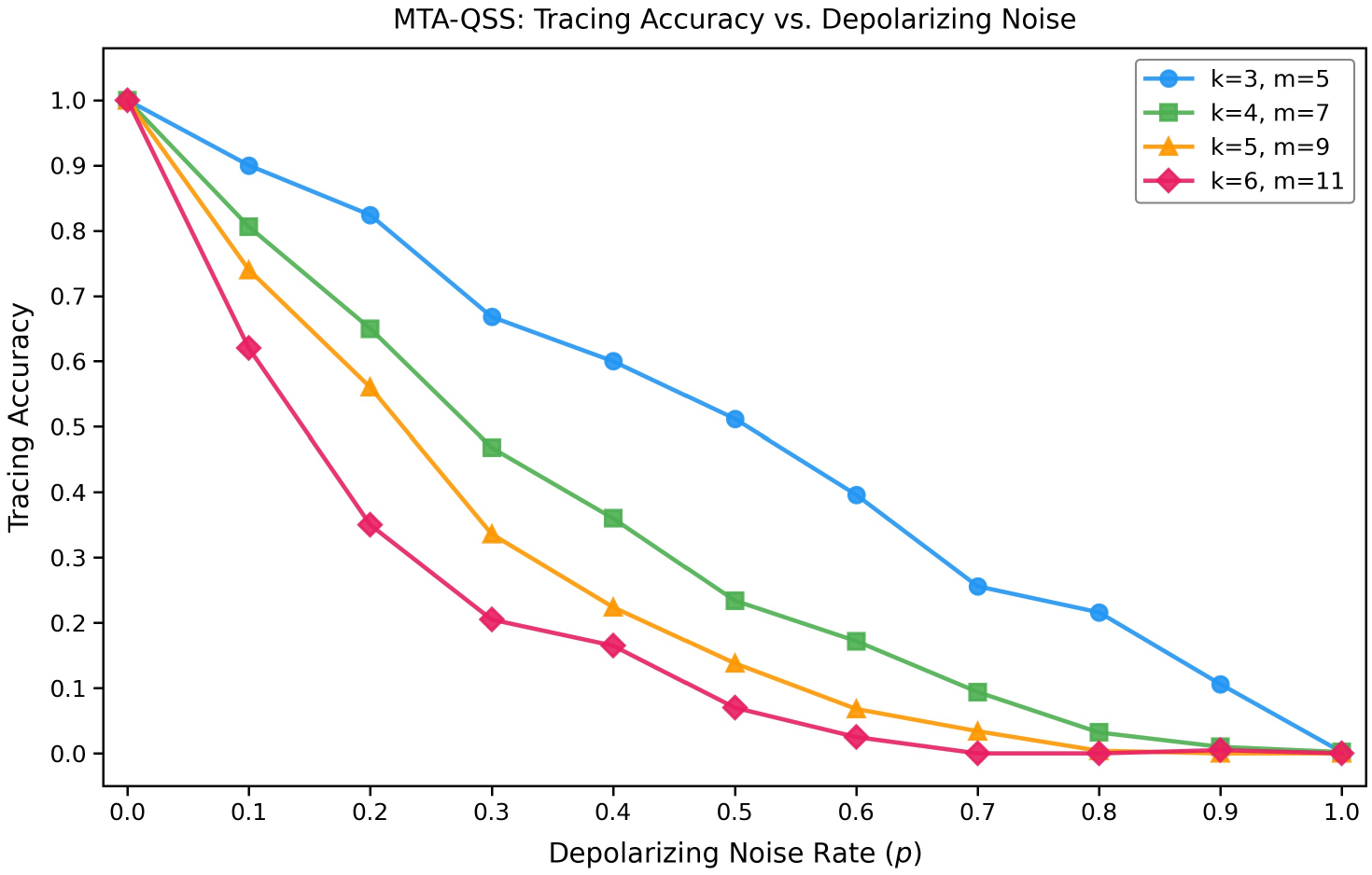}
        \caption{Tracing accuracy under depolarizing noise.}
        \label{fig:tracing2}
    \end{subfigure}
    \hfill
    \begin{subfigure}[t]{0.48\textwidth}
        \centering
        \includegraphics[width=\linewidth]{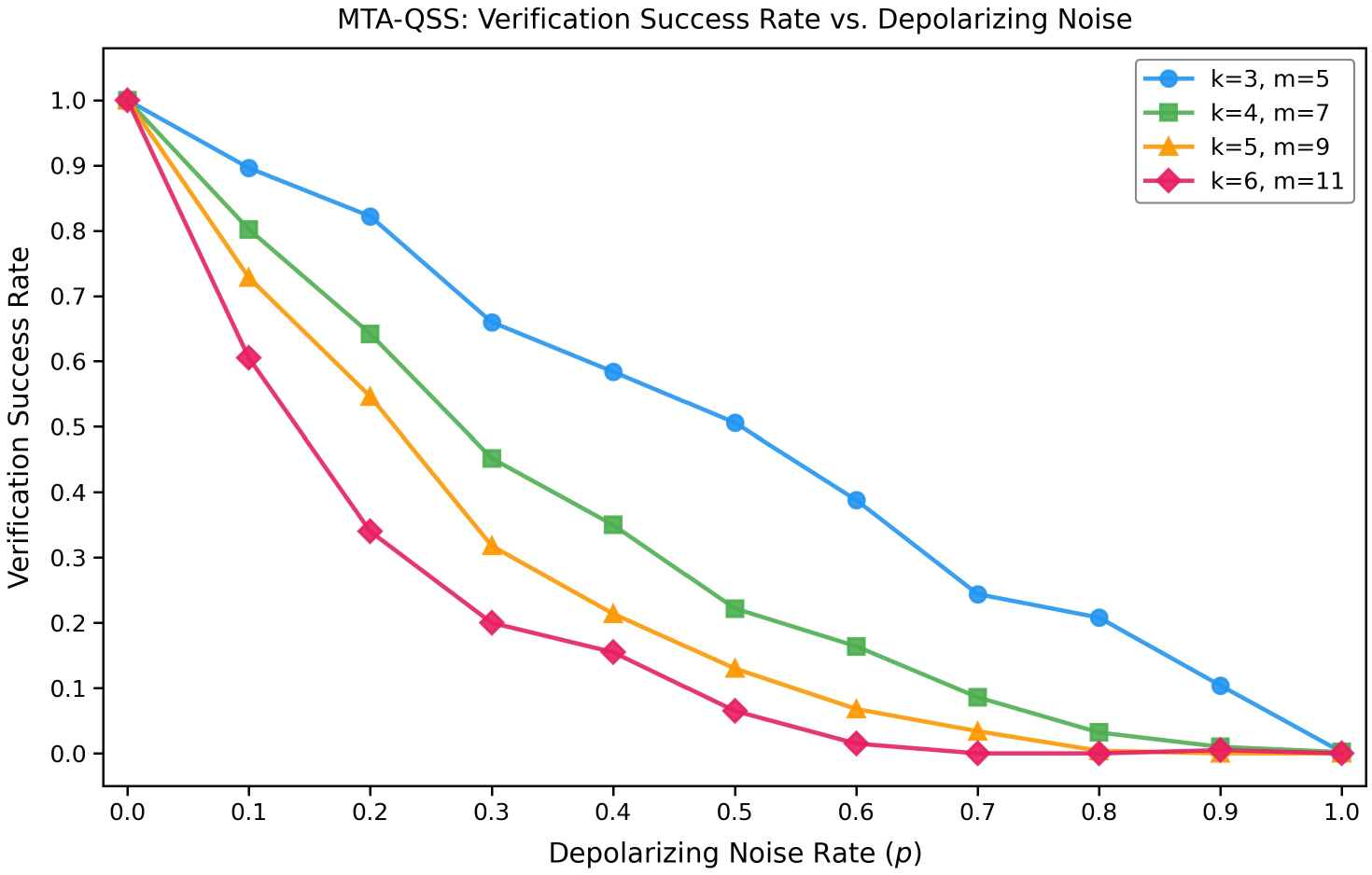}
        \caption{Verification success rate under depolarizing noise.}
        \label{fig:verification2}
    \end{subfigure}
    \captionsetup{justification=raggedright,singlelinecheck=false}
    \caption{
    Noise resilience of MTA-QSS under per-share depolarizing noise for different parameter sets \((k,m)\).
    Panels show (a) tracing accuracy and (b) verification success rate, with \(t=2\) malicious parties.
    }
    \label{fig:mta_noise_comparison}
\end{figure*}

For MTA-QSS, as shown in Fig.~\ref{fig:mta_noise_comparison}, both tracing accuracy and verification success rate decrease monotonically with noise, and smaller parameter sets remain consistently more robust, while larger sets are more sensitive. Notably, the tracing accuracy and verification success curves nearly coincide. This behavior arises from the protocol design: compared to LVT-QSS, MTA-QSS exhibits higher robustness because the auxiliary key is retained by the Dealer, so noise only affects the quantum registers during the reconstruction stage, while both tracing and verification operate solely on the classical information obtained from the reconstruction. Consequently, noise does not accumulate along the verification pathway, which effectively reduces its impact on both tracing and verification, maintaining higher accuracy under identical noise conditions. The resulting sensitivity to noise is therefore nearly identical for both metrics, yielding the observed curve coincidence.

Overall, MTA-QSS maintains high tracing accuracy and verification success in low-noise regimes. In the presence of malicious parties and local noise, the protocol performance is primarily determined by reliable tracing of honest parties' tags and the stability of subsequent consistency verification, reflecting the enhanced structural robustness of the scheme.

\section{DISCUSSION}

In this work, we move beyond the traditional focus of quantum secret sharing on secrecy and correct reconstruction, and further examine the issue of traceability in multiparty distributed settings. Existing quantum secret sharing schemes are primarily designed to prevent unauthorized access and ensure correct recovery by authorized sets; however, when valid shares are leaked or forged shares are submitted during the reconstruction stage, such schemes typically do not provide protocol level mechanisms for behavior tracing and responsibility attribution. Our results show that these requirements should not be regarded merely as external management or enforcement issues, but can instead be incorporated into the security characterization of quantum secret sharing itself.

More specifically, we show that, in quantum secret sharing, traceability corresponds to different security requirements under different types of adversarial behavior, and therefore needs to be characterized in a scenario-dependent manner. We focus on two representative settings. The first is the leakage of valid shares, where the security goal is to trace an already occurred leakage event and attribute responsibility accordingly. The second is malicious interference during the reconstruction stage, where dishonest participants attempt to disrupt secret recovery by submitting forged shares; in this setting, the protocol is required to identify the source of the malicious behavior. Based on this distinction, we introduce leakage-verifiable and traceable quantum secret sharing (LVT-QSS) and malicious-traceable and accountable quantum secret sharing (MTA-QSS), respectively, thereby extending the security definition of quantum secret sharing to include behavior tracing and responsibility attribution under abnormal situations.

At the same time, the present framework still has clear boundaries of applicability, particularly with respect to the adversarial model, implementation assumptions, and resource overhead. In the current work, we provide corresponding traceability definitions and protocol constructions separately for the two threat models of valid-share leakage and malicious forgery during reconstruction, but do not yet place them within a unified framework. An immediate question for future work is whether one can develop a unified protocol and security framework that simultaneously captures both settings, and further leads to more general, or even composable, security formulations. Such a framework would be useful for describing more realistic adversaries that may possess multiple attack capabilities at once.

It also remains important to understand whether these security properties continue to hold under stronger adversarial models, such as collusion among multiple dishonest participants, adaptive corruption, or attempts to evade tracing by exploiting device imperfections and noise. In addition, implementing traceability is likely to incur nontrivial costs, including extra classical communication, verification complexity, storage requirements, and implementation difficulty. Compared with conventional quantum secret sharing, these additional mechanisms enhance the ability to trace abnormal behavior and attribute responsibility, but may also affect fault tolerance, reconstruction efficiency, and noise robustness. A further direction worth exploring is whether one can achieve a stronger protocol guarantee in which the secret can still be correctly reconstructed even in the presence of malicious participants during the reconstruction stage, while those malicious participants can nevertheless be identified and traced. More broadly, it would also be of interest to extend the traceability perspective developed here to other distributed quantum cryptographic tasks, and to develop lower-overhead protocol variants that are more amenable to experimental implementation.

In summary, our results show that the security goals of quantum secret sharing should not be limited to secrecy and recoverability alone. In realistic multiparty settings, one must also account for the tracing of abnormal behavior and the attribution of responsibility. By introducing the formal frameworks of LVT-QSS and MTA-QSS, we provide an initial theoretical starting point for traceable quantum secret sharing, and show that incorporating traceability into the security characterization of distributed quantum cryptographic protocols is both natural and meaningful.

\begin{acknowledgments}
This work is supported by the Natural Science Foundation of Shanghai under Grant No. 25ZR1402098 and the National Natural Science Foundation of China under Grant No. 62373155.
\end{acknowledgments}

\appendix
\section{Quantum-Secure Collision-Resistant and One-Way Hash Functions}
\label{appendix:quantumhash}

Let $\lambda$ be the security parameter and $\mathbb{F} = \mathbb{F}_{q(\lambda)}$ a finite field with $\log q(\lambda) = \mathrm{poly}(\lambda)$. We consider a hash function $H: \mathbb{F} \to \mathbb{F}$.

\paragraph{Quantum Collision Resistance.}  
We say that $H$ is collision-resistant against QPT adversaries if for every QPT algorithm $\mathcal{B}$,
\begin{multline}
  \mathrm{Adv}^{\mathrm{coll}}_{\mathcal{B},H}(\lambda) = \Pr\left[ (x, x') \leftarrow \mathcal{B}(1^\lambda) : \right. \\
  \left. x \neq x' \wedge H(x) = H(x') \right] \leq \mathrm{negl}(\lambda).
\end{multline}

\paragraph{Quantum One-Wayness (Preimage Resistance).}  
We say that $H$ is one-way (preimage resistant) against QPT adversaries if for every QPT algorithm $\mathcal{B}$,
\begin{multline}
  \mathrm{Adv}^{\mathrm{ow}}_{\mathcal{B}, H}(\lambda) = \Pr\bigl[ x \leftarrow \mathcal{X}; \\
  x' \leftarrow \mathcal{B}(1^\lambda, H(x)) : H(x') = H(x) \bigr] \leq \mathrm{negl}(\lambda).
\end{multline}

Throughout this work, we assume the existence of a hash function $H$ that is both quantum-secure collision-resistant and quantum-secure one-way against QPT adversaries. These assumptions are standard in post-quantum cryptography. Lattice-based constructions, such as those by Peikert–Rosen~\cite{23_peikert2006efficient} and Lyubashevsky–Micciancio~\cite{24_lyubashevsky2006generalized}, provide concrete candidates for collision-resistant hashing. In our protocols, these hash-function properties support verification and traceability within a threshold secret-sharing framework, whose classical foundations were established by Shamir~\cite{1_rivest1979share} and Blakley~\cite{2_blakley1979safeguarding}. In particular, quantum-secure one-wayness, which also underlies constructions of quantum public-key encryption~\cite{21_kitagawa_quantum_2024,22_barooti_simple_2023}, protects participant tags against recovery from their public hash values.

\section{Formal Definitions of LVT-QSS}
\label{appendix:def-lvt-qss}

\subsection{Correctness}
\label{def:correct1-formal}
\begin{definition}[Correctness (formal)]
An LVT-QSS scheme is correct if there exists a negligible function
$\mathrm{negl}(\cdot)$ such that, for security parameter
$\lambda\in\mathbb{N}$, number of parties $m\in\mathbb{N}$, threshold
$k$ with $1\le k\le m$, subset $P\subseteq [m]$ with $|P|\ge k$, and
secret state $|\psi\rangle\in\mathcal{H}_\lambda$, the following holds.

Let $\rho:=|\psi\rangle\langle\psi|$. Generate the shares and run the
reconstruction algorithm as
\begin{align}
(tk,Tags)
&\leftarrow \mathrm{Setup}(1^\lambda,m), \nonumber\\
(sh,vk)
&\leftarrow \mathrm{GenSecretShares}(|\psi\rangle,m,k,Tags,tk), \nonumber\\
out
&\leftarrow \mathrm{SecretRecover}(\{|sh_i\rangle\}_{i\in P}).
\end{align}
where $sh := \{|sh_i\rangle\}_{i=0}^{m-1}$, and $vk := \{vk_i\}_{i=0}^{m-1}$.
Then
\begin{equation}
\Pr\!\left[
    out=\rho'
    \ \text{and}\
    \frac{1}{2}\|\rho'-\rho\|_1 \leq \mathrm{negl}(\lambda)
\right]
\geq 1-\mathrm{negl}(\lambda),
\end{equation}
where the probability is taken over the internal randomness of
$\mathrm{Setup}$, $\mathrm{GenSecretShares}$, and $\mathrm{SecretRecover}$.
\end{definition}

\subsection{Secrecy}
\label{def:secrecy-formal}

\begin{definition}[Secrecy (formal)]
\label{def:formal-secrecy}
Let $\mathcal{H}=\{\mathcal{H}_\lambda\}_{\lambda\in\mathbb{N}}$ denote the secret-state spaces of an LVT-QSS scheme.
For every $M\subseteq[m]$ with $|M|<k$ and any two secret states $|\psi_0\rangle,|\psi_1\rangle\in\mathcal{H}_\lambda$, define
\[
\mathrm{View}_M := \bigotimes_{i\in M}(R_i\otimes A_i)
\]
and let $\rho^{(b)}_{\mathrm{View}_M}$ be the reduced state generated by sharing $|\psi_b\rangle$, for $b\in\{0,1\}$.
Then there exists a negligible function $\mathrm{negl}(\cdot)$ such that, for security parameter $\lambda\in\mathbb{N}$,
\[
\frac{1}{2}
\left\|
    \rho^{(0)}_{\mathrm{View}_M}
    -
    \rho^{(1)}_{\mathrm{View}_M}
\right\|_1
\leq \mathrm{negl}(\lambda).
\]
\end{definition}

\subsection{Tracer secrecy}
\label{def:tsecrecy1-formal}

\begin{definition}[Tracer secrecy (formal)]
Let $\mathcal{H}=\{\mathcal{H}_\lambda\}_{\lambda\in\mathbb{N}}$ denote the secret-state spaces of an LVT-QSS scheme.
For every $M\subseteq[m]$ with $|M|<k$ and any two secret states $|\psi_0\rangle,|\psi_1\rangle\in\mathcal{H}_\lambda$, define
\[
\mathrm{View}_M := \bigotimes_{i\in M}(R_i\otimes A_i),
\quad
T := tk
\]
and let $\rho^{(b)}_{\mathrm{View}_M,T}$ be the reduced state generated by sharing $|\psi_b\rangle$, for $b\in\{0,1\}$. Then there exists a negligible function $\mathrm{negl}(\cdot)$ such that, for security parameter $\lambda\in\mathbb{N}$,
\[
\frac{1}{2}
\left\|
    \rho^{(0)}_{\mathrm{View}_M,T}
    -
    \rho^{(1)}_{\mathrm{View}_M,T}
\right\|_1
\leq \mathrm{negl}(\lambda).
\]
\end{definition}

\subsection{Good reconstruction boxes}
\label{def:good-box-formal}

\begin{definition}[Good reconstruction boxes (formal)]
Let $\lambda\in\mathbb{N}$, let $m,f,k\in\mathbb{N}$ satisfy
$0<f<k\leq m$, and let
$\mathcal{I}_{\mathrm{idx}}\subseteq[m]$ with
$|\mathcal{I}_{\mathrm{idx}}|=f$. Let $|\psi\rangle$ be a secret state and
let $\rho:=|\psi\rangle\langle\psi|$.

A reconstruction box $\mathcal{R}$ is
$(m,k,f,|\psi\rangle,\epsilon)$-good if, for uniformly random
$Q\subseteq[m]\setminus\mathcal{I}_{\mathrm{idx}}$ with $|Q|=k-f$,
\[
    \Pr\!\left[
        \rho' \leftarrow \mathcal{R}(\{|sh_i\rangle\}_{i\in Q})
        :
        \frac{1}{2}\|\rho'-\rho\|_1 \leq \mathrm{negl}(\lambda)
    \right]
    \geq \epsilon .
\]
The probability is taken over the random choice of $Q$, the randomness of $\mathcal{R}$, and the randomness used to generate the shares.
\end{definition}

\subsection{Traceability}
\label{def:trace1-formal}
We define the following tracing experiment, which formalizes the traceability of the quantum secret-sharing scheme. 

Let $\mathcal{A}$ be a QPT adversary, let $|\psi\rangle \in \mathcal{H}_\lambda$ be an arbitrary secret state, and let $\epsilon \in [0,1]$. Tags serve as unique identifiers of participants.

\experiment{$\mathrm{ExpTrace}_{\mathcal{A},\mathrm{LVT\text{-}QSS},\epsilon}(\lambda)$}
\begin{enumerate}
    \item $(m,k,\mathcal{I}_{\mathrm{idx}},st) \leftarrow \mathcal{A}(1^\lambda)$, where
    $\mathcal{I}_{\mathrm{idx}}=\{i_1,\ldots,i_f\}\subseteq [m]$ denotes the set of corrupted parties.

    \item $(tk,Tags) \leftarrow \mathrm{Setup}(1^\lambda,m)$, where
    $Tags=\{tag_i\}_{i=0}^{m-1}$. Define
    $\mathcal{I}:=\{tag_i : i\in \mathcal{I}_{\mathrm{idx}}\}$.
    The set $Tags$ is hidden from $\mathcal{A}$.

    \item $(\{|sh_i\rangle\}_{i=0}^{m-1},\{vk_i\}_{i=0}^{m-1})
    \leftarrow \mathrm{GenSecretShares}(|\psi\rangle,m,k,Tags,tk)$,
    and let $vk:=(vk_0,\ldots,vk_{m-1})$.

    \item $\mathcal{R} \leftarrow \mathcal{A}\bigl(st,\{|sh_i\rangle\}_{i\in\mathcal{I}_{\mathrm{idx}}}\bigr)$.

    \item If $\mathcal{R}$ is not $(m,k,f,|\psi\rangle,\epsilon)$-good, output $0$.

    \item Sample $Q\subseteq [m]\setminus\mathcal{I}_{\mathrm{idx}}$ uniformly at random subject to $|Q|=k-f$, and compute
    \[
    \mathcal{I}' \leftarrow \mathrm{SecretTrace}^{\mathcal{R}(\cdot)}\bigl(tk,\{|sh_i\rangle\}_{i\in Q}\bigr).
    \]

    \item Output $0$ if $\mathcal{I}'=\mathcal{I}$ and $\mathrm{SecretVerify}(vk,\mathcal{I}')=1$; otherwise output $1$.
\end{enumerate}
\noindent\rule{\linewidth}{0.5pt}

\begin{definition}[Traceability]
\label{def:formal-traceability}
For any QPT adversary \(\mathcal{A}\) and \(\epsilon\in[0,1]\), define its tracing advantage as
\begin{equation}
\mathrm{Adv}^{\mathrm{trace}}_{\mathcal{A},\mathrm{LVT\text{-}QSS},\epsilon}(\lambda)
:=
\epsilon\cdot
\Pr\!\left[
\mathrm{ExpTrace}_{\mathcal{A},\mathrm{LVT\text{-}QSS},\epsilon}(\lambda)=1
\right].
\end{equation}
The LVT-QSS scheme is formally traceable if the tracing advantage of every QPT adversary \(\mathcal{A}\) is negligible in \(\lambda\).
\end{definition}

\subsection{Non-imputability}
\label{def:ni-formal}
We define the following non-imputability experiment.
Let \(\mathcal{A}\) be a QPT adversary, and let \(|\psi\rangle\in\mathcal{H}_\lambda\) be
an arbitrary secret state. Tags serve as unique identifiers of participants.

\experiment{$\mathrm{ExpNI}_{\mathcal{A},\mathrm{LVT\text{-}QSS}}(\lambda)$}
\begin{enumerate}
    \item \((m,k,i^*,st)\leftarrow \mathcal{A}(1^\lambda)\), where
    \(i^*\in[m]\) denotes the honest participant that \(\mathcal{A}\) aims to frame.

    \item \((tk,Tags)\leftarrow \mathrm{Setup}(1^\lambda,m)\), where
    \(Tags=\{tag_i\}_{i=0}^{m-1}\). Let \(tag^*:=tag_{i^*}\).

    \item \((\{|sh_i\rangle\}_{i=0}^{m-1},\{vk_i\}_{i=0}^{m-1})
    \leftarrow \mathrm{GenSecretShares}(|\psi\rangle,m,k,Tags,tk)\),
    and let \(vk:=(vk_0,\ldots,vk_{m-1})\).

    \item \(\mathcal{I}^*_{\mathrm{tag}} \xleftarrow{\$}
    \mathcal{A}(st,\{|sh_i\rangle\}_{i\in[m]\setminus\{i^*\}},tk,vk)\).

    \item Output \(1\) if \(tag^*\in \mathcal{I}^*_{\mathrm{tag}}\) and
    \(\mathrm{SecretVerify}(vk,\mathcal{I}^*_{\mathrm{tag}})=1\); otherwise output \(0\).
\end{enumerate}
\noindent\rule{\linewidth}{0.5pt}

\begin{definition}[Non-imputability]
\label{def:formal-ni}
For any QPT adversary \(\mathcal{A}\), define its non-imputability advantage by
\begin{equation}
\mathrm{Adv}^{\mathrm{ni}}_{\mathcal{A},\mathrm{LVT\text{-}QSS}}(\lambda)
:=
\Pr\!\left[
\mathrm{ExpNI}_{\mathcal{A},\mathrm{LVT\text{-}QSS}}(\lambda)=1
\right].
\end{equation}
The LVT-QSS scheme is formally non-imputable if
\(\mathrm{Adv}^{\mathrm{ni}}_{\mathcal{A},\mathrm{LVT\text{-}QSS}}(\lambda)\)
is negligible in \(\lambda\) for every QPT adversary \(\mathcal{A}\).
\end{definition}

\section{Proofs of LVT-QSS}
\label{appendix:proof}

\subsection{Proof of Theorem~\ref{th:correct1}}
\label{app:lvt-correctness-proof}
\begin{proof}
    We show that any set of at least \(k\) valid shares enables correct reconstruction of the secret quantum state.

    \textbf{Linear structure of the coherent encoding.}
    In the \(\mathrm{GenSecretShares}\) procedure, the quantum secret
    \(|\psi\rangle = \sum_s \alpha_s |s\rangle\) is coherently encoded and distributed among the \(m\) parties' registers. Specifically, there exists an isometry \(V\) such that
    \begin{equation}
        V\left(\sum_s \alpha_s |s\rangle\right)
        =
        \sum_s \alpha_s |E(s)\rangle,
    \end{equation}
    where \(|E(s)\rangle\) denotes the coherent encoding of the basis element \(|s\rangle\).

    Let \(x_0,\ldots,x_{m-1}\in\mathbb{F}\) be distinct elements, and define the Vandermonde matrix
    \begin{equation}
        V_m =
        \begin{bmatrix}
        1 & x_0 & x_0^2 & \cdots & x_0^{k-1}\\
        \vdots & \vdots & \vdots & & \vdots\\
        1 & x_{m-1} & x_{m-1}^2 & \cdots & x_{m-1}^{k-1}
        \end{bmatrix}
        \in \mathbb{F}^{m\times k}.
    \end{equation}
    For any coefficient vector
    \(c=(c_0,\ldots,c_{k-1})^{\mathrm T}\in\mathbb{F}^k\), define the polynomial
    \begin{equation}
        p_c(t)=c_0+c_1t+\cdots+c_{k-1}t^{k-1}.
    \end{equation}

    The encoding in \(\mathrm{GenSecretShares}\) is expressed as
    \begin{equation}
        |\psi\rangle
        \longmapsto
        \sum_{s\in\mathbb{F}}\alpha_s
        \sum_{\substack{c\in\mathbb{F}^k\\ c_0=uk,\, c_{k-1}=s}}
        \bigotimes_{j=0}^{m-1}
        \bigl|p_c(x_j),r_j\bigr\rangle,
    \end{equation}
    where \(y_j=p_c(x_j)\), and \(uk\xleftarrow{\$}\mathbb{F}\). Since the points
    \(x_0,\ldots,x_{m-1}\) are distinct, the Vandermonde matrix \(V_m\) has full rank, and any set of \(k\) rows forms an invertible submatrix. Therefore, any \(k\) evaluations \(\{y_j\}\) uniquely determine the polynomial \(p_c(t)\) and the coefficient vector \(c\).

    \textbf{Reconstruction from any \(k\) shares.}
    Let \(P'\subseteq\{0,\ldots,m-1\}\) be any set of \(k\) parties. Without loss of generality, take \(P'=\{0,1,\ldots,k-1\}\). The basis state on the corresponding registers \(\{R_0,\ldots,R_{k-1}\}\) is
    \begin{equation}
    \begin{split}
        |y_0,\ldots,y_{k-1}\rangle
        &=
        |p_c(x_0),\ldots,p_c(x_{k-1})\rangle \\
        &=
        |V_k(x_0,\ldots,x_{k-1})\cdot c\rangle,
    \end{split}
    \end{equation}
    where \(V_k(x_0,\ldots,x_{k-1})\in\mathbb{F}^{k\times k}\) is the corresponding Vandermonde submatrix. Since \(V_k\) is invertible, we have
    \begin{equation}
        c
        =
        V_k^{-1}(x_0,\ldots,x_{k-1})
        \cdot
        (y_0,\ldots,y_{k-1})^{\mathrm T}.
    \end{equation}

    Applying the unitary implementation of this invertible linear transformation to the registers \(\{R_0,\ldots,R_{k-1}\}\) gives
    \begin{equation}
        \bigotimes_{j=0}^{k-1}|y_j\rangle
        \xmapsto{V_k^{-1}}
        |c_0,\ldots,c_{k-1}\rangle.
    \end{equation}
    In particular, the highest-degree coefficient satisfies \(c_{k-1}=s\).

    Next, the \(\mathrm{SecretRecover}\) procedure applies a cyclic right-shift permutation \(S_r\) to the registers \(\{R_0,\ldots,R_{k-1}\}\) and \(\{A_0,\ldots,A_{k-1}\}\):
    \begin{multline}
        |c_0,\ldots,c_{k-1};r_0,\ldots,r_{k-1}\rangle \\
        \xmapsto{S_r}
        |c_{k-1},c_0,\ldots,c_{k-2};
        r_{k-1},r_0,\ldots,r_{k-2}\rangle .
    \end{multline}
    By construction, the subsequent reversible linear transformations and controlled additions act only on \(\{R_1,\ldots,R_{k-1}\}\), together with the auxiliary registers, and do not affect \(R_0\). Consequently, register \(R_0\) contains
    \[
        \sum_s\alpha_s |s\rangle = |\psi\rangle,
    \]
    showing that \(\mathrm{SecretRecover}\) correctly reconstructs the secret from any \(k\) valid shares.

    Finally, let \(P'\subseteq [m]\) be any authorized set with \(|P'|\ge k\). The recovery procedure selects any \(k\)-subset \(P''\subseteq P'\) and applies the above reconstruction map to the shares \(\{|sh_i\rangle\}_{i\in P''}\). Let
    \[
        |\psi'\rangle
        \leftarrow
        \mathrm{SecretRecover}(\{|sh_i\rangle\}_{i\in P'})
    \]
    and \(\rho' := |\psi'\rangle\langle\psi'|\). Then
    \begin{equation}
        \Pr\left[
        \mathrm{Td}(\rho,\rho')\le \mathrm{negl}(\lambda)
        \right]
        \ge
        1-\mathrm{negl}(\lambda).
    \end{equation}
    Therefore, the \((k,m)\)-threshold LVT-QSS scheme is correct.
\end{proof}

\subsection{Proof of Theorem~\ref{th:secrecy}}
\label{app:lvt-secrecy-proof}

\begin{proof}
    We prove that any set of fewer than \(k\) parties learns no information about the secret quantum state.

    \textbf{Purification and encoding.}
    All internal randomness in the protocol is purified into an environment register \(E\). 
    The encoding procedure is realized by an isometry
    \begin{equation}
        U_{\mathrm{enc}} : \mathcal{H}_\lambda \to
        \left(\bigotimes_{i=0}^{m-1} \mathcal{H}_{R_i}\right)
        \otimes
        \left(\bigotimes_{i=0}^{m-1} \mathcal{H}_{A_i}\right)
        \otimes \mathcal{H}_E .
    \end{equation}

    Let \(q=|\mathbb{F}|\), and let the secret state be
    \[
        |\psi\rangle=\sum_s \alpha_s |s\rangle,
        \qquad
        \sum_s |\alpha_s|^2=1 .
    \]
    In the encoding, the secret basis value \(s\) is embedded as the highest-degree coefficient \(c_{k-1}=s\). 
    The remaining coefficients $c_0,c_1,\ldots,c_{k-2}$ are sampled uniformly from \(\mathbb{F}\) and independently of \(s\), with \(c_0=uk\). 
    For a coefficient vector \(c=(c_0,\ldots,c_{k-1})\), define
    \begin{equation}
        p_c(t)=\sum_{\ell=0}^{k-1} c_\ell t^\ell .
    \end{equation}
    The encoded global state can be written as
    \begin{multline}
    |\Psi\rangle
    =
    \frac{1}{\sqrt{q^{k-1}}}
    \sum_s \alpha_s 
    \sum_{\substack{c\in\mathbb{F}^k\\ c_{k-1}=s}} \\
    \Big(\bigotimes_{i=0}^{m-1} |p_c(x_i)\rangle_{R_i}\Big)
    \otimes
    \Big(\bigotimes_{i=0}^{m-1} |r_i\rangle_{A_i}\Big)
    \otimes |\xi_c\rangle_E .
    \end{multline}
    Since the classical randomness \(c_0,\ldots,c_{k-2}\) can be purified into the environment, we assume without loss of generality that the states
    \(\{|\xi_c\rangle_E\}_c\) are pairwise orthogonal.

    \textbf{Reduced state on an unauthorized set.}
    Let \(M\subseteq [m]\) be any unauthorized set with \(|M|<k\). 
    Tracing out the environment register eliminates the cross terms, since
    \[
        \mathrm{Tr}_E\!\left(|\xi_c\rangle\langle \xi_{c'}|\right)
        =
        \delta_{c,c'} .
    \]
    Further tracing out all registers outside \(M\) gives the reduced state on the adversary's view:
    \begin{multline}
    \rho_{\mathrm{View}_M}(|\psi\rangle\langle\psi|)
    =
    \frac{1}{q^{k-1}}
    \sum_s |\alpha_s|^2
    \sum_{\substack{c\in\mathbb{F}^k\\ c_{k-1}=s}}\\
    \Big(
    \bigotimes_{i\in M}
    |p_c(x_i)\rangle\langle p_c(x_i)|
    \Big)
    \otimes \rho_{A_M},
    \end{multline}
    where
    \[
        \rho_{A_M}:=\bigotimes_{i\in M}|r_i\rangle\langle r_i|
    \]
    is independent of the secret.

    Define the linear map
    \[
        T_M:\mathbb{F}^k\to\mathbb{F}^{|M|},
        \qquad
        T_M(c)=(p_c(x_i))_{i\in M}.
    \]
    Fix \(s\in\mathbb{F}\), and impose only the constraint \(c_{k-1}=s\). 
    The remaining variables $c_0,c_1,\ldots,c_{k-2}$ are uniformly random over \(\mathbb{F}^{k-1}\) and independent of \(s\). 
    Since the points \(\{x_i\}_{i\in M}\) are distinct and \(|M|<k\), the corresponding Vandermonde submatrix has full row rank \(|M|\). 
    Therefore, for every \(y\in\mathbb{F}^{|M|}\), the system
    \[
        T_M(c)=y,
        \qquad
        c_{k-1}=s
    \]
    has exactly \(q^{k-1-|M|}\) solutions in the variables \(c_0,\ldots,c_{k-2}\), independent of \(s\) and \(y\). 
    Hence
    \begin{multline}
        \sum_{\substack{c\in\mathbb{F}^k\\ c_{k-1}=s}}
        \bigotimes_{i\in M}
        |p_c(x_i)\rangle\langle p_c(x_i)|
        \\
        =
        q^{k-1-|M|}
        \sum_{y\in\mathbb{F}^{|M|}}
        |y\rangle\langle y| .
    \end{multline}

    Substituting this expression into the reduced state and using
    \(\sum_s|\alpha_s|^2=1\), we obtain
    \begin{multline}
        \rho_{\mathrm{View}_M}(|\psi\rangle\langle\psi|)
        =
        \frac{1}{q^{k-1}}
        \sum_s |\alpha_s|^2
        q^{k-1-|M|}
        \sum_{y\in\mathbb{F}^{|M|}}
        |y\rangle\langle y|
        \otimes \rho_{A_M}
        \\
        =
        \left(
        \frac{1}{q^{|M|}}
        \sum_{y\in\mathbb{F}^{|M|}}
        |y\rangle\langle y|
        \right)
        \otimes \rho_{A_M}.
    \end{multline}
    Denote this fixed state by
    \begin{equation}
        \sigma_M
        :=
        \left(
        \frac{1}{q^{|M|}}
        \sum_{y\in\mathbb{F}^{|M|}}
        |y\rangle\langle y|
        \right)
        \otimes \rho_{A_M}.
    \end{equation}
    Crucially, \(\sigma_M\) is independent of the amplitudes \(\{\alpha_s\}\) and therefore independent of the secret state.

    Thus, for any two secret states \(|\psi_0\rangle\) and \(|\psi_1\rangle\), the corresponding adversarial views satisfy
    \begin{equation}
        \rho^{(0)}_{\mathrm{View}_M}
        =
        \rho^{(1)}_{\mathrm{View}_M}
        =
        \sigma_M .
    \end{equation}
    Consequently,
    \begin{equation}
    \frac{1}{2}\left\|
    \rho^{(0)}_{\mathrm{View}_M}
    -
    \rho^{(1)}_{\mathrm{View}_M}
    \right\|_1
    =
    0
    \le \mathrm{negl}(\lambda).
    \end{equation}
    Thus every unauthorized subset \(M\) with \(|M|<k\) has negligible distinguishing advantage between any two secret states. Therefore, the LVT-QSS scheme satisfies secrecy.
\end{proof}

\subsection{Proof of Theorem~\ref{th:tsecrecy}}
\label{app:lvt-tracer-secrecy-proof}

\begin{proof}
Let \(M\subseteq[m]\) be an unauthorized subset with \(|M|<k\), and let \(T\)
denote the classical register storing the tracing key \(tk\) generated by
\(\mathrm{Setup}(1^\lambda,m)\).

By Theorem~\ref{th:secrecy}, the unauthorized view
\(\mathrm{View}_M=R_M\otimes A_M\), without the tracing-key register \(T\), is
independent of the secret state. Hence, for any two secret states
\(|\psi_0\rangle\) and \(|\psi_1\rangle\), there exists a density operator
\(\sigma_M\), depending only on \(M\) and the public parameters, such that
\begin{equation}
\rho^{(0)}_{\mathrm{View}_M}
=
\rho^{(1)}_{\mathrm{View}_M}
=
\sigma_M .
\end{equation}

It remains to check that adding the tracing-key register \(T\) does not
introduce any dependence on the secret. The tracing key \(tk\) is generated by
\(\mathrm{Setup}(1^\lambda,m)\) independently of the secret input. Moreover, the
auxiliary registers \(A_M\) are generated from \(tk\), local randomness, and the
user key, all of which are independent of which secret state is shared.
Therefore, conditioned on any fixed tracing key \(tk\), the unauthorized view is
still independent of the secret. That is, for every \(tk\),
\begin{equation}
\rho^{(0)}_{\mathrm{View}_M\mid tk}
=
\rho^{(1)}_{\mathrm{View}_M\mid tk}.
\end{equation}

Since \(T\) is a classical register storing \(tk\), the joint state on
\(\mathrm{View}_M\otimes T\) can be written as
\begin{equation}
\rho^{(b)}_{\mathrm{View}_M,T}
=
\sum_{tk} p(tk)
|tk\rangle\!\langle tk|_T
\otimes
\rho^{(b)}_{\mathrm{View}_M\mid tk}, b\in\{0,1\}.
\end{equation}
Using the equality of the conditional views for every \(tk\), we obtain
\begin{equation}
\rho^{(0)}_{\mathrm{View}_M,T}
=
\rho^{(1)}_{\mathrm{View}_M,T}.
\end{equation}
Consequently,
\begin{equation}
\frac{1}{2}
\left\|
\rho^{(0)}_{\mathrm{View}_M,T}
-
\rho^{(1)}_{\mathrm{View}_M,T}
\right\|_1
=
0
\le
\mathrm{negl}(\lambda).
\end{equation}
Thus every unauthorized subset \(M\) with \(|M|<k\) has negligible distinguishing advantage between any two secret states even when the
tracing key is included in its view. Therefore, the \((k,m)\)-threshold LVT-QSS
scheme satisfies tracer secrecy.
\end{proof}

\subsection{Proof of Theorem~\ref{th:lvt-trace}}
\label{app:lvt-trace-proof}

\noindent\textbf{Theorem~\ref{th:lvt-trace} }(Formal).
{\itshape
Assume that $H:\mathbb{F}\to\mathbb{F}$ is quantum-secure collision resistant and preimage resistant, and that the symmetric-key encryption scheme
$\Pi_{\mathrm{SKE}}=(\mathrm{Enc},\mathrm{Dec})$ satisfies the sparse-hit property with parameter $\alpha_{\max}(\lambda)$.
Then, for any QPT adversary $\mathcal{A}$, there exist a QPT algorithm $\mathcal{B}_1$ and a negligible function $\mathrm{negl}(\cdot)$ such that, for security parameter $\lambda\in\mathbb{N}$,
\begin{multline}
\mathrm{Adv}^{\mathrm{trace}}_{\mathcal{A},\mathrm{LVT\text{-}QSS},\epsilon}(\lambda) \\
\leq 
\epsilon(\lambda)
\left(
\mathrm{Adv}^{\mathrm{coll}}_{\mathcal{B}_1,H}(\lambda)
+
\ell(m-f)\alpha_{\max}(\lambda)
+
\mathrm{negl}(\lambda)
\right),
\end{multline}
where $\ell := |I\setminus I''|$. Consequently, if $\ell(m-f)=\mathrm{poly}(\lambda)$ and $\alpha_{\max}(\lambda)$ is negligible, then the LVT-QSS scheme is traceable.
}

\begin{proof}
Let $\mathcal{I}_{\mathrm{idx}}\subseteq [m]$ be the set of leaked indices, with $|\mathcal{I}_{\mathrm{idx}}|=f$, and let $\mathcal{I}:=\{\,tag_i \mid i\in\mathcal{I}_{\mathrm{idx}}\,\}\subseteq Tags$
be the corresponding leaked tag set. Let $\mathcal{I}'$ denote the output of $\mathrm{SecretTrace}$, and let $ok:=\mathrm{SecretVerify}(vk,\mathcal{I}')\in\{0,1\}.$

The traceability experiment fails precisely when the protocol executes correctly, but either verification rejects or it accepts an incorrect leaked tag set. Accordingly, define
\begin{align}
E_1 &:= \{\,G\wedge ok=0\,\},\\
E_2 &:= \{\,G\wedge ok=1\wedge \mathcal{I}'\neq \mathcal{I}\,\}.
\end{align}
Then
\begin{equation}
\Pr[\mathrm{ExpTrace}=1]\le \Pr[E_1]+\Pr[E_2].
\end{equation}

By completeness of $\mathrm{SecretVerify}$, one has
\begin{equation}
\Pr[E_1]\le \mathrm{negl}(\lambda).
\end{equation}

Next decompose $E_2$ into
\begin{align}
E_2^{\mathrm{forge}}
&:=
\{\,G\wedge ok=1\wedge \exists\,tag^*\in\mathcal{I}' : tag^*\notin Tags\,\},\\
E_2^{\mathrm{wrong}}
&:=
\{\,G\wedge ok=1\wedge \mathcal{I}'\subseteq Tags \wedge \mathcal{I}'\neq \mathcal{I}\,\}.
\end{align}
Thus
\begin{equation}
\Pr[E_2]\le \Pr[E_2^{\mathrm{forge}}]+\Pr[E_2^{\mathrm{wrong}}].
\end{equation}

If $E_2^{\mathrm{forge}}$ occurs, then some $tag^*\in\mathcal{I}'\setminus Tags$ is accepted by $\mathrm{SecretVerify}$. By the definition of $\mathrm{SecretVerify}$, this implies $H(tag^*)\in \{vk_i\}_{i=0}^{m-1},$
and hence $H(tag^*)=H(tag_j)$ for some $tag_j\in Tags$. Since $tag^*\neq tag_j$, this yields a collision in $H$. Therefore,
\begin{equation}
\Pr[E_2^{\mathrm{forge}}]\le \mathrm{Adv}^{\mathrm{coll}}_{\mathcal{B}_1,H}(\lambda).
\end{equation}

Now suppose that $E_2^{\mathrm{wrong}}$ occurs. Since $\mathcal{I}'\subseteq Tags$ but $\mathcal{I}'\neq \mathcal{I}$ and $|\mathcal{I}'|=|\mathcal{I}|=f$, there exists some $i\notin\mathcal{I}_{\mathrm{idx}}$ such that $tag_i\in\mathcal{I}'$. By the tracing rule, this implies that for some $r\in I\setminus I''$, the first-layer decryption output $\mathrm{Dec}(r,uk)$ falls into the set $S_i$ of ciphertexts that decrypt to $tag_i$ under the tracing key. By the sparse-hit property,
\begin{equation}
\Pr\bigl[\mathrm{Dec}(r,uk)\in S_i\bigr]\le |S_i|\,\alpha_{\max}(\lambda).
\end{equation}
Let $l:=|I\setminus I''|$. A union bound over all $r\in I\setminus I''$ and all $i\notin\mathcal{I}_{\mathrm{idx}}$ gives
\begin{equation}
\Pr[E_2^{\mathrm{wrong}}]\le l(m-f)\,\alpha_{\max}(\lambda),
\end{equation}
absorbing the polynomial bound on $|S_i|$ into $\alpha_{\max}(\lambda)$.

Combining the above estimates, we obtain
\begin{multline}
\Pr\!\left[
\mathrm{ExpTrace}_{\mathcal{A},\mathrm{LVT\text{-}QSS},\epsilon}(\lambda)=1
\right]
\le
\mathrm{Adv}^{\mathrm{coll}}_{\mathcal{B}_1,H}(\lambda)
\\
+
l(m-f)\,\alpha_{\max}(\lambda)
+
\mathrm{negl}(\lambda).
\end{multline}
Therefore,
\begin{multline}
\mathrm{Adv}^{\mathrm{trace}}_{\mathcal{A},\mathrm{LVT\text{-}QSS},\epsilon}(\lambda)
=
\epsilon
\Pr\!\left[
\mathrm{ExpTrace}_{\mathcal{A},\mathrm{LVT\text{-}QSS},\epsilon}(\lambda)=1
\right] \\
\le
\epsilon\Bigl(
\mathrm{Adv}^{\mathrm{coll}}_{\mathcal{B}_1,H}(\lambda)
+l(m-f)\,\alpha_{\max}(\lambda)
+\mathrm{negl}(\lambda)\Bigr).
\end{multline}

Since $\mathrm{Adv}^{\mathrm{coll}}_{\mathcal{B}_1,H}(\lambda)$ is negligible, $l(m-f)=\mathrm{poly}(\lambda)$, and $\alpha_{\max}(\lambda)$ is negligible, the right-hand side is negligible. Therefore, the LVT-QSS scheme is traceable.
\end{proof}

\subsection{Proof of Theorem~\ref{th:lvt-ni}}
\label{app:lvt-ni-proof}

\noindent\textbf{Theorem~\ref{th:lvt-ni} }(Formal).
{\itshape
Assume that $H:\mathbb{F}\to\mathbb{F}$ is quantum-secure collision-resistant and preimage-resistant. Then, for any QPT adversary $\mathcal{A}$, there exist QPT algorithms $\mathcal{B}_2^{\mathrm{pre}}$ and $\mathcal{B}_2^{\mathrm{coll}}$, and a negligible function $\mathrm{negl}(\cdot)$, such that, for security parameter $\lambda\in\mathbb{N}$,
\begin{equation}
\mathrm{Adv}^{\mathrm{ni}}_{\mathcal{A},\mathrm{LVT\text{-}QSS}}(\lambda)
\leq
\mathrm{Adv}^{\mathrm{pre}}_{\mathcal{B}_2^{\mathrm{pre}},H}(\lambda)
+
\mathrm{Adv}^{\mathrm{coll}}_{\mathcal{B}_2^{\mathrm{coll}},H}(\lambda)
+
\mathrm{negl}(\lambda).
\end{equation}
Consequently, the LVT-QSS scheme is non-imputable.
}

\begin{proof}
Let \(\mathcal{A}\) be an arbitrary QPT adversary in the non-imputability experiment \(\mathrm{ExpNI}_{\mathcal{A},\mathrm{LVT\text{-}QSS}}(\lambda)\). We show that any successful adversary yields either a preimage or a collision for \(H\).

\textbf{Game 0.}
This is the real experiment \(\mathrm{ExpNI}_{\mathcal{A},\mathrm{LVT\text{-}QSS}}(\lambda)\). Setup generates pairwise distinct tags \(Tags=\{tag_i\}_{i\in[m]}\), and the verification keys are \(vk_i:=H(tag_i)\).

\textbf{Game 1.}
This game is identical to Game~0 except that, after
\(\mathrm{GenSecretShares}\), the verification key of the challenged honest party \(i^*\) is replaced by \(vk_{i^*}:=y^*\), where \(y^*=H(w^*)\) and \(w^*\xleftarrow{\$}\mathbb{F}\) is sampled independently.

Because \(tag_{i^*}\) and \(w^*\) are sampled from the same distribution, the values \(H(tag_{i^*})\) and \(H(w^*)\) are identically distributed. Therefore, the adversary's views in Games~0 and~1 are identically distributed, and
\begin{equation}
\Pr[\mathrm{ExpNI}_{\mathcal{A},\mathrm{LVT\text{-}QSS}}(\lambda)=1]
=
\Pr[\mathrm{ExpNI}^{(1)}_{\mathcal{A},\mathrm{LVT\text{-}QSS}}(\lambda)=1].
\end{equation}

Suppose that \(\mathrm{ExpNI}^{(1)}_{\mathcal{A},\mathrm{LVT\text{-}QSS}}(\lambda)=1\).
Then the adversary outputs a set \(I^*\) containing a tag \(tag^*\) attributed to the challenged honest party \(i^*\), and \(\mathrm{SecretVerify}(vk,I^*)=1\). By the definition of \(\mathrm{SecretVerify}\),
\[
H(tag^*)\in \{vk_i\}_{i=0}^{m-1}.
\]
In Game~1,
\[
\{vk_i\}_{i=0}^{m-1}
=
\{y^*\}\cup \{H(tag_i)\}_{i\neq i^*}.
\]

There are two cases. If \(H(tag^*)=y^*\), then \(tag^*\) is a valid preimage of the random image \(y^*\). This event can therefore be used to construct a QPT algorithm \(\mathcal{B}_2^{\mathrm{pre}}\) that, on input \(y^*\), outputs \(tag^*\) as a valid preimage, thereby violating the preimage resistance of \(H\). Hence,
\begin{equation}
\Pr[H(tag^*)=y^*]
\le
\mathrm{Adv}^{\mathrm{pre}}_{\mathcal{B}_2^{\mathrm{pre}},H}(\lambda).
\end{equation}
If \(H(tag^*)=H(tag_i)\) for some \(i\neq i^*\), then, since the tags are distinct, \(tag^*\neq tag_i\), and the pair \((tag^*,tag_i)\) is a collision for \(H\). This event can therefore be used to construct a QPT algorithm \(\mathcal{B}_2^{\mathrm{coll}}\) that outputs this pair, thereby violating the collision resistance of \(H\). Thus,
\begin{equation}
\Pr[\exists i\neq i^*: H(tag^*)=H(tag_i)]
\le
\mathrm{Adv}^{\mathrm{coll}}_{\mathcal{B}_2^{\mathrm{coll}},H}(\lambda).
\end{equation}

Since a successful execution of Game~1 implies one of these two cases, we obtain
\begin{multline}
\mathrm{Adv}^{\mathrm{ni}}_{\mathcal{A},\mathrm{LVT\text{-}QSS}}(\lambda)
\\
\le
\mathrm{Adv}^{\mathrm{pre}}_{\mathcal{B}_2^{\mathrm{pre}},H}(\lambda)
+
\mathrm{Adv}^{\mathrm{coll}}_{\mathcal{B}_2^{\mathrm{coll}},H}(\lambda).
\end{multline}
Because \(H\) is quantum-secure preimage resistant and collision resistant, the right-hand side is negligible in \(\lambda\). Therefore, the LVT-QSS scheme is non-imputable.
\end{proof}

\section{Formal Definitions of MTA-QSS}
\label{appendix:def-mta-qss}

\subsection{Correctness}
\label{def:correct2-formal}

\begin{definition}[Correctness (formal)]
An MTA-QSS scheme is correct if there exists a negligible function
$\mathrm{negl}(\cdot)$ such that, for security parameter
$\lambda\in\mathbb{N}$, number of parties $m\in\mathbb{N}$,
threshold $k$ with $1\le k\le m$, subset $P\subseteq [m]$ with
$|P|\ge k$, and secret state $|\psi\rangle\in\mathcal{H}_\lambda$,
the following holds.

Let $\rho:=|\psi\rangle\langle\psi|$. Generate the shares and run the
reconstruction algorithm as
\begin{align}
(tk,Tags)
&\leftarrow \mathrm{Setup}(1^\lambda,m), \nonumber\\
(sh,vk,uk,ck)
&\leftarrow \mathrm{GenSecretShares}(|\psi\rangle,m,k,Tags,tk),
\nonumber\\
out
&\leftarrow \mathrm{SecretRecover}(\{|sh_i\rangle\}_{i\in P}).
\end{align}
where $tk := \{tk_i\}_{i=0}^{m-1}$, $sh := \{|sh_i\rangle\}_{i=0}^{m-1}$, and $vk := \{vk_i\}_{i=0}^{m-1}$.

Then
\begin{equation}
\Pr\left[
    out=(\rho',\cdot)
    \ \text{and}\
    \frac{1}{2}\|\rho'-\rho\|_1\leq \mathrm{negl}(\lambda)
\right]
\geq 1-\mathrm{negl}(\lambda),
\end{equation}
where the probability is taken over the internal randomness of
$\mathrm{Setup}$, $\mathrm{GenSecretShares}$, and $\mathrm{SecretRecover}$.

This requirement applies only when all shares used in reconstruction are honestly generated by $\mathrm{GenSecretShares}$. No correctness guarantee is required when some of the input shares are forged or adversarially corrupted.
\end{definition}

\subsection{Tracer secrecy}
\label{def:tsecrecy2-formal}

\begin{definition}[Tracer secrecy (formal)]
Let $\mathcal{H}=\{\mathcal{H}_\lambda\}_{\lambda\in\mathbb{N}}$ denote the secret-state spaces of an MTA-QSS scheme.  
For every $M\subseteq[m]$ with $|M|<k$ and any two secret states $|\psi_0\rangle,|\psi_1\rangle\in\mathcal{H}_\lambda$, define
\[
\mathrm{View}_M := \bigotimes_{i\in M} (R_i \otimes A_i),\quad
T := \bigotimes_{i=0}^{m-1} tk_i,\quad
Y := uk
\]
and let $\rho^{(b)}_{\mathrm{View}_M,T,Y}$ be the reduced state generated by sharing $|\psi_b\rangle$, for $b\in\{0,1\}$. Then there exists a negligible function $\mathrm{negl}(\cdot)$ such that, for security parameter $\lambda \in \mathbb{N}$,
\[
\frac{1}{2}
\left\|
    \rho^{(0)}_{\mathrm{View}_M,T,Y} - \rho^{(1)}_{\mathrm{View}_M,T,Y}
\right\|_1 \le \mathrm{negl}(\lambda).
\]
\end{definition}

\subsection{Traceability}
\label{def:trace2-formal}
We define the following tracing experiment.
Let $\mathcal{A}$ be a QPT adversary, let $|\psi\rangle\in\mathcal{H}_\lambda$ be an arbitrary secret state, and let $Tags=\{tag_i\}_{i=0}^{m-1}$, where $tag_i$ uniquely identifies party $i$.

\experiment{$\mathrm{ExpTrace}_{\mathcal{A},\mathrm{MTA\text{-}QSS}}(\lambda)$}
\begin{enumerate}
    \item $(m,k,V_{\mathrm{idx}},st) \leftarrow \mathcal{A}(1^\lambda)$, where $V_{\mathrm{idx}}=\{i_0,\dots,i_{t-1}\}\subseteq [m]$ denotes the set of participants corrupted by $\mathcal{A}$.

    \item $(\{tk_i\}_{i=0}^{m-1},Tags)\leftarrow \mathrm{Setup}(1^\lambda,m).$

    \item $(\{|sh_i\rangle\}_{i=0}^{m-1},\{vk_i\}_{i=0}^{m-1},uk,ck)\leftarrow
    \mathrm{GenSecretShares}(|\psi\rangle,m,k,Tags,\{tk_i\}_{i=0}^{m-1})$.

    \item Set $tk:=(tk_0,\dots,tk_{m-1})$, and $vk:=(vk_0,\dots,vk_{m-1}).$

    \item Sample a subset $U_{\mathrm{idx}}\subseteq [m]\setminus V_{\mathrm{idx}}$ uniformly at random subject to $|U_{\mathrm{idx}}|=k-t$. Define $P:=U_{\mathrm{idx}}\cup V_{\mathrm{idx}}$, and let
    $U:=\{tag_i : i\in U_{\mathrm{idx}}\}$ and
    $V:=\{tag_i : i\in V_{\mathrm{idx}}\}$.

    \item $\{\,|sh_i^{\mathrm{mal}}\rangle\,\}_{i\in V_{\mathrm{idx}}}
    \leftarrow
    \mathcal{A}\bigl(st,\{\,|sh_i\rangle\,\}_{i\in V_{\mathrm{idx}}}\bigr)$.
    For each $i\in P$, define
    \[
    |\widetilde{sh}_i\rangle :=
    \begin{cases}
    |sh_i^{\mathrm{mal}}\rangle, & i\in V_{\mathrm{idx}},\\
    |sh_i\rangle, & i\in U_{\mathrm{idx}}.
    \end{cases}
    \]

    \item $(|\psi'\rangle,I)\leftarrow \mathrm{SecretRecover}(\{\,|\widetilde{sh}_i\rangle\,\}_{i\in P})$.

    \item $(U',V')\leftarrow \mathrm{SecretTrace}(tk,uk,I)$.

    \item Output $0$ if $U'=U$ and $\mathrm{SecretVerify}(vk,uk,ck,U',V',I)=1$; otherwise output $1$.
\end{enumerate}
\noindent\rule{\linewidth}{0.5pt}

\begin{definition}[Traceability (formal)]
For any QPT adversary $\mathcal{A}$, define its tracing advantage as
\begin{equation}
    \mathrm{Adv}^{\mathrm{trace}}_{\mathcal{A},\mathrm{MTA\text{-}QSS}}(\lambda)
    := \Pr\!\bigl[\mathrm{ExpTrace}_{\mathcal{A},\mathrm{MTA\text{-}QSS}}(\lambda)=1
    \bigr].
\end{equation}
The MTA-QSS scheme is traceable if the tracing advantage of every QPT adversary is negligible in $\lambda$.
\end{definition}

\subsection{Non-imputability}
\label{def:ni2-formal}

We define the following non-imputability experiment.
Let $\mathcal{A}$ be a QPT adversary, let $\lambda$ be the security 
parameter, and let $|\psi\rangle\in\mathcal{H}_\lambda$ be an arbitrary secret quantum state.

\experiment{$\mathrm{ExpNI}_{\mathcal{A},\mathrm{MTA\text{-}QSS}}(\lambda)$}
\begin{enumerate}
    \item $(m, k, i^*, st) \leftarrow \mathcal{A}(1^\lambda)$, where $i^* \in [m]$ is the index of the honest party that $\mathcal{A}$ attempts to frame.

    \item $(\{tk_i\}_{i=0}^{m-1},Tags) \leftarrow \mathrm{Setup}(1^\lambda, m)$, where $Tags=(tag_0,\dots,tag_{m-1})$. Define $tag^* := tag_{i^*}$.

    \item $(\{|sh_i\rangle\}_{i=0}^{m-1}, \{vk_i\}_{i=0}^{m-1}, uk, ck) \leftarrow \mathrm{GenSecretShares}(|\psi\rangle, m, k, Tags, \{tk_i\}_{i=0}^{m-1})$.

    \item Set $tk := (tk_0,\dots,tk_{m-1})$ and $vk := (vk_0,\dots,vk_{m-1})$.

    \item Sample a subset $P_{\mathrm{rest}} \subseteq [m]\setminus\{i^*\}$ uniformly at random subject to $|P_{\mathrm{rest}}|=k-1$, and define
    \[
    P := P_{\mathrm{rest}} \cup \{i^*\}.
    \]
    Then compute
    \[
    (\cdot, I) \leftarrow \mathrm{SecretRecover}(\{|sh_i\rangle\}_{i\in P}).
    \]

    \item $(U^*, V^*) \leftarrow \mathcal{A}(st, \{|sh_i\rangle\}_{i\in[m]\setminus\{i^*\}}, tk, vk, I)$.

    \item Output $1$ if $tag^*\in V^*$ and
    $\mathrm{SecretVerify}(vk,uk,ck,U^*,V^*,I)=1$; otherwise output $0$.
\end{enumerate}
\noindent\rule{\linewidth}{0.5pt}

\begin{definition}[Non-imputability (formal)]
For any QPT adversary $\mathcal{A}$, define its non-imputability advantage as
\begin{equation}
    \mathrm{Adv}^{\mathrm{ni}}_{\mathcal{A},\mathrm{MTA\text{-}QSS}}(\lambda)
    := \Pr\!\bigl[
        \mathrm{ExpNI}_{\mathcal{A},\mathrm{MTA\text{-}QSS}}(\lambda)=1
    \bigr].
\end{equation}
The MTA-QSS scheme is non-imputable if the non-imputability advantage of every QPT adversary is negligible in $\lambda$.
\end{definition}

\section{Proofs of MTA-QSS}
\label{appendix:proof2}

\subsection{Proof of Theorem~\ref{th:correct2}}
\label{app:mta-correctness-proof}
\begin{proof}
It is sufficient to consider honestly generated shares. For such shares, the MTA-QSS scheme uses exactly the same coherent $(k,m)$-threshold encoding on the quantum share registers as the LVT-QSS scheme, and the reconstruction procedure $\mathrm{SecretRecover}$ applies the same decoding map on these registers.

The two schemes differ only in the auxiliary classical registers. In MTA-QSS, the auxiliary register associated with the $j$th share stores $(r_j\|tk_j)$, whereas in LVT-QSS it stores only $r_j$. These auxiliary registers do not affect the decoding operation on the quantum share registers and therefore do not alter the recovered secret state.

It follows that, for honestly generated shares, the reduced state on the output secret register in MTA-QSS is identical to that in LVT-QSS. Since correctness for LVT-QSS was established in Theorem~\ref{th:correct1}, the MTA-QSS scheme is correct.
\end{proof}

\subsection{Proof of Theorem~\ref{th:secrecy2}}
\label{app:mta-secrecy-proof}
\begin{proof}
Let $M\subset[m]$ be any unauthorized set with $|M|<k$. For each $b\in\{0,1\}$, let $\rho_{\mathrm{View}_M}^{(b)}$ denote the reduced state on the registers $R_M\otimes A_M$ when the shared secret is the input state corresponding to $b$.

On the quantum share registers, the MTA-QSS scheme uses exactly the same encoding as the LVT-QSS scheme. Hence, by the same argument as in the proof of Theorem~\ref{th:secrecy}, the reduced state on $R_M$ is independent of the secret for every unauthorized set $M$.

It remains to consider the auxiliary registers $A_M$. In MTA-QSS, these registers contain only classical values derived from protocol randomness, suchas tags and tracing-related data, and are independent of the input secret state. Consequently, the joint reduced state on $R_M\otimes A_M$ is also independent of the secret. In particular,
\begin{equation}
    \rho_{\mathrm{View}_M}^{(0)}
    =
    \rho_{\mathrm{View}_M}^{(1)} .
\end{equation}
Therefore,
\begin{equation}
    \frac{1}{2}
    \left\|
        \rho_{\mathrm{View}_M}^{(0)}
        -
        \rho_{\mathrm{View}_M}^{(1)}
    \right\|_{\mathrm{tr}}
    =
    0
    \leq \mathrm{negl}(\lambda).
\end{equation}
Hence, the $(k,m)$-threshold MTA-QSS scheme satisfies secrecy.
\end{proof}

\subsection{Proof of Theorem~\ref{th:tracer-secrecy2}}
\label{app:mta-tracer-secrecy-proof}

\begin{proof}
Fix any unauthorized set $M\subseteq[m]$ with $|M|<k$ and any two secret states
$|\psi_0\rangle,|\psi_1\rangle\in\mathcal H_\lambda$. Let
\[
    (\{tk_i\}_{i=0}^{m-1},\mathrm{Tags})
    \leftarrow \mathrm{Setup}(1^\lambda,m),
\]
set $tk:=(tk_0,\ldots,tk_{m-1})$, and let $T$ be the classical register storing
$tk$.

For each $b\in\{0,1\}$, run $(\{|sh_i^{(b)}\rangle\}_{i=0}^{m-1},\cdot,uk^{(b)},\cdot)
    \leftarrow
    \mathrm{GenSecretShares}(|\psi_b\rangle,m,k,\mathrm{Tags},tk),$
and let $Y$ be the classical register storing $uk^{(b)}$. Write $\rho^{(b)}_{\mathrm{View}_M,T,Y}$ for the reduced state on
\[
    \mathrm{View}_M\otimes T\otimes Y
    =
    R_M\otimes A_M\otimes T\otimes Y .
\]

By secrecy, the reduced state on the quantum share registers held by $M$ is independent of the input secret. The remaining registers in the above view are classical: the auxiliary registers contain values of the form $r_i\|tk_i$, the register $T$ contains the tracing keys, and $Y$ contains $uk^{(b)}$. These values are generated from protocol randomness and tracing data, independently of the input secret state. Hence, for every fixed value of $tk$, the conditional state on $\mathrm{View}_M\otimes Y$ is independent of $b$; denote it by $\rho_{\mathrm{View}_M,Y\mid tk}$.

Since $T$ is classical, the joint state has the cq decomposition
\begin{equation}
    \rho^{(b)}_{\mathrm{View}_M,T,Y}
    =
    \sum_{tk} p(tk)\,
    |tk\rangle\!\langle tk|_T
    \otimes
    \rho_{\mathrm{View}_M,Y\mid tk},
\end{equation}
which is independent of $b$. Therefore,
\begin{equation}
    \rho^{(0)}_{\mathrm{View}_M,T,Y}
    =
    \rho^{(1)}_{\mathrm{View}_M,T,Y},
\end{equation}
and hence
\begin{equation}
    \frac12
    \left\|
        \rho^{(0)}_{\mathrm{View}_M,T,Y}
        -
        \rho^{(1)}_{\mathrm{View}_M,T,Y}
    \right\|_1
    =
    0
    \leq \mathrm{negl}(\lambda).
\end{equation}
Thus, the MTA-QSS scheme satisfies tracer secrecy.
\end{proof}

\subsection{Proof of Theorem~\ref{th:trace2}}
\label{app:mta-tracer2-proof}

\noindent\textbf{Theorem~\ref{th:trace2} }(Formal).
{\itshape
The MTA-QSS scheme is traceable against any QPT adversary. For every QPT adversary $\mathcal{A}$, there exists a negligible function $\mathrm{negl}(\cdot)$ such that, for security parameter $\lambda \in \mathbb{N}$,
\[
    \mathrm{Adv}^{\mathrm{trace}}_{\mathcal{A},\mathrm{MTA\text{-}QSS}}(\lambda)
    \leq \mathrm{negl}(\lambda).
\]
}

\begin{proof}
Let $U_{\mathrm{idx}}$ be the index set of honest parties involved in reconstruction, and let $U\subseteq Tags$ denote the corresponding honest tag set. Let $(U',V')\leftarrow \mathrm{SecretTrace}(tk,uk,I)$ and $ok\leftarrow \mathrm{SecretVerify}(vk,uk,ck,U',V',I).$
By definition, the traceability experiment fails if either $ok=0$, or $ok=1$ but $U'\neq U$. Accordingly, define
\begin{equation}
W_1:=\{ok=0\},
\qquad
W_2:=\{ok=1 \wedge U'\neq U\}.
\end{equation}
Then
\begin{equation}
\Pr[\mathrm{ExpTrace}=1]=\Pr[W_1]+\Pr[W_2].
\end{equation}

For honestly generated auxiliary information, $\mathrm{SecretVerify}$ rejects the output of $\mathrm{SecretTrace}$ only with negligible probability. Hence
\begin{equation}
\Pr[W_1]\le \mathrm{negl}(\lambda).
\end{equation}

It remains to bound $\Pr[W_2]$. Write
\begin{equation}
W_2^{\mathrm{forge}}:=
\{\,ok=1 \wedge \exists\,tag^*\in U' \setminus Tags\,\},
\end{equation}
and
\begin{equation}
W_2^{\mathrm{wrong}}:=
\{\,ok=1 \wedge U'\subseteq Tags \wedge U'\neq U\,\}.
\end{equation}
Then
\begin{equation}
\Pr[W_2]=\Pr[W_2^{\mathrm{forge}}]+\Pr[W_2^{\mathrm{wrong}}].
\end{equation}

If $W_2^{\mathrm{forge}}$ occurs, then there exists $tag^*\in U'\setminus Tags$ that nevertheless passes verification. By the definition of $\mathrm{SecretVerify}$, this implies
\begin{equation}
H(tag^*)\in \{vk_i\}_{i=0}^{m-1},
\end{equation}
and hence $H(tag^*)=H(tag_j)$ for some $tag_j\in Tags$. Since $tag^*\neq tag_j$, this yields a collision in $H$. By collision resistance,
\begin{equation}
\Pr[W_2^{\mathrm{forge}}]\le \mathrm{negl}(\lambda).
\end{equation}

Next consider $W_2^{\mathrm{wrong}}$. In this case, $\mathrm{SecretVerify}$ accepts a subset of honestly generated tags that differs from the true honest tag set $U$. However, for honestly generated reconstruction data, each element of $I$ has the form $r_i\|tk_i$, and $\mathrm{SecretTrace}$ recovers the corresponding tag through the successive decryptions
\begin{equation}
r_i \xrightarrow{\mathrm{Dec}(\cdot,uk)} ct_i
\xrightarrow{\mathrm{Dec}(\cdot,ck)} tag_i.
\end{equation}
Thus, once $uk$ and $ck$ are fixed, the recovered tags are uniquely determined by $I$. Since $\mathrm{SecretVerify}$ accepts only if these recovered tags are consistent with the verification keys and the auxiliary data extracted from $I$, the accepted valid-tag set must coincide with $U$, except with negligible probability. Therefore,
\begin{equation}
\Pr[W_2^{\mathrm{wrong}}]\le \mathrm{negl}(\lambda).
\end{equation}

Combining the above bounds gives
\begin{equation}
\Pr[\mathrm{ExpTrace}=1]\le \mathrm{negl}(\lambda),
\end{equation}
and hence
\begin{equation}
\mathrm{Adv}^{\mathrm{trace}}_{\mathcal{A},\mathrm{MTA\text{-}QSS}}(\lambda)
\le \mathrm{negl}(\lambda).
\end{equation}
\end{proof}

\subsection{Proof of Theorem~\ref{th:ni2}}
\label{app:mta-ni2-proof}

\noindent\textbf{Theorem~\ref{th:ni2} }(Formal).
{\itshape
The MTA-QSS scheme is non-imputable against any QPT adversary. For every QPT adversary $\mathcal{A}$, there exists a negligible function $\mathrm{negl}(\cdot)$ such that, for security parameter $\lambda \in \mathbb{N}$,
\[
    \mathrm{Adv}^{\mathrm{ni}}_{\mathcal{A},\mathrm{MTA\text{-}QSS}}(\lambda)
    \leq \mathrm{negl}(\lambda).
\]
}

\begin{proof}
Consider the experiment $\mathrm{ExpNI}_{\mathcal{A},\mathrm{MTA\text{-}QSS}}(\lambda)$. Let $i^*\in [m]$ be the target honest participant selected by the adversary, and let $tag^*:=tag_{i^*}$ be the corresponding tag. Since participant $i^*$ remains honest throughout the experiment, $tag^*$ is an honestly generated tag and therefore satisfies
\begin{equation}
    H(tag^*)\in \{vk_i\}_{i=0}^{m-1}.
\end{equation}

Suppose the adversary outputs a pair $(U^*,V^*)$. The framing event is defined by
\begin{equation}
\begin{aligned}
    \mathrm{Frame}
    := \bigl\{\, &tag^* \in V^* \\
    &\wedge\ \mathrm{SecretVerify}(vk,uk,ck,U^*,V^*,I)=1 \,\bigr\}.
\end{aligned}
\end{equation}

However, by the definition of $\mathrm{SecretVerify}$, acceptance requires that
\begin{equation}
    \forall\, tag_z\in V^*,
    \qquad
    H(tag_z)\notin \{vk_i\}_{i=0}^{m-1}.
\end{equation}
This condition is incompatible with $tag^*\in V^*$, since $H(tag^*)\in \{vk_i\}_{i=0}^{m-1}$. Hence the event $\mathrm{Frame}$ cannot occur, and therefore
\begin{equation}
    \Pr[\mathrm{Frame}]=0.
\end{equation}

By the definition of non-imputability advantage,
\begin{equation}
    \mathrm{Adv}^{\mathrm{ni}}_{\mathcal{A},\mathrm{MTA\text{-}QSS}}(\lambda)
    =\Pr[\mathrm{Frame}]=0
    \le\mathrm{negl}(\lambda).
\end{equation}
Hence, the MTA-QSS scheme is non-imputable.
\end{proof}

\nocite{*}

\bibliography{apssamp}

\end{document}